\def\ps@headings{%
\def\@oddhead{\mbox{}\scriptsize\rightmark \hfil \thepage}%
\def\@evenhead{\scriptsize\thepage \hfil \leftmark\mbox{}}%
\def\@oddfoot{}%
\def\@evenfoot{}}
\newcommand{\mr}{\mathrm}
\newcommand{\mb}{\mathbf}
\newcommand{\mc}{\mathcal}
\newcommand{\h}{\mb{h}}
\newcommand{\showrev}{0} %1： show 
\newcommand{\rv}[1]{%
	\ifthenelse{\equal{\showrev}{1}}
	{\hl{#1}}%
	{#1}%      
}
\newcommand{\rem}[1]{%
	\ifthenelse{\equal{\showrev}{1}}%
	{\textcolor{red}{\sout{#1}}}
	{}%                             
}
\newcommand{\mrv}[1]{%
	\ifthenelse{\equal{\showrev}{1}}%
	{\colorbox{yellow}{$\displaystyle #1$}}
	{#1}%
}
\newcommand{\tridiagbox}[3]{%
 \begin{tikzpicture}[baseline=(current bounding box.center)]
  \draw (0,0.7) -- (2,0);        % 主对角线
  \draw (1.1,1) -- (2,0);        % 次对角线
  \node at (0.4,0.2) {#1};     
  \node at (0.8,0.7) {#2};    
  \node at (1.7,0.8) {#3};    
 \end{tikzpicture}%
}
\newcommand{\br}{\mb{r}}
\newcommand{\SetR}{\mathbb{R}}
\newcommand{\SetZ}{\mathbb{Z}}
\newcommand{\Rmnum}[1]{\expandafter\@slowromancap\romannumeral #1@}
\newtheorem{theorem}{Theorem}[subsection]
\newtheorem{lemma}{Lemma}
\newtheorem{claim}{Claim}
\title{ Entropy Functions on Two-Dimensional
  Faces of Polymatroidal Region of Degree Four: Part \Rmnum{1}:
  Problem Formulation and \rv{More} }
\author{Shaocheng~Liu, and Qi~Chen,~\IEEEmembership{Member,~IEEE,} \thanks{
This work was supported in part by NSFC under Grant 62471369 and Grant
62461160306. An earlier version of this paper was presented in part at the IEEE ISIT2023 \cite{LC2023isit}. (Corresponding author: Qi Chen.)
}\\Email: lsc@stu.xidian.edu.cn,
  qichen@xidian.edu.cn
\\School of Telecommunications Engineering, Xidian University}
\begin{document}

\maketitle

\begin{abstract}
Characterization of entropy functions is of fundamental importance in
information theory. By imposing constraints on their
Shannon outer bound, i.e., the polymatroidal region, one obtains the faces
of the region and entropy functions on them with special
structures. In this series of two papers, we characterize entropy functions on the $2$-dimensional
faces of the polymatroidal region of degree $4$. 
In Part I, we formulate the problem, enumerate all $59$ types 
of $2$-dimensional faces of the region by an
algorithm, and fully characterize entropy
functions on $49$ types of
them. 
% Among them, those non-trivial cases are mainly characterized
% by the graph-coloring technique. 
The entropy functions on the remaining $10$ types of
faces will be characterized in Part II, among which $8$ types are
fully characterized, and $2$ types are partially characterized.

Index terms - entropy function, entropy region, polymatroid, poymatroidal region
\end{abstract}

\section{Introduction}
\label{int}

Let $N_n=\{1,2,\ldots,n\}$ and $\mb{X}\triangleq(X_i,i\in N_n)$ be a random vector indexed by
$N_n$. The set function $\h: 2^{N_n}\to \SetR$ defined by
\begin{equation*}
  \h(A)=H(X_A), \quad A\subseteq N_n
\end{equation*}
is called the \emph{entropy function} of $\mb{X}$, while $\mb{X}$ is called a \emph{characterizing random vector} of $\h$.
The Euclidean
space $\mc{H}_n\triangleq\SetR^{2^{N_n}}$ where entropy functions live is called the
\emph{entropy space} of degree $n$.
The set of all entropy functions, denoted by $\Gamma^*_n$, is called
the \emph{entropy region} of degree $n$. 
The characterization of entropy functions, i.e., determining whether an
$\h\in \mc{H}_n$ is in $\Gamma^*_n$, is of
fundamental importance in information theory.

Entropy functions are (the rank functions of) polymatroids, i.e., they satisfy polymatroidal
axioms, that is, for all $A,B\subseteq N_n$,
\begin{align}
  \h(A) &\ge 0,   \label{a1}\tag{I.1}\\
  \h(A)&\le \h(B),\quad \text{ if }  A\subseteq B,\tag{I.2}\\
  \h(A)+\h(B)&\ge \h(A\cap B)+\h(A\cup B).   \label{a3}\tag{I.3}
\end{align}
The region in $\mc{H}_n$ bounded by such inequalities, denoted by $\Gamma_n$, is called the
\emph{polymatroidal region} of degree $n$. Thus, $\Gamma_n$ is an
outer bound on $\Gamma^*_n$. For more about entropy functions, we
referred the readers to \cite[Chapter 13-15]{yeung2008information}.

Traditionally, entropy functions are characterized by information
inequalities. Those inequalities derived by polymatroidal axioms are called Shannon-type,
as they correspond to the non-negativity of Shannon information
measures.  Since 1998, a series of non-Shannon-type information
inequalities, among which Zhang-Yeung inequality is the first one \cite{ZY98},
were discovered \cite{zhang2002new, makarychev2002new, dougherty2011nonshannon}. Thus $\overline{\Gamma^*_n}$, the closure of $\Gamma^*_n$, is strictly included in $\Gamma_n$ when $n\ge 4$.    
Each information inequality determines an outer bound on $\Gamma^*_n$,
as those set functions in $\mc{H}_n$ dissatisfy it must be located outside
$\Gamma^*_n$. In this series of two papers, we develop a system of
entropy function characterization from the perspective of faces of
$\Gamma_n$, which covers the traditional inequality characterization. 

By definition, $\Gamma_n$ is a polyhedral cone in $\mc{H}_n$.
Thus, each Shannon-type information inequality determines a face $F$ of
$\Gamma_n$. It is natural to
characterize entropy functions on the specific $F$ of
$\Gamma_n$ (See Subsection \ref{fpc} for details on the faces of a
polyhedral cone). Let $F^*\triangleq F\cap \Gamma^*_n$ be the set of all entropy
functions in $F$. In what follows, when we refer to determining the entropy functions on $F$
, or the region $F^*$, we will simply call this characterizing $F$.
A non-Shannon-type information inequality 
can be considered as an outer bound on $F^*$ when
$F=\Gamma_n$ itself, the improper face of $\Gamma_n$. 
A constrained non-Shannon-type information inequality is an outer bound on $F^*$
when $F$ is the face determined by the constraints that are equalities
obtained by setting some Shannon-type inequalities as equalities. When $F$
is an extreme ray, i.e., a $1$-dimensional face of $\Gamma_n$, if it
contains a matroid, entropy functions on $F$ are called matroidal
entropy functions, and they can be fully characterized by the
probabilistically characteristic set of the matroid \cite{CCB21,CCB22,CCB24}. Mat{\'u}{\v{s}}
fully characterized the first non-trivial 2-dimensional face of $\Gamma_n$
in 2006 \cite{matus2005piecewise}. It is a $2$-dimensional face of $\Gamma_3$. In
2012, Chen and Yeung  characterized another $2$-dimensional face of $\Gamma_3$ \cite{chen2012characterizing}. They are the
only two types of non-trivial $2$-dimensional faces of $\Gamma_3$ that need to be
characterized. To the best of our knowledge, so far, there is no fully characterized
non-trivial $3$-dimensional faces. However,
partial characterizations of $3$-dimensional faces of $\Gamma_3$ can
be found in \cite{HCG12,11003168}.

Many information-theoretic problems can be considered  as applications
of entropy function characterizations on faces of $\Gamma_n$. In a
series of three
papers \cite{matuvs1995a, matuvs1995b, matu1999conditional},
Mat{\'u}{\v{s}} and Studne{\'y} solved the probabilistic conditional independence problem
for four random variables. Note each class of conditional independence
constraints, which is called a semimatroid in their papers, 
determines a face $F$ of $\Gamma_n$.  
The solution to this problem, i.e., the
probabilistic-representability of a semimatroid 
determines whether the relative interior of the corresponding face $F$ intersets with
$\Gamma^*_n$.
This problem thus can be considered as partial characterizations of
the faces of $\Gamma_n$. 
% for any two such classes
% $F_1$ and $F_2$, whether $F_1$ implies $F_2$, if and only if $F_1\cap
% \Gamma^*_n\subseteq F_2\cap \Gamma^*_n$.
In \cite{yan2012implicit},
Yan, Yeung, and Zhang proved a formula involving $\Gamma_n^*$ for the capacity of multi-source
multi-sink network coding. Those constraints in the formula induced
by network topology and source independency form a face of $\Gamma_n$,
which shows that this holy-grail network coding problem corresponds to
the entropy function characterizations on this face. For the secret-sharing problem, see \cite{beimel2011secret} for example, the perfect
secrecy and decoding correctness conditions of an access structure
determine a face $F$ on
$\Gamma_n$, and the information ratio of the secret-sharing problem
can be considered as an optimization problem whose feasible region
is $F\cap \Gamma^*_n$. Other problems, such as distributed data storage
\cite{tian2014characterizing}, coded caching \cite{tian2018symmetry}, 
Markov random fields \cite{yeung2018information}, and relational database \cite{Dan2023} are also related to the
entropy regions on the faces of $\Gamma_n$. Recently, there are a series of
works on how to compute the faces of $\Gamma_n$ more efficiently by machine \cite{PSITIP,YL21,GYG23,GYG25}.

Though the information theory problems above usually involve more
  than four random variables, and the corresponding faces are of
  dimension higher than $2$, following the characterizations of
  extreme rays. i.e., $1$-dimensional faces containing matroid
\cite{CCB21,CCB22,CCB24}, and
$2$-dimensional faces of $\Gamma_3$
\cite{matus2005piecewise,chen2012characterizing},
in this series of two papers,  we characterize entropy functions on the $2$-dimensional
faces of $\Gamma_4$, which may serve as stepping stones to the
  general cases of this problem. 

In Part I, \rv{we first characterize entropy functions on extreme rays
  containing integer polymatroids $\hat{U}_{2,5}$ and
  $\hat{U}_{3,5}$ (see Subsection} \mrv{\ref{fpc}} \rv{for the definition),
which cannot be find in any existing literatures.} We then enumerate all $59$ types of $2$-dimensional faces
of $\Gamma_4$ by an algorithm and completely characterize $49$ types of
them. Among them, $13$ types are all-entropic faces, while $7$
are non-entropic ones\rv{\footnote{\rv{By} \mrv{\cite{matuvs1995a,
        matuvs1995b, matu1999conditional}}, \rv{all $2$-dimensional faces
    containing  V\'amos ray $V_8$ are non-entropic, and they are also
  all of the $2$-dimensional faces crosscut by Zhang-Yeung, or any
  other non-Shannon-type inequalities.}}}. The characterization of $17$ types can be obtained by the $2$ types of non-trivial faces of $\Gamma_3$ characterized in
\cite{matus2005piecewise} and \cite{chen2012characterizing}. 
Of the $22$ types of non-trivial faces, $12$ will be
characterized in Part I. 
The primary technique we adopt to characterize
these non-trivial faces is 
the graph-coloring in \cite{matus2005piecewise}. For a $3$-dimensional random vector $(X_i,i\in N_3)$, if $X_1$ and $X_2$ are
independent, and $X_3$ is a function of $(X_1,X_2)$, $X_1$ and $X_2$
can be considered distributed on the two sets of vertices $\mc{X}_1$
and $\mc{X}_2$ of a
bipartite graph with edges those $(x_1,x_2)$ such that $p(x_1,x_2)>0$,
and these edges are colored by $x_3\in \mc{X}_3$. In our work, when
the number of the random variables becomes four, we color bipartite graphs
by two color systems or color a tripartite graph by one color system.

The remaining $10$ cases of faces will be characterized in Part II \cite{LCC2024part2} of
the two-paper series. Among them,  $8$ faces are fully  and $2$ are
partially characterized. This part has a more combinatorial
flavour. All integer polymatroid with rank exceeding $1$ contained by an extreme ray will be
associated a combinatorial structure. Then the entropy functions on
$2$-dimensional faces containing $2$ such extreme rays breed new
combinatorial-design structures, generalizing corresponding combinatorial
structures in both extreme rays.
These structures can also be
considered as generalizations of $\rm VOA$s introduced in \cite{CCB21,CCB22}.

% The
% characterizations breed new combinatorial-design structures, generalizing
% $\rm VOA$s introduced in \cite{CCB21}\cite{CCB22}.

The rest of this paper, i.e., Part I, is organized as follows. Section \ref{pre}
gives the preliminaries on matroids, polyhedral cones, and the extreme
rays of $\Gamma_4$. 
% Section \ref{ray} contains the characterization of
% the $11$ types of extreme rays and the result is listed in Table
% \ref{table12}.  
All the $59$
types of faces are generated by Algorithm I and listed in Table \ref{table:1}  in Section \ref{enu}. In
Section \ref{af}, we fully characterize $49$ types among all the
$2$-dimensional faces of $\Gamma_4$, and the results are listed in
Table \ref{long}.

\section{Preliminaries}
\label{pre}
\subsection{Matroids}
For a polymatroid $\h\in \Gamma_n$, if $\h(A)\in \SetZ$ for all
$A\subseteq N_n$, $\h$ is called \emph{integer}. 
An ordered pair $M=(N_n,\br)$ is called a \emph{matroid} with \emph{rank function}
$\br$ if $\br$ is an integer polymatroid with $\br(A)\le |A|$ for all
$A\subseteq N_n$. Like polymatroids, in this paper, we do not distinguish a matroid and
its rank function unless otherwise specified. 

A \emph{uniform matroid} $U_{k,n}$ is a matroid with rank function
$\br(A)=\min\{k,|A|\}$ for all $A\subseteq N_n$.

For a matroid $M$ and
$e\in N_n$, if $\br(e)=0$, $e$ is called a \emph{loop} of $M$. For $e, e'\in
N_n$, if $\br(\{e,e'\})=1$, then $e$ and $e'$ are called
\emph{parallel}.

For more
about matroid theory, readers
can refer to \cite{oxley2006matroid}.

\subsection{Faces of a polyhedral cone}
\label{fpc}
Let $C\subseteq \SetR^d$ be a full-dimensional polyhedral cone. For a
hyperplane $P$ containing the origin $O$ in $\SetR^d$, if $C\subseteq P^+$, where
$P^+$ is one of the two halfspaces corresponding to $P$,
$F\triangleq C\cap P$ is called a (proper) face of $C$, while $C$
itself is its improper face. When $\dim F=d-1$, $F$ is called a
\emph{facet} of $C$, and when $\dim F=1$, $F$ is an extreme ray of
$C$. Either the set of all facets or the set of all extreme rays of $C$
uniquely determines the cone, and they are called H-representation and
V-representation of the cone, respectively. Each face $F$ of the
cone  can be written as the intersection of the facets of the cone
that contains $F$, 
or the convex hull of the extreme rays contained in $F$. We also call
them the H-representation and V-representation of $F$,
respectively. More details on this topic are referred to \cite{ziegler2012lectures}.

As discussed in Section \ref{int}, $\Gamma_n$ is a polyhedral cone
in $\mc{H}_n$ determined by polymatroidal axioms in \eqref{a1}-\eqref{a3}.
They are equivalent to the following 
elemental inequalities
\begin{align}
  \label{eq:1}
 &\h(N_n)\ge \h({N_n\setminus \{i\}})\quad\quad  i\in N_n;\\
  &\h(K)+\h(K\cup ij)\le \h(K\cup i)+\h(K\cup j) ,  \label{eq:1a}\\
                         & \quad\quad i,j\in N_n, K\subseteq N_n\setminus\{i,j\}\nonumber
\end{align}
each of which determines a facet of $\Gamma_n$ \cite[Chapter 14]{yeung2008information}.  
When $n=4$, it can be checked that there are $28$ elemental
inequalities, or $28$ facets of $\Gamma_4$. % For each facet $F$, it is
% the intersection of $\Gamma_4$ with the following hyperplane
% \begin{equation}
%   \label{eq:2}
%   \{\h\in \mc{H}_4: \sum_{A\subseteq N_4}b_A\h(A)=0,\}
% \end{equation}
% where $\sum_{A\subseteq N_4}\h(A)=0$ is the equality by setting one of
% the inequalities in \eqref{eq:1}-\eqref{eq:1a}. Omitting
% $b_{\emptyset}$, we use the $15$-dimensional vector
% $(b_A:\emptyset\neq A\subseteq N_4)$ to represent the facet $F$.

It can be seen in \cite{matuvs1994extreme} that 
there are $41$ extreme
rays of $\Gamma_4$. Note that each extreme ray $E_{\br}$ of $\Gamma_4$ can be written in the form
\begin{equation}
\label{eray}
  E_{\br}=\{a\br: a\ge 0\}
\end{equation}
where $\br$ is the minimal integer polymatroid in the ray\rv{, that is,
   an integer polymatroid such that $\br/t$ is not integer for any
   integer $t>1$.
}.

Therefore, in
this paper, we refer to the minimal integer polymatroid $\br$ as the
extreme ray $E_{\br}$ containing it unless otherwise specified. Note
that some of these integer polymatroids are matroids.
The
$41$ extreme rays can be classified into the following $11$
types.%  Among each type, one can be obtained from the other by
% permuting the indices in $N_4$. 

\begin{itemize}
\item $U^{i}_{1,1}$, $i\in N_4$;
\item $U^\alpha_{1,2}$, $\alpha\subseteq N_4$, $|\alpha|=2$;
\item $U^\alpha_{1,3}$, $\alpha\subseteq N_4$, $|\alpha|=3$;
\item $U^\alpha_{2,3}$, $\alpha\subseteq N_4$, $|\alpha|=3$;
\end{itemize}
for $U^\alpha_{k,m}$ with $\alpha\subseteq N_4$ and $|\alpha|=m$, we mean a
matroid on $N_4$ whose restriction on $\alpha$ is a $U_{k,m}$ and
$e\in N_4\setminus \alpha$ being loops; % Thus, for each types above,
% there are $\binom{4}{|\alpha|}$ matroids.
\begin{itemize}
\item $U_{1,4}$;
\item $\mc{W}^\alpha_2$, $\alpha\subseteq N_4$, $|\alpha|=2$;
\end{itemize}
for $\mc{W}^\alpha_2$ with
$\alpha\subseteq N_4$, $|\alpha|=2$, it is called a wheel matroid with
order $2$,\footnote{We adopt the notation and terminology in
  \cite[Section 8.4]{oxley2006matroid}} and it is a matroid with two parallel elements in $\alpha$, and
each element in $\alpha$ and the other two elements in $N_4$ form a
$U_{2,3}$;% There are $\binom{4}{2}=6$ matroids in this type.
\begin{itemize}
\item $U_{2,4}$;
\item $U_{3,4}$;
\item $\hat{U}^i_{2,5}$, $i\in N_4$;
   \end{itemize}
for $\hat{U}^i_{2,5}$ with $i\in N_4$, it is a polymatroid whose free
expansion is a $U_{2,5}$,\footnote{See \cite[Theorem
  1.3.6]{nguyen1978semimodular} or \cite[Theorem 4]{matuvs2007}  for
  the definition of free expansion.} and \rv{its}  rank function is defined by 
\begin{equation*}
  \br(A)=
  \begin{cases}
    \min \{2, |A|\},\quad &A\neq \{i\},\\
    2,\quad & A=\{i\},
  \end{cases}
\end{equation*}
for any $A\subseteq N_4$, %There are $4$ matroids in this type.
\begin{itemize}
    \item $\hat{U}^i_{3,5}$;
\end{itemize}
for $\hat{U}^i_{3,5}$ with $i\in N_4$, it is a polymatroid whose free
expansion is a $U_{3,5}$, and \rv{its} rank function is defined by 
\begin{equation*}
  \br(A)=
  \begin{cases}
    \min \{3, |A|+1\},\quad & i\in A,\\
    |A|,\quad & i\not\in A,
  \end{cases}
\end{equation*}
for any $A\subseteq N_4$; 
\begin{itemize}
  \item $V^\alpha_8$, $\alpha\subseteq N_4$, $|\alpha|=2$;
  \end{itemize}
for $V^\alpha_8$ with $\alpha\subseteq N_4$ and $|\alpha|=2$, it is a polymatroid whose free
expansion is the V\'amos matroid, and \rv{its} rank function is defined by 
\begin{equation*}
  \br(A)=
  \begin{cases}
   3,\quad & |A|=2 \text{ and } A\neq \alpha, \\
    \min \{4, 2|A|\},\quad & \text{ o.w.}
  \end{cases}
\end{equation*}

% In this paper, when we say a
% matroid (integer polymatroid) contained in an extreme ray of
% $\Gamma_4$, we mean the extreme ray containing it if there is no
% ambiguity.

It can be seen that for an extreme ray in the form $E^\alpha$ with
$\alpha\subseteq N_4$, it is in a type with $\binom{4}{|\alpha|}$
extreme rays, and each extreme ray in the type can be obtained from
each other by permuting the indices in $N_4$. 
For a specific extreme ray, say $U^{\{1,2\}}_{1,2}$, for
simplicity, we will drop the bracket and comma of the set in the
superscript, and write it as $U^{12}_{1,2}$. 
\rv{To facilitate the readers,
 the $11$ types of extreme rays are presented in Table}
\mrv{\ref{tablevector}} \rv{by the rank functions of their minimal
  integer polymatroid in the form of $15$-dimensional vectors, i.e.,
  $\br=(\br(A),\emptyset\neq A\subseteq N_4)$.}

In the same manner, we
denote the $8$ extreme rays of $\Gamma_3$ and classify them in the following $4$
types.
\begin{itemize}
\item $U^{i,3}_{1,1}$, $i\in N_3$;
\item $U^{\alpha,3}_{1,2}$, $\alpha\subseteq N_3$, $|\alpha|=2$;
\item $U_{1,3}$;
\item $U_{2,3}$.
\end{itemize}
Here we put second superscript $3$ to $U^{i,3}_{1,1}$ and
$U^{\alpha,3}_{1,1}$ to distinguish them from $U^{i}_{1,1}$ and
$U^{\alpha}_{1,1}$, the extreme rays of $\Gamma_4$, respectively.

%\begin{table}[h]
%	\centering
%	\caption{Entropy functions on the extreme rays of $\Gamma_4$}
%	\renewcommand{\arraystretch}{3}
%	\begin{tabular}{|c|c|c|c|c|}
%		\hline  \diagbox{$E_M$}{$\mb{r}_M$} & $\h(1)$ $\h(2)$ $\h(3)$ $\h(4)$
%		& $\h(12)$ $\h(13)$ $\h(14)$ $\h(23)$ $\h(24)$ $\h(34)$ &$\h(123)$ $\h(124)$ $\h(134)$ $\h(234)$& $\h(1234)$  \\
%		\hline
%		
%	\end{tabular}
%\end{table}

\begin{table*}[h]
	\centering
	\caption{Extreme rays of $\Gamma_4$ and their  rank functions}
	\renewcommand{\arraystretch}{2}
	{\setlength{\tabcolsep}{4pt} \mrv{
	     \begin{tabular}{@{}c@{}|cccc|cccccc|cccc|c}
			\hline \tridiagbox{$E_{M}$}{$\mb{r}_M(A)$}{$A$} & $1$ &$2$& $3$& $4$
			& $12$& $13$& $14$ &$23$& $24$ &$34$ &$123$ &$124$& $134$& $234$& $1234$  \\
			\hline\hline  $U^{1}_{1,1}$& $1$& $0$ &$0$ &$0$ & $$1& $1$ &$1$ &$0$& $0$& $0$& $1$& $1$& $1$& $1$&$1$\\
			\hline $U^{12}_{1,2}$& $1$& $1$ &$0$ &$0$ & $1$& $1$ &$1$ &$1$& $1$& $0$&$1$& $1$& $1$& $1$&$1$\\
			\hline $U^{123}_{1,3}$& $1$& $1$ &$1$ &$0$ & $1$& $1$ &$1$ &$1$& $1$& $1$& $1$&$ 1$& $1$& $1$&$1$\\
			\hline $U_{1,4}$& $1$& $1$ &$1$ &$1$ &$ 1$& $1$ &$1$ &$1$& $1$& $1$& $1$& $1$& $1$& $1$&$1$\\
			\hline $U^{123}_{2,3}$& $1$& $1$ &$1$ &$0$ & $2$& $2$ &$1 $&$2$& $1$& $1$& $2$& $2$& $2$& $2$&$2$\\
			\hline $\mc{W}^{12}_2$& $1$& $1$ &$1$ &$1$ & $1$& $2$ &$2$ &$2$&$ 2$& $2$& $2$& $2$&$ 2$& $2$&$2$\\
			\hline $U_{2,4}$& $1$& $1$ &$1$ &$1$ & $2$& $2$ &$2$ &$2$& $2$&$ 2$& $2$&$ 2$& $2$& $2$&$2$\\
			\hline $U_{3,4}$& $1$& $1$ &$1$ &$1$ & $2$& $2$ &$2$ &$2$& $2$&$ 2$& $3$&$ 3$& $3$& $3$&$3$\\
			\hline $\hat{U}^1_{2,5}$& $2$& $1$ &$1$ &$1$ & $2$& $2$ &$2$ &$2$& $2$& $2$& $2$& $2$&$ 2$& $2$&$2$\\
			\hline $\hat{U}^1_{3,5}$& $2$& $1$ &$1$ &$1$ &$ 3$& $3$ &$3$ &$2$& $2$&$ 2$& $3$& $3$&$ 3$& $3$&$3$\\
			\hline $V^{12}_8$  & $2$&$ 2$ &$2$ &$2$ & $4$& $3$ &$3$ &$3$& $3$& $3$&$ 4$& $4$& $4$&$ 4$&$4$\\
			\hline
		\end{tabular}}
	}	\label{tablevector}
\end{table*}

\subsection{Entropy functions on the extreme rays of $\Gamma_4$}
\label{ray}

As discussed in Subsection \ref{fpc}, the V-representation of a
face can be written as the convex hull of the extreme rays it
contains. Therefore, to characterize entropy functions on
$2$-dimensional faces of $\Gamma_4$, we need first to characterize the
entropy functions on the extreme rays of $\Gamma_4$. The results of the
characterizations of all extreme rays in this subsection are summarized in Table \ref{table12}.

In the following,
extreme rays are written in the form in \eqref{eray}. For each type of
extreme ray\rv{,} we only consider one of them. Throughout this paper,
 for any random
vector $(X_i:i\in N_4)$, each $X_i$ is assumed distributed on a finite
set $\mc{X}_i, i\in N_4$. For any marginal distribution of  $(X_i: i\in
N_4)$, say $p_{X_2 X_3}(x_2,x_3)$ with $(x_2,x_3)\in\mc{X}_2\times
\mc{X}_3$, we usually drop its subscription and write it as
$p(x_2,x_3)$ if there is no ambiguity. We may further simplify it as
\rv{$p(x_{23})$}.

\begin{theorem}
  \label{allentropic}
  For $E=U^1_{1,1}, U^{12}_{1,2}, U^{123}_{1,3}, U_{1,4}$, i.e.,
  extreme rays containing a matroid with rank $1$, $a\br\in E$ is entropic
  for all $a\ge 0$.
\end{theorem}
\begin{proof}
  Let $X_1$ be any random variable with $H(X_1)=a$. Let $X_i=X_1$ for
  any $i\in\alpha$ and $X_j$ is a constant if $j\in
  N_4\setminus\alpha$, where $\alpha=N_k$ if
  $E=U^\alpha_{1,k},k=1,2,3,4$ and here $U^{N_k}_{1,4}=U_{1,4}$.
It is readily seen that the entropy function of such constructed
$(X_i,i\in N_4)$ is $a\h$.
\end{proof}

The characterization of the following four types of extreme rays follows
immediately from matroidal entropy functions in \cite{CCB21} and %\cite{CCB24}.

\begin{theorem}%[\cite[Proposition 7]{CCB24}]
  \label{integerk}
  For $E=U^{123}_{2,3}, \mc{W}^{14}_2, U_{3,4}$, $a\br\in E$ is entropic if and only if
  $a=\log v$ for integer $v\ge 1$.
\end{theorem}

\begin{theorem}(\cite[Proposition 2]{CCB21})
  \label{intn26}
  For $E=U_{2,4}$, $a\br\in E$ is entropic if and only if
  $a=\log v$ for positive integer $v\neq 2,6$.
\end{theorem}

%Following the discussion of extreme rays of $\Gamma_4$, we need to determine whether the  polymatroids of these rays are entropic, as they can be used as the basis to characterize the high-dimensional faces. The result is presented in Table \ref{table12}. 
\begin{table*}[h]
	\centering
	\caption{Entropy functions on the extreme rays of $\Gamma_4$}
	\renewcommand{\arraystretch}{1}
	\begin{tabular}{c|c|c|c}
		\hline  Extreme rays $E$& Theorems
		&  Figures of entropy region $E^*=E\cap \Gamma^*_4$ &
                                                                      Entropy
                                                                      region $E^*$
          \\
		\hline \hline	$U^i_{1,1}$& \ref{allentropic} &
                                              \includegraphics[trim=1.6 0 0 0,scale=0.9]{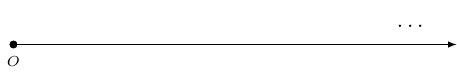} &  $\{a\br:a\ge 0\}$
		\\
		\hline $U^\alpha_{1,2}$& \ref{allentropic}  &\includegraphics[trim=1.6 0 0 0,scale=0.9]{fig_ex1.pdf} &  $\{a\br:a\ge 0\}$      \\
		\hline	$U^\alpha_{1,3}$ & \ref{allentropic} & \includegraphics[trim=1.6 0 0 0,scale=0.9]{fig_ex1.pdf}& $\{a\br:a\ge 0\}$   \\
		\hline	$U_{1,4}$ & \ref{allentropic} &\includegraphics[trim=1.6 0 0 0,scale=0.9]{fig_ex1.pdf} & $\{a\br:a\ge 0\}$  \\
		\hline	$U^\alpha_{2,3}$& \ref{integerk} & \includegraphics[trim=1.6 0 0 0,scale=0.9]{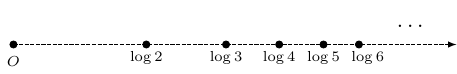} &$\{a\br:a=\log v,\ v\ge 1, v \in\mathbb{Z}\}$  \\
		\hline	$\mc{W}^\alpha_{2}$ & \ref{integerk} &\includegraphics[trim=1.6 0 0 0,scale=0.9]{fig_ex2.pdf} &$\{a\br:a=\log v,\ v\ge 1, v\in \mathbb{Z}\}$\\
		\hline$U_{2,4}$ & \ref{intn26} & \includegraphics[trim=1.6 0 0
                                  0,scale=0.9]{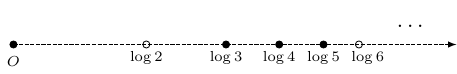} &
                                                              $\{a\br:a=\log v, v\ge 1, v\in \mathbb{Z},v\neq 2,6\}$ \\
		\hline	$U_{3,4}$  &  \ref{integerk}&    \includegraphics[trim=1.6 0 0 0,scale=0.9]{fig_ex2.pdf} & $\{a\br:a=\log v,\  v\ge 1, v\in \mathbb{Z}\}$ \\
		\hline	$\hat{U}^i_{2,5}$ & \ref{rk16}&   \includegraphics[trim=1.6 0 0 0,scale=0.9]{fig_ex2.pdf}& $\{a\br:a=\log v,\  v\ge 1, v\in \mathbb{Z}\}$ \\
		\hline	$\hat{U}^i_{3,5}$ & \ref{rk17} &  \includegraphics[trim=1.6 0 0 0,scale=0.9]{fig_ex2.pdf} & $\{a\br:a=\log v,\  v\ge 1, v\in \mathbb{Z}\}$ \\
		\hline	$V^\alpha_8$& \ref{vray} & \includegraphics[trim=1.6 0 0
                                       0,scale=0.9]{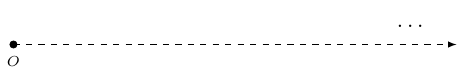}  &
                                                                    $\{a\br: a= 0\}$  \\	
          
		\hline
	\end{tabular}
	\label{table12}
\end{table*} 

It is well known that the rays containing a minimal integer
polymatroid 
with rank-$1$ are all-entropic. It has been proven that the \rv{vectors $a\br$}  in
$U^{\alpha}_{2,3}$ are entropic if and only if $a=\log{v}$ for $k\in
\mathbb{Z}$ \cite[Theorem 15.2]{yeung2008information}. In the ray
$U_{2,4}$, $a$ can take the values $\log{v}$ except for $\log2$ and
$\log6$ \cite[Proposition 6]{CCB22}. In the ray $U_{3,4}$, the region
is the same as in $U^{\alpha}_{2,3}$. These three cases correspond to
the Latin Square, Orthogonal Two Latin Squares, and Latin Cube,
respectively. As for the ray $\hat{U}^i_{2,5}$ and $\hat{U}^i_{3,5}$,
they are implied by the following theorems. For the ray $V^\alpha_8$,
the fact that it disobeys the Zhang-Yeung inequality indicates that it lies outside of $\overline{\Gamma_4^*}$ and, consequently, is non-entropic. 
\begin{theorem}
	\label{rk16}
	For the rank function $\br$ of $\hat{U}^i_{2,5}$, $\h=a\br$  is entropic if and only if $a=\log v$ for integer $v\ge 1$.
	\begin{proof}
		For $\h=a\br$, where $\br$ is
		the rank function of $\hat{U}^1_{2,5}$, if we restrict it on
		$\{2,3,4\}$, we will see that its restriction $\h'=a\br_1$, where $\br_1$ is
		the rank function of $U_{2,3}$, is entropic only if $a=\log v$ for integer $v\ge 1$.
		
		 To prove the ``if'' part, let $X_2$ and $X_3$ be independent and uniformly distributed on $\mathbb{I}_v\triangleq\{0,\dots,v-1\}$, and $X_4=X_2+X_3\mod v$. Let $X_1=v\cdot X_2+X_3$. Then $X_1$ is uniformly distributed on $\mathbb{I}_{v^2}$ and it can be checked that $(X_i,i\in N_4)$ such constructed characterizes $\h=\log{v}\cdot\br$. 
	\end{proof}
\end{theorem}
\begin{lemma}
	\label{lem1}
	Assume that each $X_i,i\in N_3$ is distributed on $\mc{X}_i$ and for  $x_i\in\mathcal{X}_{i}, p(x_{i})>0$.  If $X_1$ and $X_2$ are independent and 
	for any $p(x_1,x_2,x_3)>0$, $p(x_1)=p(x_2)$, then $X_1$ and $X_2$ are uniformly distributed on $\mathcal{X}_{1} $ and $\mathcal{X}_{2}$, respectively, $|\mathcal{X}_{1}|=|\mathcal{X}_{2}|$ and $H(X_1)=H(X_2)$.
\end{lemma}
\begin{proof}
		Let $X_1$ and $X_2$ be independent and $p(x_1,x_2,x_3)>0$. By assumption, we have 
	\begin{align}
		p(x_1) =p(x_2). \label{lem1.1}
	\end{align}
%	Let $p(x_1,x_2,x_3)>0$, then $p(x_1)=p(x_2)$.
	Since  $X_1$ and $X_2$ are independent, for any $x_1'\in \mathcal{X}_{1}$ with $x_1'\neq x_1$, we obtain
	\begin{equation}
		p(x_1',x_2)=p(x_1')p(x_2)>0 . \label{lem1.2}  \nonumber
	\end{equation}
	As 
	\begin{equation}
		p(x_1',x_2)=\sum_{x_3} p(x_1',x_2,x_3),  \label{lem1.3} \nonumber
	\end{equation}
	there exists $x_3'\in \mc{X}_3$ such that $p(x_1',x_2,x_3')>0$, which implies
	\begin{equation}
		p(x_1')=p(x_2)  \label{lem1.4} 
	\end{equation}
according to the assumption. By \eqref{lem1.1} and \eqref{lem1.4},
	\begin{equation}
		p(x_1)=p(x_1'), \label{lem1.5} 
	\end{equation}
	which implies $p(x_1)$ are equal for all $x_1\in \mc{X}_1$. Then, we conclude that $X_1$ is uniformly distributed on $\mathcal{X}_{1}$. By symmetry, $X_2$ is  uniformly distributed on $\mathcal{X}_{2} $. As $p(x_1)=p(x_2)$ for all $x_1\in\mathcal{X}_{1} $ and $x_2\in\mathcal{X}_{2}$, then $|\mc{X}_1|=|\mc{X}_2|$ and $H(X_1)=H(X_2)$.
\end{proof}
\begin{theorem}
	\label{rk17}
	For the rank function $\br$ of $\hat{U}^i_{3,5}$, $\h=a\br$  is entropic if and only if $a=\log v$ for integer $v\geq 1$.
\end{theorem}
%The proof of Theorem $2$ is in Appendix A.
\begin{proof}
	If $\mathbf{ h} \in  F= \hat{U}^1_{3,5}$ is entropic, its characterizing random vector
	$(X_i,i\in N_4)$ satisfies the following information equalities,
	\begin{align}
	H(  X_{N_4})&=H(X_{N_4-i}), \ i\in N_4& \nonumber \\
	H(X_{ij })&=H(X_{i})+H(X_{j}),  i\neq j, i,j\in N_4& \nonumber \\
	H(X_{  ik })+H(X_{jk })&=H(X_{ k})+H(X_{  i,j,k} ),\nonumber\\  &\qquad\qquad\qquad\quad\ \{i,j,k\}=\{2,3,4\} \nonumber \\
	H(X_{i\cup K})+H(X_{  j\cup K})&=H(X_{ K})+H(X_{  ij
		\cup K }),\nonumber\\ &\qquad\qquad\qquad |K|=2, K\nsubseteq\{2,3,4\}.   \nonumber
\end{align}
	Assume that each $X_i$ is distributed on $\mc{X}_i$ and for $x_i\in
	\mathcal{X}_{i}$, $p(x_{i})>0$.  For $ (x_i,i\in N_4) \in
	\mathcal{X}_{N_4}$ with $p(x_{1234})>0$, above information
        equalities
	imply that the probability mass function satisfies
	\begin{align}
		p(x_{1},x_{2},x_{3},x_{4})&=p(x_{1},x_{2},x_{3}) \label{c250}\\
		&=p(x_{1},x_{2},x_{4}) \label{c251} \\ 
		&=p(x_{1},x_{3},x_{4}) \label{c252}\\
		&= p(x_{2},x_{3},x_{4}), \label{c253}\\
		p(x_1,x_2)&=p(x_1)p(x_2), \label{c254}\\
		p(x_1,x_3)&=p(x_1)p(x_3), \label{c255}\\
		p(x_1,x_4)&=p(x_1)p(x_4), \label{c256}\\
		p(x_2,x_3)&=p(x_2)p(x_3), \label{c257}\\
		p(x_2,x_4)&=p(x_2)p(x_4), \label{c258}\\
		p(x_3,x_4)&=p(x_3)p(x_4), \label{c259}\\
		p(x_2,x_3)p(x_2,x_4)&=p(x_2)p(x_2,x_3,x_4), \label{c260}\\
		p(x_2,x_3)p(x_3,x_4)&=p(x_3)p(x_2,x_3,x_4), \label{c261}\\
		p(x_2,x_4)p(x_3,x_4)&=p(x_4)p(x_2,x_3,x_4), \label{c262}\\
		p(x_{1},x_{2},x_{3})p(x_{1},x_{2},x_{4})&=p(x_{1},x_{2})p(x_{1234}),\label{c263}\\
		p(x_{1},x_{2},x_{3})p(x_{1},x_{3},x_{4})&=p(x_{1},x_{3})p(x_{1234}),\label{c264}\\
		p(x_{1},x_{2},x_{4})p(x_{1},x_{3},x_{4})&=p(x_{1},x_{4})p(x_{1234}).\label{c265}
	\end{align}
	By \eqref{c250}-\eqref{c253} and \eqref{c263}-\eqref{c265}, we have 
	\begin{align}
		p(x_{1234})&=p(x_1,x_2,x_3)\label{c266} \\
		&=p(x_1,x_2)=p(x_1,x_3)=p(x_1,x_4).\label{c267}
	\end{align}	 
	In light of \eqref{c254}-\eqref{c256}, relapcing $p(x_1,x_2),p(x_1,x_3)$ and $p(x_1,x_4)$ by $p(x_1)p(x_2),p(x_1)p(x_3)$ and $p(x_1)p(x_4)$ in \eqref{c267}, we obtain 
	\begin{equation}
		p(x_2)=p(x_3)=p(x_4). \label{c268}
	\end{equation}
	By Lemma \ref{lem1}, $X_2,X_3$ and $X_4$ are uniformly distributed on  $\mc{X}_2,\mc{X}_3$ and $\mc{X}_4$ and $H(X_2)=H(X_3)=H(X_4)=\log v$, where $v=|\mc{X}_i|,i\in \{2,3,4\}$.
	As $\h\in F$ and $(X_i, i\in N_4)$ is its characterizing random vector, we have $H(X_2)=H(X_3)=H(X_4)=a$. Thus $a$ can only take the value $\log v$.
	
	Now we prove that all of these polymatroids are entropic. For $a=\log v$, let $X_2,X_3$ and $X_4$ be  mutually independent and uniformly distributed on $\mathbb{I}_v$. Let
	\begin{align}
			X_1=v&[(X_3-X_2)\mod v]\nonumber\\+&[(X_4-X_2)\mod v]. \label{c2000}
	\end{align}
It remains to  check the entropy function of such constructed $(X_i,i\in N_4)$ is $a\br$. Given that $X_2$, $X_3$ and $X_4$ are  mutually independent and  uniformly distributed, we have 
\begin{align}
	H(X_2)&=H(X_3)=H(X_4)=\log v \label{c2010}\\
H(X_{234})&=H(X_2)+H(X_3)+H(X_4)\\&=3\log v. \label{c2011}
\end{align}
Then by \eqref{c2000}, we obtain $H(X_{1234})=H(X_{234})$. Dividing
both sides of \eqref{c2000} by $v$ and then rounding them down, we obtain
\begin{align}
\lfloor\dfrac{X_1}{v}\rfloor= (X_3-X_2)\mod v\label{c2001}
\end{align}
Thus \begin{align}
	X_3 \mod v= (X_2+\lfloor\dfrac{X_1}{v}\rfloor) \mod v.\label{c2002}
\end{align}
Note that $X_3$ is distributed on $\mathbb{I}_v$, we have
\begin{align}
	X_3 = (X_2+\lfloor\dfrac{X_1}{v}\rfloor) \mod v,\label{c2003}
\end{align}
which implies that $X_3$ is a function of $(X_1,X_2)$. By the same argument, 
\begin{align}
	X_2 = (X_3-\lfloor\dfrac{X_1}{v}\rfloor) \mod v.\label{c2004}
\end{align}
  By  \eqref{c2001}, replacing  $(X_3-X_2)\mod v$ by $\lfloor\dfrac{X_1}{v}\rfloor$  in \eqref{c2000},
\begin{align}
	X_1=v \lfloor\dfrac{X_1}{v}\rfloor +[(X_4-X_2)\mod v].\label{c2005}
\end{align}
 Similarly,
\begin{align}
		X_4 &= (X_1+X_2-\lfloor\dfrac{X_1}{v}\rfloor ) \mod v,\label{c2006}\\
		X_2 &= (X_4-X_1+\lfloor\dfrac{X_1}{v}\rfloor ) \mod v\label{c2007}
\end{align}
Based on  \eqref{c2003}, \eqref{c2004}, \eqref{c2006} and \eqref{c2007},  the random vetor  $(X_i,i\in N_4)$ must satisfy the following equalities,
 \begin{align}
 	H(X_{123})&=H(X_{12})=H(X_{13}), \label{c2008} \\
 	H(X_{124})&=H(X_{12})=H(X_{14}).\label{c2009}
 \end{align}
 Note that 
\begin{align}
I(X_3;X_4|X_{12})=&H(X_{123})+H(X_{124})\nonumber\\&-H(X_{12})-H(X_{1234}).\label{c2014}
\end{align}
In light of \eqref{c2008} and \eqref{c2009}, replacing $H(X_{123})$ and $H(X_{124})$ by $H(X_{12})$ in \eqref{c2014},
\begin{align}
	I(X_3;X_4|X_{12})&=H(X_{12})-H(X_{1234})\\
&=- H(X_{34}|X_{12})
\end{align}
By the nonnegativity of $ H(X_{34}|X_{12})$ and
$I(X_3;X_4|X_{12})$, we have
\begin{equation}
  \label{eq:10}
  H(X_{1234})=H(X_{12}).
\end{equation}
Together with
\eqref{c2000}, which implies
\begin{equation}
  \label{eq:11}
  H(X_{1234})=H(X_{234}),
\end{equation}
and \eqref{c2011}, we have
\begin{align}
	H(X_{12})=3\log v. \label{c2012}
\end{align}
By \eqref{eq:10} and \eqref{c2008}, $3\log v=$
\rv{$H(X_{13})$}$ \le H(X_{134})\le
H(X_{1234})=3\log v$ which implies
\begin{equation}
  \label{eq:12}
  H(X_{134})=3\log v.
\end{equation}
 The mutual independence of $X_2$, $X_3$ and $X_4$ and \eqref{c2010} implies that 
 \begin{equation}
   \label{eq:13}
   H(X_{23})=H(X_{24})=H(X_{34})=2\log v
 \end{equation}
By \eqref{c2010}, \eqref{c2012}, \eqref{c2008}, \eqref{c2009},
\eqref{eq:13}, \eqref{eq:10}, \eqref{eq:11} and \eqref{eq:12}, the value of $H(X_A)$
are determined for all  $A \subseteq N$ except for $A=\{1\}$. Now we verify that $H(X_1)=2\log v$. % The nonegativity of $I(X_1;X_2)$ implies
% \begin{align}
% 	H(X_1)+H(X_2)\geq H(X_{12}).
% \end{align}
 By \eqref{c2010} and \eqref{c2012},
 \begin{align}
 	H(X_1)\geq H(X_{12})-H(X_2)=2\log v.\label{c2013}
 \end{align}
On the other hand, as $X_1$ is distributed on  $\mathbb{I}_{v^2}$ by
\eqref{c2000}, we have $H(X_1)\leq 2\log v$. The proof is accomplished.
\end{proof}

\begin{theorem}
\label{vray}
  For $E=V^\alpha$, $a\h\in E$ is entropic if and only if $a=0$.
\end{theorem}
\begin{proof}
  For $a=0$, $a\h$ is the origin, and it is characterized by
  constant $(X_i,i\in N_4)$. For $a>0$, all $a\h$ dissatisfy
  Zhang-Yeung inequality, and are so non-entropic. 
\end{proof}

\section{Two-dimensional faces of $\Gamma_4$}
\label{enu}
 % In order to discover the structure of the bounds of $\Gamma_4$, we need to know how many two-dimensional faces there exist. This part is to find all the two-dimensional faces of $\Gamma_4$ using the Algori Thm.\ref{alg:Framwork}. 
 
 % We get all the two-dimensional faces and count the number of
 % combinations between different classes. The result  is shown in
 % Table.\ref{table:1} . 
As we discussed in the previous section, each face of a polyhedral
cone has two representations, i.e, H-representation, the intersection
of the facets containing the face, and V-representation, the convex
hull of the extreme rays contained in the face. For $2$-dimensional
faces of $\Gamma_4$, each of them contains exactly two extreme
rays. Thus in this paper we adopt the V-representation. For each two
extreme rays of $\Gamma_4$, to determine whether the convex hull of
them forms a face of $\Gamma_4$, we only need to check that the
intersection of the facets containing the two extreme rays is a face
of dimension $2$, that is, containing no other extreme rays. The
catalogue of all $2$-dimensional faces of $\Gamma_4$ is generated by
Algorithm \ref{alg:Framwork}. 
% and
% listed in Table \ref{table:1}. 

% Let $\mb{A}$ be a $28\times 41$ $(0,1)$-matrix, whose rows
% corresponde to the $28$ facets of $\Gamma_4$ and columns correspond to
% the $41$ extreme rays. If the $i$-th facet contains the $j$-th extreme
% ray, $\mb{A}(i,j)=0$, otherwise, $\mb{A}(i,j)=1$.

  \begin{algorithm}[H]
	\caption{ Two-dimensional Face of $\Gamma_4$ Generating Algorithm.}
	\label{alg:Framwork}
	\begin{algorithmic}[1]
		\Require
                The family $\mc{F}$ of all $28$ facets and the family $\mc{E}$ of all 41
                extreme rays of $\Gamma_4$.
		%The set of the 28 facets contain  41 rays, $A_{28\times41}$(if the ray is in the facet, the vaule of the coordinate is zero);	
		\Ensure
			Upper triangle of a $41\times 41$ $(0,1)$-matrix $C$,where
                        $C(i,j)=1$ if and only if the convex hull of the
                        $i$-th extreme ray $E_i$ and the $j$-th
                        extreme ray $E_j$ forms a $2$-dimensional
                        face of $\Gamma_4$.
		%\State Initialize $C_{41\times41}$  and  $B_{28\times1}$;
	        \For{$1\le i<j\le 41$}
                     \State  $C(i,j) \gets 1$
                     \For{$k=1$ to $28$}
	                 \If{ the $k$-th facet $F_k$ contains both $E_i$ and $E_j$,}
                              \State put $F_k$ in $\mc{F}'$.
                         \EndIf
                      \EndFor
	               \For{ $E\in\mc{E}\setminus\{E_i, E_j\}$ }
	                           \If{  $E$ is contained in the face
                                     $\cap_{F\in\mc{F}'} F$}  
	                                  \State  $C(i,j) \gets 0$
	                                  \State \textbf{break}
	                           \EndIf
	                 \EndFor
	        \EndFor
	\end{algorithmic}
\end{algorithm}
%\CJKsetec{space=\CJKglue,autoindent=false}
%  \begin{algorithm}[H]
%  		\caption{ 多拟阵区域 $\Gamma_4$ 二维面生成算法.}
%	\begin{algorithmic}[1]
%	\Require
%	$\Gamma_4$ 所有$41$条极射线的集合 $\mc{E}$ 和所有$28$个大面(facet)的集合 $\mc{F}$.
%	%The set of the 28 facets contain  41 rays, $A_{28\times41}$(if the ray is in the facet, the vaule of the coordinate is zero);	
%	\Ensure
%	%		Upper triangle of a $41\times 41$ $(0,1)$-matrix $C$,where
%	%		$C(i,j)=1$ if and component if the convex hull of the
%	%		$i$-th extreme ray $E_i$ and the $j$-th
%	%		extreme ray $E_j$ forms a $2$-dimensional
%	%		face of $\Gamma_4$.
%	一个由$0$,$1$构成的$41\times41$的上三角矩阵$C$,其中$C(i,j)=1$表示第$i$条极射线$E_i$和第$j$条极射线 $E_j$形成的凸包是$\varGamma_4$的一个二维面。
%	%\State Initialize $C_{41\times41}$  and  $B_{28\times1}$;
%	\For{$1\le i<j\le 41$}
%	\State  $C(i,j) \gets 1$
%	\For{$k=1$ to $28$}
%	\If{ 第$k$个大面$F_k$同时包含$E_i$和 $E_j$，}
%	\State 把$F_k$放入$\mc{F}'$
%	\EndIf
%	\EndFor
%	\For{ $E\in\mc{E}\setminus\{E_i, E_j\}$   }
%	\If{  $E$ 包含在面$\cap_{F\in\mc{F}'} F$当中
%		}  
%	\State  $C(i,j) \gets 0$
%	\State \textbf{break}
%	\EndIf
%	\EndFor
%	\EndFor
%\end{algorithmic}
%\end{algorithm}
% It can be seen that there are 11 types matroids which all the extreme rays can be divided. Then we consider all the cases between the different types and select the representative situation. The figure behind the combination represents the number of the symmetrical two-dimensional faces. The result are listed in Table \ref{table:1}. 

The result of Algorithm \ref{alg:Framwork}, i.e., $510$ $2$-dimensional
faces corresponding to those ``$1$''-entries in
$C$ are classified in Table \ref{table:1}. 
Similar to the fact that the $41$ extreme rays can be classified in
$11$ types, the \rv{$510$}  $2$-dimensional faces can also be classified to
$59$ types, and in each type, each face can be obtained from another
by permuting the indices in $N_4$. These $59$ types of faces are
listed in the upper triangle of Table \ref{table:1}. We label the rows and
columns with the $11$ types of extreme rays of $\Gamma_4$.
For simplicity, we
denote the face $F=\mr{cone}(E_1, E_2)$ by $(E_1,E_2)$. In each cell
with ``$(E_1, E_2)\  n$,  Thm. $m$'', ``$(E_1, E_2)$'' denote a representative of the
type of $2$-dimensional faces, where ``$E_1$ $(E_2)$'' is a representative of
the type the extrem rays in the column (row),
``$n$'' is the number of the faces in this type and this face is
characterized in ``Thm. $m$'' in Section \ref{af}. For the cell with
``$0$'', the convex hull of the two extreme rays in each type forms no
$2$-dimensional faces of $\Gamma_4$.
More details
of the faces in Table \ref{table:1} are discussed in Section \ref{af}.

\begin{table*}[t] 
	\renewcommand{\arraystretch}{2}
	\begin{center}   		
			\huge
			\caption{Two-dimensional faces of $\Gamma_4$}  
			\label{table:1} 
			\resizebox{1\columnwidth}{!}{
					\begin{tabular}{|c|c|c|c|c|c|c|c|c|c|c|c|}  
						%1
							\hline   &$U^i_{1,1}$ & $U^\alpha_{1,2}$&$U^\alpha_{1,3}$ &$U_{1,4}$ &	$U^\alpha_{2,3}$&	$\mc{W}^\alpha_{2}$& 	$U_{2,4}$ &	$U_{3,4}$ &$\hat{U}^i_{2,5}$&	$\hat{U}^i_{3,5}$&	$V^\alpha_8$ \\
						%2
							\hline  \multirow{4}{*}{$U^j_{1,1}$ }     &    & $({U^{12}_{1,2},U^1_{1,1})}$&	$({U^{123}_{1,3},U^1_{1,1})}$ & &{$(U_{2,3}^{123},U_{1,1}^{1})$}  &	{$(\mc{W}_{2}^{14},U_{1,1}^{1})$}&	&	&	$(\hat{U}^1_{2,5},U_{1,1}^{1})$ &$(\hat{U}^{1}_{3,5},U^{1}_{1,1})$ &$(V^{12}_8,U^{1}_{1,1})$	\\
						%3  
							&$({U^1_{1,1},U^2_{1,1})}$ &$12$,Thm.\ref{rk1}&$12$,                                                                                          Thm.\ref{rk1} & $({U_{1,4},U^{1}_{1,1})}$&{ $12$, Thm.\ref{rk3} } &{ $12$, Thm.\ref{rk3} } &{\color{gray}$(U_{2,4},U^{1}_{1,1})$} &$(U_{3,4},U^{1}_{1,1})$ &$4$,Thm.\ref{rk18} &$4$,Thm.\ref{rk21} &$12$,Thm.\ref{rk31}  \\
						%4
							\cline{3-4}\cline{6-7} \cline{10-12} 
							 &$6$, Thm.\ref{rk1} &$({U^{12}_{1,2},U^3_{1,1})}$  &$({U^{123}_{1,3},U^4_{1,1})}$ & $4$, Thm.\ref{rk1}& { $(U_{2,3}^{123},U_{1,1}^{4})$} &
							{$(\mc{W}_{2}^{34},U_{1,1}^{1})$} &{\color{gray}$4$} &$4$, Thm.\ref{rk8} & 	$(\hat{U}^1_{2,5},U_{1,1}^{2})$& 	$(\hat{U}^{1}_{3,5},U^{2}_{1,1})$ & $(V^{12}_8,U^{3}_{1,1})$\\
						%5
							& &$12$, Thm.\ref{rk1} &$4$, Thm.\ref{rk1} & &{ $4$, Thm.\ref{rk3}} &{ $12$, Thm.\ref{rk3}} & & &$12$, Thm.\ref{rk18} & $12$, Thm.\ref{rk21}&$12$, Thm.\ref{rk31} \\
						%6
							\hline  \multirow{6}{*}{$U^\beta_{1,2}$} &	\multirow{6}{*}{$\backslash$ }& & &	& &	{$(\mc{W}_{2}^{14},U_{1,2}^{14})$}&	& & & & \\
						%7
							& &$(U^{12}_{1,2},U^{13}_{1,2})$ &$({U^{123}_{1,3},U^{12}_{1,2})}$ & &{	$(U_{2,3}^{123},U_{1,2}^{12})$} &{$ 6$, Thm.\ref{rk3} 	} & & & & & \\
						%8
							\cline{7-7}
							& &$12$, Thm.\ref{rk1} &$12$, Thm.\ref{rk1} & $({U_{1,4},U^{12}_{1,2})}$&{ $12$, Thm.\ref{rk2}} & {$(\mc{W}_{2}^{24},U_{1,2}^{14})$}&{\color{gray}$(U_{2,4},U^{12}_{1,2})$} &$(U_{3,4},U^{12}_{1,2})$ &$(\hat{U}^1_{2,5},U_{1,2}^{12})$ &$(\hat{U}^{1}_{3,5},U^{12}_{1,2})$ & $(V^{12}_8,U^{13}_{1,2})$\\
						%9
							\cline{3-4} \cline{6-6}
							& &$(U^{12}_{1,2},U^{34}_{1,2})$ &$({U^{123}_{1,3},U^{14}_{1,2})}$ & $6$, Thm.\ref{rk1} &{ $(U_{2,3}^{123},U_{1,2}^{14})$} &{ $24$, Thm.\ref{rk3}} & {\color{gray}$6$} & $6$, Thm.\ref{rk9} & $12$, Thm.\ref{rk18} & $12$, Thm.\ref{rk21} & $24$, Thm.\ref{rk31} \\
						%9
							\cline{7-7}
							& &$3$, Thm.\ref{rk1} &$12$, Thm.\ref{rk1} & &{$12$, Thm.\ref{rk3}}  &{$(\mc{W}_{2}^{34},U_{1,2}^{12})$} & & & & & \\
						%10
							& & & & & & { $6$, Thm.\ref{rk2}}& & & & & \\
						%11
							\hline \multirow{2}{*}{$U^\beta_{1,3}$}  & \multirow{2}{*}{ $\backslash$} &	 \multirow{2}{*}{ $\backslash$} &	$({U^{123}_{1,3},U^{124}_{1,3})}$&	$({U_{1,4},U^{123}_{1,3})}$&	{$(U_{2,3}^{123},U_{1,3}^{124})$}&{	$(\mc{W}_{2}^{14},U_{1,3}^{124})$}&	{\color{gray}$(U_{2,4},U^{123}_{1,3})$}&	$(U_{3,4},U^{123}_{1,3})$&	$(\hat{U}^1_{2,5},U_{1,3}^{123})$&	$(\hat{U}^{1}_{3,5},U^{234}_{1,3})$&	$(V^{12}_8,U^{134}_{1,3})$\\
						%12
							 & & &$ 6$, Thm.\ref{rk1} &$ 4$, Thm.\ref{rk1} &{$ 12$, Thm.\ref{rk4}} & {$ 12$, Thm.\ref{rk2}} &{\color{gray}$4$} &$ 4$, Thm.\ref{rk11} &$ 12$, Thm.\ref{rk19} &$4$, Thm.\ref{rk22} &$12$, Thm.\ref{rk31} \\
						%13
							\hline \multirow{2}{*}{$U_{1,4}$ }        &	\multirow{2}{*}{$\backslash$} &	\multirow{2}{*}{$\backslash$} &	\multirow{2}{*}{$\backslash$}&	\multirow{2}{*}{$\backslash$}&	{	$(  U_{2,3}^{123  } , U_{1,4} )     $}&	\multirow{2}{*}{0}&	\multirow{2}{*}{0}&	$(U_{3,4},U_{1,4})$&\multirow{2}{*}{0}&	\multirow{2}{*}{0}&	$(V^{12}_8,U_{1,4})$\\
						%13
							& & & & &{$4$, Thm.\ref{rk7}} & & &$ 1$, Thm.\ref{rk12} & & &$ 6$, Thm.\ref{rk31} \\
						%14
							\hline \multirow{2}{*}{$U^\beta_{2,3}$}  &	\multirow{2}{*}{$\backslash$} &\multirow{2}{*}{	$\backslash$} &	\multirow{2}{*}{$\backslash$}&\multirow{2}{*}{	$\backslash$}&	$(  U_{2,3}^{123  } , U_{2,3}^{124  } )$ & {	$(  \mc{W}_{2}^{12  } , U_{2,3}^{134  } )$}&	{\color{gray}$(U_{2,4},U^{123}_{2,3}),4$}&	$(U_{3,4},U^{123}_{2,3})$&{\color{gray}$(\hat{U}^1_{2,5},U_{2,3}^{234})$}&	$(\hat{U}^{1}_{3,5},U^{123}_{2,3})$&$(V^{12}_8,U^{123}_{2,3})$\\
						%15
							& & & & &{ $6$, Thm.\ref{rk5}} &{ $12$, Thm.\ref{rk6}} & {\color{gray}$4$} &$ 4$, Thm.\ref{rk10} &{\color{gray}$4$} &$ 12$, Thm.\ref{rk23} & $ 12$, Thm.\ref{rk31}\\
						%16
							\hline \multirow{2}{*}{$\mc{W}^\beta_{2}$}	   &  \multirow{2}{*}{$\backslash$} &	\multirow{2}{*}{$\backslash$} &	\multirow{2}{*}{$\backslash$} &	\multirow{2}{*}{$\backslash$} &\multirow{2}{*}{$\backslash$}&{\color{gray}	$(\mc{W}_{2}^{12 },\mc{W}_{2}^{13 })$}&{	\color{gray}	$(U_{2,4},\mc{W}^{12}_{2})$}&	\multirow{2}{*}{0}&{\color{gray}	{$(\hat{U}^1_{2,5},\mc{W}_{2}^{12})$}}&{	$(\hat{U}^{1}_{3,5},\mc{W}^{23}_{2})$}&	\multirow{2}{*}{0} \\
						%17
						 	& & & & & &{\color{gray}$12$} &{\color{gray}$6$} & &{\color{gray}$12$} &$ 12$, Thm.\ref{rk24} &
						\\
							\hline  \multirow{2}{*}{$U_{2,4}$}       &\multirow{2}{*}{$\backslash$} &	\multirow{2}{*}{$\backslash$} &	\multirow{2}{*}{$\backslash$}&	\multirow{2}{*}{$\backslash$}&	\multirow{2}{*}{$\backslash$}&	\multirow{2}{*}{$\backslash$}& 	\multirow{2}{*}{$\backslash$}	& 	\multirow{2}{*}{0}&{\color{gray}$(\hat{U}^1_{2,5},U_{2,4})$}&{ \color{gray}	$(\hat{U}^{1}_{3,5},U_{2,4})$}&	\multirow{2}{*}{0}\\
						%18
							& & & & & & & & &$ 4$  &$ 4$  & \\
							\hline \multirow{2}{*}{$U_{3,4}$}     &  \multirow{2}{*}{$\backslash$} &\multirow{2}{*}{$\backslash$}&	\multirow{2}{*}{$\backslash$}&	\multirow{2}{*}{$\backslash$}&	\multirow{2}{*}{$\backslash$} &	\multirow{2}{*}{$\backslash$} &	\multirow{2}{*}{$\backslash$} &	\multirow{2}{*}{$\backslash$}	&	\multirow{2}{*}{0}&	\multirow{2}{*}{0}&	{$(V^{12}_8,U_{3,4})$}\\
						%19 
						     	& & & & & & & & & & &$ 6$, Thm.\ref{rk31}  \\
							\hline $\hat{U}^j_{2,5}$      &	$\backslash$ &	$\backslash$ &	$\backslash$ &	$\backslash$&	$\backslash$&	$\backslash$&$\backslash$&	$\backslash$&	0&	0&	0\\
						%20
							\hline $\hat{U}^j_{3,5}$  &  $\backslash$ &$\backslash$&	$\backslash$&	$\backslash$&	$\backslash$&	$\backslash$&	$\backslash$&	$\backslash$&$\backslash$&	0&	0\\
						%21
							\hline  $V^\beta_8$    &  $\backslash$ &	$\backslash$ &	$\backslash$ &	$\backslash$&	$\backslash$&	$\backslash$&$\backslash$&$\backslash$&	$\backslash$&	$\backslash$&	0  \\
							\hline   
						\end{tabular}  
				}
		\end{center}   
              \end{table*}

\section{Characterization of entropy functions on two-dimensional faces of $\Gamma_4$ }
\label{af}

In this section, we characterize the entropy functions on
$2$-dimensional faces of $\Gamma_4$. We embed each face $F=(E_1, E_2)$
in the first octant of a $2$-dimensional cartesian coordinate system
whose axis are labeled by $a$ and $b$. Thus, for each $(a,b)$, $a,b\ge
0$, it represents the polymatroid $a\br_1+b\br_2$, where $\br_i, i=1,2$ is
the rank function of the minimal integer polymatroid in $E_i$, respectively. 
 \rv{Throughout}  this paper, for a random vector $(X_i,i\in N_4)$ or its subvectors, 
we assume each $X_i$ is distributed on  a finite set $\mathcal{X}_{i}$, and for each $x_i\in$ \rv{$\mathcal{X}_{i}$}, $p(x_i)>0$.
\begin{lemma}[Lemma 15.3, \cite{yeung2008information}]
  \label{lem}
  For any $\h_1,\h_2\in \Gamma^*_n$, $\h_1+\h_2\in \Gamma^*_n$.
\end{lemma}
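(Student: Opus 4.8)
The plan is to exhibit an explicit characterizing random vector for $\h_1+\h_2$ built from characterizing random vectors for $\h_1$ and $\h_2$ by taking an independent ``product'' coupling. Concretely, let $(X_i, i\in N_n)$ be a characterizing random vector of $\h_1$, defined on a probability space $(\Omega_1,\mathcal{A}_1,\mathbb{P}_1)$, and let $(Y_i, i\in N_n)$ be a characterizing random vector of $\h_2$, defined on $(\Omega_2,\mathcal{A}_2,\mathbb{P}_2)$. On the product space $(\Omega_1\times\Omega_2,\mathcal{A}_1\otimes\mathcal{A}_2,\mathbb{P}_1\otimes\mathbb{P}_2)$ the families $(X_i)_{i\in N_n}$ and $(Y_i)_{i\in N_n}$ (extended to the product space in the obvious way) become independent while each retains its original joint distribution. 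Then define $Z_i\triangleq(X_i,Y_i)$ for $i\in N_n$.

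The key computation is that for every $A\subseteq N_n$,
\[
H(Z_A)=H\big((X_i,Y_i)_{i\in A}\big)=H(X_A,Y_A)=H(X_A)+H(Y_A)=\h_1(A)+\h_2(A),
\]
where the third equality uses that $X_A$ is a function of $X_{N_n}$ and $Y_A$ is a function of $Y_{N_n}$, hence $X_A$ and $Y_A$ are independent, and independence makes joint entropy additive. The case $A=\emptyset$ is the trivial identity $0=0+0$. This shows that $\h_1+\h_2$ is precisely the entropy function of the random vector $(Z_i, i\in N_n)$, so $\h_1+\h_2\in\Gamma^*_n$.

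I do not anticipate a genuine obstacle here; the only points requiring a little care are (i) making the independent coupling rigorous on a common probability space so that the marginal laws of $(X_i)$ and $(Y_i)$ are unchanged, which is exactly the standard product-measure construction, and (ii) invoking additivity of Shannon entropy under independence, $H(U,V)=H(U)+H(V)$ when $U\perp V$, for the pair $U=X_A$, $V=Y_A$. Both are elementary facts about discrete entropy, so the argument is short and self-contained. (If one prefers to stay within a single fixed probability space throughout the paper, one can instead take the joint distribution of $(Z_i)_{i\in N_n}$ to be the product $P_{X_{N_n}}\otimes P_{Y_{N_n}}$ directly on the appropriate finite alphabet and verify the same entropy identity.)
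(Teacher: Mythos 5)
Your proof is correct and is exactly the standard argument for this lemma: the paper does not prove it but cites it as Lemma 15.3 of \cite{yeung2008information}, whose proof is precisely the independent product coupling $Z_i=(X_i,Y_i)$ with the additivity $H(Z_A)=H(X_A)+H(Y_A)$ that you give. Nothing is missing.
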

\begin{lemma}
	\label{lem4}
	
	For a random vector $X_{i},i\in N_3$, consider the tripartite
        graph $G=(V,E)$, where $V=\mc{X}_1\cup \mc{X}_2\cup \mc{X}_3$
        and   $(x_i,x_j)\in E$   if $p(x_{i},x_{j})>0$ for distinct
        $i,j \in N_3 $. If $(X_i,i\in N_3)$ satisfies the following information equalities, 
	\begin{align}
		H(X_{ik})+H(X_{jk  })&=H(X_{ k})+H(X_{ijk } ), \nonumber 
	\end{align}
	where $i,j,k$ be any permutation of $1,2,3$, then each connected component of $G$ is a complete  tripartite graph. Futhermore, if $p(x_1)=p(x_2)=p(x_3)$ holds for any $p(x_1,x_2,x_3)>0$, then the number of vertices in $\mc{X}_i$, $i=1,2,3$ are the same  and the the probability mass of all of the vertices, the edges and the triangles  are equal, respectively, in each connected component. 
\end{lemma}

\begin{proof}
	For $ (x_i,i\in N_3) \in
	\mathcal{X}_{N_3}$ with $p(x_{123})>0$, the information equalities
	imply that the probability mass function satisfies
	\begin{align}
		p(x_1,x_2)p(x_1,x_3)&=p(x_1)p(x_1,x_2,x_3),  \label{e1} \\
		p(x_1,x_2)p(x_2,x_3)&=p(x_2)p(x_1,x_2,x_3),  \label{e2}  \\
		p(x_1,x_3)p(x_2,x_3)&=p(x_3)p(x_1,x_2,x_3) . \label{e3}
	\end{align}
To prove the lemma, for such $(X_i,i \in N_3)$, we now prove the following two claims.
	\begin{claim}
		\label{lem2}
		For any two vertices $x_1\in \mc{X}_1,x_2\in\mc{X}_2$, if they are  adjacent to the same vertex in $\mc{X}_3$, then these two vertices are also adjacent.
	\end{claim}
	\begin{proof}[\rv{Proof of Claim $\ref{lem2}$}]
		For any $x_1\in\mc{X}_1$, $x_2\in\mc{X}_2$  adjacent to the same vertex $x_3\in \mc{X}_3$, by definition of $G$, $p(x_1,x_3)$ and $p(x_2,x_3)$ are positive. By \eqref{e3}, we have $p(x_1,x_2,x_3)>0$ which implies $p(x_1,x_2)>0$  and so $x_1$ is adjacent to $x_2$. 
	\end{proof}
	\begin{claim}
		\label{lem3}
		For any two vertices  $x_1,x'_1\in\mc{X}_1$, if they are adjacent to a common vertex in $\mc{X}_2$ or $\mc{X}_3$, then each $y\in\mc{X}_2 $ or $\mc{X}_3$, either it is adjacent to both $x_1$ and $x'_1$, or it is adjacent to none of them. 
	\end{claim}
	\begin{proof}[\rv{Proof of Claim $\ref{lem3}$}]
		Assume there exists two vertices $x_1$, 
                $x_1'\in\mc{X}_1 $ with  $x_1\neq x_1'$ and they are adjacent to $x_2\in\mc{X}_2$, then $p(x_1,x_2)>0$ and $p(x_1',x_2)>0$. As
		\begin{equation}
			p(x_1,x_2)=\sum_{x_3}p(x_1,x_2,x_3) \nonumber
		\end{equation} 
		there exists $x_3\in\mc{X}_3$ such that $p(x_1,x_2,x_3)>0$. Thus both $x_1$ and $x_2$ are adjacent  to $x_3$. Since $x_1'$ is adjacent to $x_2$, by Claim \ref{lem2}, $x_1'$ is adjacent to $x_3$. For any vertex $y$ adjacent to $x_1'$, by Claim \ref{lem2}, $y$ is adjacent to $x_2$ or $x_3$. As $x_1$ is adjacent to $x_2$ and $x_3$, we conclude that $y$ is adjacent to $x_1$. By symmetry, $x_1'$ is adjacent to all vertices adjacent to $x_1$. 
	\end{proof}
	Now we prove that each connected component is a complete  tripartite graph. In a connected component, for any two vertices $x_i\in \mc{X}_i,x_j\in\mc{X}_j$, $i,j\in N_3,i\neq j$, there exists a path from $x_i$ to $x_j$ and  the length of the path is denoted by $n$. We prove that $x_i$ and $x_j$ are adjacent by induction on $n$. First, it is obviously ture for $n=1$ and it holds  for $n=2$ by Claim \ref{lem2}. Assume it is ture for some positive integers $k-1$ and $k$ where $k\geq 2$. Now consider $n=k+1$. If the vertex $y$ adjacent to $x_j$ on the path is not in $\mc{X}_i$, by induction hypothesis on $k$, we see that  $x_i$ is adjacent to $y$. Then by Lemma \ref{lem2}, $x_i$  is adjacent to $x_j$. If $y$ is in $\mc{X}_i$, the vertex $z(z\neq x_j)$ adjacent to $y$ on the path is adjacent to $x_i$ by induction on $k-1$. Then by Claim \ref{lem3}, $x_i$ is adjacent to $x_j$.
	
	For $p(x_1,x_2,x_3)>0$, if the probability mass function satisfies $p(x_1)=p(x_2)=p(x_3)$, then by \eqref{e1}-\eqref{e3},
	\begin{equation}
		p(x_1,x_2)=p(x_1,x_3)=(x_2,x_3) ,\label{e4}
	\end{equation}
	which implies that  the probability mass of vertices and edges are equal in each triangle. Since each connected component is a complete  tripartite graph, we conclude that the probability of vertex and edge are equal in each connected component. Then  by \eqref{e1}, the probability of the  triangles are all equal in each connected component. The proof is accomplished.
\end{proof}

      \subsection{All-entropic faces}
 \label{A}
\begin{theorem}
  \label{rk1}
  For $F=(E_1,E_2)$, where distinct $E_i, i=1,2$ contains a rank-$1$
  matroid, i.e., the first $4$ types of extreme rays, any $\h\in F$ is entropic.
\end{theorem}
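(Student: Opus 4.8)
The plan is to realise every polymatroid lying in $F$ directly as the entropy function of an explicitly constructed random vector, so that Theorem~\ref{rk1} follows with no coloring argument at all. First I would record the shape of the two generators. Since each $E_i$, $i=1,2$, contains a rank-$1$ matroid, and every rank-$1$ matroid on $N_4$ is determined by its set of non-loops $\gamma\subseteq N_4$ (a nonempty set, all non-loops being parallel), the minimal integer polymatroid in $E_i$ is the rank function $\br_{\gamma_i}$ given by
\begin{equation*}
  \br_{\gamma}(A)=1 \ \text{ if } A\cap\gamma\neq\emptyset, \qquad \br_{\gamma}(A)=0 \ \text{ if } A\cap\gamma=\emptyset, \qquad A\subseteq N_4 .
\end{equation*}
By the V-representation of $F$, every $\h\in F$ is of the form $\h=a\br_{\gamma_1}+b\br_{\gamma_2}$ with $a,b\ge 0$.

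The key step is the claim that $c\,\br_{\gamma}\in\Gamma^*_4$ for every $c\ge 0$ and every nonempty $\gamma\subseteq N_4$. To see this, fix a discrete random variable $U$ with $H(U)=c$ (such a variable exists for any $c\ge 0$, e.g.\ by taking a distribution on a finite alphabet large enough that its maximal entropy exceeds $c$ and applying the intermediate value theorem), and define $(X_i:i\in N_4)$ by setting $X_i=U$ for $i\in\gamma$ and letting $X_i$ be a fixed constant for $i\notin\gamma$. For any $A\subseteq N_4$: if $A\cap\gamma\neq\emptyset$ then $X_A$ and $U$ determine each other, so $H(X_A)=c$; if $A\cap\gamma=\emptyset$ then $X_A$ is constant, so $H(X_A)=0$. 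Hence the entropy function of $(X_i:i\in N_4)$ is exactly $c\,\br_{\gamma}$, which is therefore entropic.

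Finally, given $\h=a\br_{\gamma_1}+b\br_{\gamma_2}\in F$, the previous step gives $a\br_{\gamma_1}\in\Gamma^*_4$ and $b\br_{\gamma_2}\in\Gamma^*_4$, and Lemma~\ref{lem} yields $\h=a\br_{\gamma_1}+b\br_{\gamma_2}\in\Gamma^*_4$. Since $\h\in F$ was arbitrary, $F\subseteq\Gamma^*_4$, i.e.\ every $\h\in F$ is entropic. I expect no genuine obstacle here; the only point requiring (routine) care is the verification that the constructed random vector has entropy function $c\,\br_{\gamma}$, which rests on the elementary facts that adjoining constant coordinates does not change entropy and that $X_A$ recovers $U$ as soon as $A$ meets $\gamma$. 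One could also avoid Lemma~\ref{lem} entirely by combining two independent realizations of this kind, with independent underlying variables of entropies $a$ and $b$, but invoking the lemma keeps the argument shortest.
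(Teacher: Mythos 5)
Your proposal is correct and follows essentially the same route as the paper: the paper's proof is the one-line observation that each ray $E_i$ is entirely entropic and then invokes Lemma~\ref{lem} to conclude that every conic combination $a\br_1+b\br_2$ is entropic. Your explicit construction ($X_i=U$ for $i\in\gamma$, constant otherwise, with $H(U)=c$) merely fills in the detail the paper leaves implicit, namely why the whole of each rank-$1$ ray lies in $\Gamma^*_4$.
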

\begin{proof}
  The fact that the whole $E_i$ is entropic, together with Lemma
  \ref{lem} immediately implies the theorem. 
\end{proof}

Theorem \ref{rk1} is also a corollary of \cite[Proposition 4]{chen2015marginal} or \cite[Theorem 2]{matuvs2016entropy}. 
%The faces
%characterized in this theorem are the ``black'' faces listed in the first $4$ columns of
%Table \ref{table:1}.

% Note that for the $15$ extreme rays that contains the rank-$1$
% matroid, i.e., the first $4$ types of extreme rays, the whole extreme
% rays is entropic. Thus, by \cite[lemma 15.3]{yeung2008information},
% the whole faces $(E_1,E_2)$ with $E_1, E_2$ such types of extreme rays are
% entropic. They are ``black'' faces listed in the first $4$ columns of
% Table \ref{table:1}.
\subsection{ Entropy functions on the faces extended from $2$-dimensional faces of $\Gamma_3$}
\label{B}
%The characterization of $3$ ``purple'' faces in Table \ref{table:1}
%can be obtain by the characterization of $2$-dimensional face
%$(U_{2,3}, U^{12,3}_{1,2})$ of $\Gamma_3$, which has been done in \cite{matus2005piecewise}
%. Thus, the entropy functions on
%them have the same shape as those on $(U_{2,3}, U^{12,3}_{1,2})$,
%which is depicted in Figure \ref{fig1}. 
In this subsection, we characterize those faces which can be obtained from the characterizations of $2$-dimensional faces of $\Gamma_3$.
Thm.\ref{rk2} characterize  $3$ faces in Table \ref{table:1} which can be obtained by  the characterization of $2$-dimensional face
$(U_{2,3}, U^{12,3}_{1,2})$ of $\Gamma_3$. It has been done in \cite{matus2005piecewise}
. Thus, the entropy functions on
them have the same shape as those on $(U_{2,3}, U^{12,3}_{1,2})$,
which is depicted in Figure \ref{fig1}.
\begin{theorem}
  \label{rk2}
  For $F=(U_{2,3}^{123},U_{1,2}^{12}),
  (\mc{W}_{2}^{34},U_{1,2}^{12})$, \rv{and} $ (\mc{W}_{2}^{14},U_{1,3}^{124})$,  $\h=(a,b)\in F$ is entropic
  if and only if $a+b\ge \log \lceil 2^a \rceil$.\footnote{We take the
  base of the logarithm be $2$ for computing the entropy.}
  %They are depicted in Figure. \ref{fig1}
\end{theorem}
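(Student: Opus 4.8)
The plan is to deduce all three cases from Mat\'u\v{s}'s characterization \cite{matus2005piecewise} of the $2$-dimensional face $F_0\triangleq(U_{2,3},U^{12,3}_{1,2})$ of $\Gamma_3$, on which the point $(a,b)$ is entropic if and only if $a+b\ge\log\lceil 2^a\rceil$. First I would write down, for each of the three faces $F=(E_1,E_2)$, the rank functions $\br_1,\br_2$ of the minimal integer polymatroids of $E_1,E_2$ (using in particular that $\mc{W}_2^\alpha$ has the parallel pair $\alpha$ and that each of the two elements of $\alpha$ together with the remaining two ground-set elements forms a $U_{2,3}$), and tabulate $\h=a\br_1+b\br_2$ on all $16$ subsets of $N_4$. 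The point of this bookkeeping is twofold: (i) the marginal of $\h$ on $\{1,2,3\}$ is in every case exactly the polymatroid $a\,U_{2,3}+b\,U^{12,3}_{1,2}$, i.e.\ the point $(a,b)$ of $F_0$; and (ii) element $4$ is degenerate over $\{1,2,3\}$ in $\h$ itself, namely $\h(A\cup\{4\})=\h(A)$ for all $A\subseteq\{1,2,3\}$ when $F=(U_{2,3}^{123},U_{1,2}^{12})$ (so $4$ is a loop of $\h$), $\h(A\cup\{4\})=\h(A\cup\{3\})$ when $F=(\mc{W}_2^{34},U_{1,2}^{12})$ (so $4$ is parallel to $3$ in $\h$), and $\h(A\cup\{4\})=\h(A\cup\{1\})$ when $F=(\mc{W}_2^{14},U_{1,3}^{124})$ (so $4$ is parallel to $1$ in $\h$).

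Granting (i) and (ii), the theorem follows in both directions. For necessity: if $\h=(a,b)\in F$ is entropic and $(X_1,X_2,X_3,X_4)$ is a characterizing random vector, then $(X_1,X_2,X_3)$ characterizes the marginal of $\h$ on $\{1,2,3\}$, which by (i) is the point $(a,b)$ of $F_0$; Mat\'u\v{s}'s theorem then yields $a+b\ge\log\lceil 2^a\rceil$. For sufficiency: assume $a+b\ge\log\lceil 2^a\rceil$; by Mat\'u\v{s}'s theorem there is a random vector $(X_1,X_2,X_3)$ characterizing the point $(a,b)$ of $F_0$. Extend it to $(X_1,X_2,X_3,X_4)$ by letting $X_4$ be a deterministic constant in the first case, $X_4:=X_3$ in the second, and $X_4:=X_1$ in the third. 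A direct computation of all $16$ joint entropies of the extended vector, using only $H(X_4)=0$, resp.\ $H(X_S,X_4)=H(X_S,X_3)$, resp.\ $H(X_S,X_4)=H(X_S,X_1)$ for $S\subseteq\{1,2,3\}$, together with property (ii), shows that it characterizes $\h$; hence $\h$ is entropic.

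The substantive input is Mat\'u\v{s}'s theorem, which is being invoked. The one thing that needs care is (ii) in the two $\mc{W}_2$ cases: one must verify that element $4$ is parallel to the designated element in $\h$ \emph{itself}, i.e.\ in both summands $\br_1$ and $\br_2$ simultaneously --- it is parallel in the $\mc{W}_2$ summand because that matroid has a parallel pair, and in the $U_{1,k}$ summand because the two relevant indices lie in the same parallel class (or both are loops) of that uniform matroid --- so that the copy construction reproduces $\h$ on every subset of $N_4$, not merely on subsets of $\{1,2,3\}$. Once the three tables for $\h$ are written out this is purely mechanical, so I do not anticipate a genuine obstacle; the only real subtlety is notational, namely keeping straight which element of $\{1,2,3\}$ the loop/copy is attached to in each of the three faces.
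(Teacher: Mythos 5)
Your proposal is correct and follows essentially the same route as the paper's proof: necessity by restricting $\h$ to $N_3$ and invoking Mat\'u\v{s}'s characterization of $(U_{2,3},U^{12,3}_{1,2})$, and sufficiency by extending the characterizing vector $(X_1,X_2,X_3)$ with $X_4$ a constant, $X_4=X_3$, or $X_4=X_1$ in the three respective cases. Your extra care in checking that element $4$ is a loop or parallel to the designated element in \emph{both} summands (so the extension matches $\h$ on all of $2^{N_4}$) is exactly the verification the paper leaves as ``it can be checked.''
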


\afterpage{
	\clearpage
	\begin{longtable}{p{1.7cm}|c|c|c|c}
		\caption{Entropy functions on two-dimensional faces of $\Gamma_4$ }
		\label{long} \\
		\hline Subsection & Theorem  &Two-dimensional faces&Entropy region  & Figure \\
		\hline
		\hline
		\endfirsthead
		\hline Subsection & Theorem & Two-dimensional faces&Entropy region & Figure\\
		\hline
		\hline
		\endhead
		\ref{A} &   Thm.\ref{rk1}   & \makecell{ \\ $(U^{1}_{1,1},U^{2}_{1,1})$, $(U^{12}_{1,2},U^{1}_{1,1})$, $(U^{12}_{1,2},U^{3}_{1,1})$, \\ $(U^{12}_{1,2},U^{13}_{1,2})$, $(U^{12}_{1,2},U^{34}_{1,2})$, $(U^{123}_{1,3},U^{1}_{1,1})$, \\ $(U^{123}_{1,3},U^{4}_{1,1})$, $(U^{123}_{1,3},U^{12}_{1,2})$, $(U^{123}_{1,3},U^{14}_{1,2})$, \\ $(U^{123}_{1,3},U^{124}_{1,3})$, $(U_{1,4},U^{1}_{1,1})$, $(U_{1,4},U^{12}_{1,2})$, \\ $(U_{1,4},U^{123}_{1,3})$.} &\makecell{ $\{a\br_1+b\br_2:$\\$a\geq 0, b\geq0\}$ }  & \makecell[l]{\\ \includegraphics[trim=10 10 0 10,scale=1]{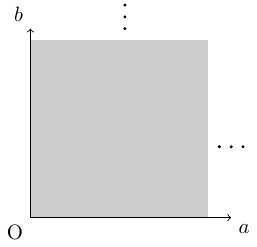}}  \\
		& &     & \\
		\hline
		\ref{B} & Thm.\ref{rk2} &\makecell{$(U^{123}_{2,3},U^{12}_{1,2})$,\\ $(\mc{W}_{2}^{34},U_{1,2}^{12})$,\\ $(\mc{W}_{2}^{14},U_{1,3}^{124})$. } &\makecell{ $\{a\br_1+b\br_2:$\\$ a+b\geq \log v   $ and \\ $ \log(v-1)\leq a \leq \log v$  \\
			for positive integer $ v$\} }& \makecell[l]{\includegraphics[trim=10 0 0 0,scale=1]{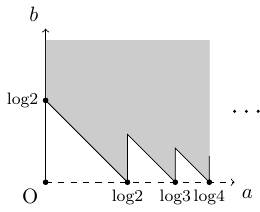}}
		\\
		\cline{2-5}
		&\makecell{\\  Thm.\ref{rk3}} & \makecell{ \\ $(U_{2,3}^{123},U_{1,1}^{1}), (U_{2,3}^{123},U_{1,1}^{4}),
			(U_{2,3}^{123},U_{1,2}^{14}), $\\ $  (\mc{W}_{2}^{14},U_{1,1}^{1}),
			(\mc{W}_{2}^{34},U_{1,1}^{1}), (\mc{W}_{2}^{14},U_{1,2}^{14})$,\\
			$(\mc{W}_{2}^{24},U_{1,2}^{14})$.}  &\makecell{$\{a\br_1+b\br_2:$\\$a=\log v$ for \\some positive \\integer $v$, $b\geq 0$\}  }& \makecell[l]{\includegraphics[trim=3 0 0 0,scale=1]{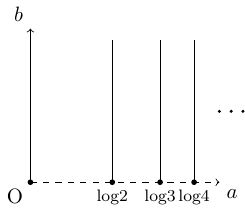}} \\
		\cline{2-5}
		&\makecell{\\  Thm.\ref{rk18}} &\makecell{ $(\hat{U}_{2,5}^{1},U_{1,1}^{1})$,\\ $(\hat{U}_{2,5}^{1},U_{1,1}^{2})$, \\ $(\hat{U}_{2,5}^{1},U_{1,2}^{12})$ }&\makecell{$\{a\br_1+b\br_2:$\\$a=\log v$ for \\some positive \\integer $v$, $b\geq 0$\}  }& \makecell[l]{\includegraphics[trim=3 0 0 0,scale=1]{fig_rk4.pdf}} 
		\\         
		\cline{2-5}
		& \makecell{  Thm.\ref{rk19}}& \makecell{$(\hat{U}_{2,5}^{1},U_{1,3}^{123})$} &\makecell{ $\{a\br_1+b\br_2:$\\$ a+b\geq \log v   $ and \\ $ \log(v-1)\leq a \leq \log v$  \\
			for positive integer $ v$\} }& \makecell[l]{\includegraphics[trim=10 0 0 0,scale=1]{fig_rk2.pdf}} \\ 
		\cline{2-5}
		& \makecell{  Thm.\ref{rk21}}& \makecell{$(\hat{U}_{3,5}^{1},U_{1,1}^{1})$, \\ $(\hat{U}_{3,5}^{1},U_{1,1}^{2})$, \\  $(\hat{U}_{3,5}^{1},U_{1,2}^{12})$} &\makecell{$\{a\br_1+b\br_2:$\\$a=\log v$ for \\some positive \\integer $v$, $b\geq 0$\}  }& \makecell[l]{\includegraphics[trim=3 0 0 0,scale=1]{fig_rk4.pdf}}\\
		\hline
		\multirow{20}{*}{ \makecell{\ref{C}} }& \makecell{ Thm.\ref{rk4}} &\makecell{ $( U_{2,3}^{123} , U_{1,3}^{124})$ }& \makecell{$\{a\br_1+b\br_2:$\\$a=\log v$ for \\some positive \\integer $v$, $b\geq 0$\}  }  &\makecell[l]{\includegraphics[trim=3 0 0 0,scale=1]{fig_rk4.pdf}} \\
		\cline{2-5}
		& \makecell{ Thm.\ref{rk5}} & \makecell{$(U_{2,3}^{123  } , U_{2,3}^{124  } ) $}& \makecell{$\{a\br_1+b\br_2:$\\$a=\log v_1, b=\log v_2$    \\for some positive\\ integer $v_1$, $v_2$\} }& \makecell[l]{\\ \includegraphics[trim=10 0 0 0,scale=1]{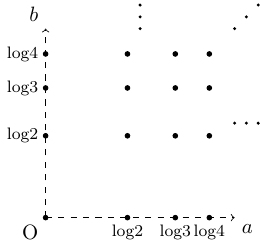}} \\
		\cline{2-5}
		& \makecell{ Thm.\ref{rk7}} & \makecell{$  ( U_{2,3}^{123} , U_{1,4}  ) $}  &\makecell{ $\{a\br_1+b\br_2:$\\$a\geq 0,b>0$ or\\ $(a,b)=(\log{v},0)$\\
			for  positive \\integer $k$\} } &  \makecell[l]{\\ \includegraphics[trim=5 0 0 0,scale=1]{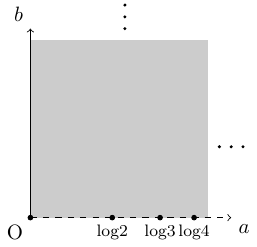}}\\
		\hline
		\multirow{1}{*}{ \makecell{\ref{D}} }& \makecell{ Thm.\ref{rk6}} &\makecell{$ ( \mc{W}_{2}^{12  } , U_{2,3}^{134  })$ }&\makecell{ $\{a\br_1+b\br_2:$\\$ a+b=\log{v}$,
			\\$a=H(\bm{\alpha})
			$, where \\ integer $v\ge 1$ and\\ $\bm{\alpha}$ is
			a  partition of $v$\}   } &  \makecell[l]{\\ \includegraphics[trim=10 0 0 0,scale=0.77]{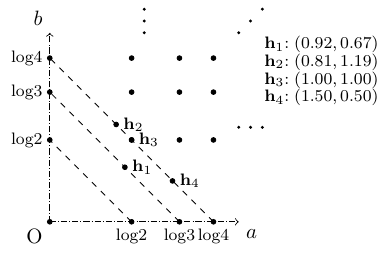}}\\
		\hline
		\ref{F} & \makecell{ Thm.\ref{rk8}} &$(  U_{3,4}  ,U^1_{1,1} ) $&\makecell{$\{a\br_1+b\br_2:$\\$a=\log v$ for \\some positive \\integer $v$, $b\geq 0$\}  } & \makecell[l]{\includegraphics[trim=3 0 0 0,scale=1]{fig_rk4.pdf}}  \\
		\hline
		&  \makecell{ Thm.\ref{rk9}} &$(  U_{3,4}  ,U^{12}_{1,2} )$ & \makecell{$\{a\br_1+b\br_2:$\\$a=\log v$ for \\some positive \\integer $v$, $b\geq 0$\}  } &\makecell[l]{\includegraphics[trim=3 0 0 0,scale=1]{fig_rk4.pdf}}  \\
		\cline{2-5}
		&  \makecell{ Thm.\ref{rk10}} &$(  U_{3,4}  ,U^{123}_{2,3} )$ & \makecell{$\{a\br_1+b\br_2:$\\$a+b=\log v$ for \\some positive\\ integer $v$\}  } & \makecell[l]{\\ \includegraphics[trim=10 0 0 0,scale=1]{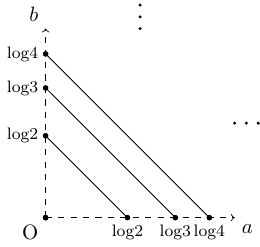}}  \\
		\cline{2-5}
		&  \makecell{ Thm.\ref{rk11}} &$(  U_{3,4}  ,U^{123}_{1,3} )$ &\makecell{$\{a\br_1+b\br_2:$\\$a=\log v$ for \\some positive\\ integer $v$, $b\geq 0$\}  } &\makecell[l]{\includegraphics[trim=3 0 0 0,scale=1]{fig_rk4.pdf}}  \\
		\cline{2-5}
		&  \makecell{ Thm.\ref{rk12}} &$(  U_{3,4}  ,U_{1,4} )$ &\makecell{$\{a\br_1+b\br_2:$\\$a\geq 0,b>0$ or\\ $(a,b)=(\log{v},0)$\\
			for  positive integer $v$\} } & \makecell[l]{\\ \includegraphics[trim=5 0 0 0,scale=1]{fig_rk7.pdf}}\\
		\hline
		\ref{G} &  \makecell{ Thm.\ref{rk22}} &$(\hat{U}_{3,5}^{1},U_{1,3}^{234})$ &\makecell{$\{a\br_1+b\br_2:$\\$a=\log v$ for \\some positive \\ integer $v$, $b\geq 0$\}  } & \makecell[l]{\includegraphics[trim=3 0 0 0,scale=1]{fig_rk4.pdf}}  \\
		\hline
		& \makecell{ Thm.\ref{rk23}} &$ (\hat{U}_{3,5}^{1},U_{2,3}^{123})$ &\makecell{$\{a\br_1+b\br_2:$\\$ a=\log v_1, b=\log v_2$\\ for some positive \\integer $v_1$, $v_2$\} }  & \makecell[l]{\\ \includegraphics[trim=0 0 0 0,scale=1]{fig_rk5.pdf}} \\
		\cline{2-5}
		& \makecell{ Thm.\ref{rk24}} &$(\hat{U}_{3,5}^{1},\mc{W}_{2}^{23})$ & \makecell{ $\{a\br_1+b\br_2:$\\$a+b=\log{v}$ for \\some integer $v\ge 1$\} }& \makecell[l]{\\ \includegraphics[trim=0 0 0 0,scale=1]{fig_rk10.pdf}}   \\
		\hline 
		\ref{J}  &  \makecell{ Thm.\ref{rk31}}&  \makecell{\\ $(V^{12}_8,  U^{1}_{1,1})$, $(V^{12}_8,  U^{3}_{1,1})$, $(V^{12}_8,  U^{12}_{1,2})$\\ $(V^{12}_8,  U^{134}_{1,3})$, $(V^{12}_8,  U_{1,4})$ \\ \quad}  &\makecell{ $\{a\br_1+b\br_2: $\\$ a= 0, b\geq 0$}  &\makecell[l]{\\ \includegraphics[trim=0 0 0 0,scale=1]{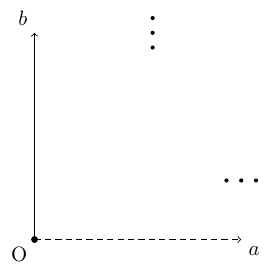}} \\
		\cline{2-5}
		&\makecell{ Thm.\ref{rk31}}&  \makecell{\\ $(V^{12}_8,  U^{123}_{2,3})$,\\ $(V^{12}_8,  U_{3,4})$ \\ \quad}  &\makecell{ $\{a\br_1+b\br_2: $\\$ a= 0, b=\log v$ for\\ some integer $v\ge 1\}$ }  &\makecell[l]{\\ \includegraphics[trim=0 0 0 0,scale=1]{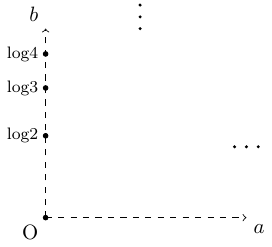}} \\
		\hline
	\end{longtable} 
}
\begin{proof}
Let $F$ be any of the $3$ faces and $\h\in F$. Let $\h=a\br_1+b\br_2$,
where $\br_1$ and $\br_2$ are the rank functions of the matroids on
the two extreme rays of the face, respectively. Restricting $\h$ on
$N_3$, we obtain $\h'=a\br'_1+b\br'_2$, where $\br'_i, i=1,2$ are the restriction
of $\br_i$ on $N_3$. It can be checked that they are 
the rank functions of $U_{2,3}$ and $U^{12,3}_{1,2}$,
respectively. Thus, $\h'\in (U_{2,3}, U^{12,3}_{1,2})$ and 
$a+b\ge \log \lceil 2^a \rceil$ by \cite[Theorem
1]{matus2005piecewise}. 

Now we prove the ``if '' part of the theorem. Let $\h=a\br_1+b\br_2\in
F$ and $a+b\ge \log \lceil
2^a \rceil$. Let $\h'$ be the restriction of $\h$ on $N_3$. Therefore,
$\h'\in  (U_{2,3}, U^{12,3}_{1,2})$ and let $(X_1,X_2,X_3)$ be its
characterizing random vector. Now for
\begin{itemize}
\item $F=(U_{2,3}^{123},U_{1,2}^{12})$, let $X_4$ be a
  constant;
\item $F=(\mc{W}_{2}^{34},U_{1,2}^{12})$, let $X_4=X_3$;
\item $F=(\mc{W}_{2}^{14},U_{1,3}^{124})$, let $X_4=X_1$.
\end{itemize}
Then $(X_i,i\in N_4)$ is the characterizing random vector of $\h$,
which implies that $\h$ is entropic.
\end{proof}
\vspace{-0.8cm}
%\begin{figure}[H]
%	\centering
%	\begin{tikzpicture}[scale = 2]
%		 %网格线
%		\draw [->] (0,0)--(1.6,0) node[below right] { $a$};
%		\draw [->] (0,0)--(0,1.3) node[above left] {$b$};
%		\node[below left] at (0,0) {O};  		
%		\fill (1.6,0.6) circle (0.4pt);  
%		\fill (1.7,0.6) circle (0.4pt);  
%		\fill (1.8,0.6) circle (0.4pt);  
%		
%		\draw [gray!40,fill=gray!40] (0,0.693)--(0.693,0)--(0.693,1.098-0.693)--(1.098,0)--(1.098,1.386-1.098)--(1.386,0)--(1.386,1.2)--(0.006,1.2)--(0.006,0.693);
%		\draw[black] (0,0.693)--(0.693,0)--(0.693,1.098-0.693)--(1.098,0)--(1.098,1.386-1.098)--(1.386,0)--(1.386,1.609-1.386);
%		\foreach \x in{0,0.693,1.098,1.386}
%		{
%			\draw[fill] (\x,0) circle (.02 );  
%		}
%		\draw[fill] (0,0.693) circle (.02); 
%		\node[below,font=\fontsize{8}{6}\selectfont] at (0.693,0) {log2};  
%		\node[below,font=\fontsize{8}{6}\selectfont] at (1.098,0) {log3};
%		\node[below,font=\fontsize{8}{6}\selectfont] at (1.386,0) {log4};
%		\node[left,font=\fontsize{8}{6}\selectfont] at (0,0.693) {log2};		
%	\end{tikzpicture}
%	\captionsetup{justification=centering}
%	\caption{The faces $(U_{2,3}^{123},U_{1,2}^{12})$, $ (\mc{W}_{2}^{34},U_{1,2}^{12})$ and $(\mc{W}_{2}^{14},U_{1,3}^{124})$}
%	\label{fig1}
%	
%\end{figure}
\begin{figure}[H]
	\centering
	\includegraphics[scale=1]{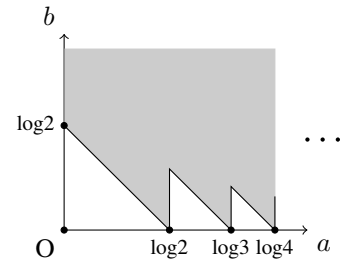}
	\captionsetup{justification=centering}
	\caption{The faces $(U_{2,3}^{123},U_{1,2}^{12})$, $ (\mc{W}_{2}^{34},U_{1,2}^{12})$ and $(\mc{W}_{2}^{14},U_{1,3}^{124})$}
	\label{fig1}
	
\end{figure}
%For ``blue'' faces in Table \ref{table:1}, the entropy region on
%them have the same shape as $2$-dimensional face $(U_{2,3},U^{1,3}_{1,1}
%)$ of $\Gamma_3$, which has been characterized in \cite{chen2012characterizing}.
For the $7$ faces characterized in Thm.\ref{rk3}, the entropy region on
them have the same shape as $2$-dimensional face $(U_{2,3},U^{1,3}_{1,1}
)$ of $\Gamma_3$, which has been characterized in \cite{chen2012characterizing}.

%and plotted in Fig.\ref{fig2}.
% The case below can be checked with the similar method used in
% \cite{chen2012characterizing}, and only one
% face$(U_{2,3}^{123},U_{1,3}^{124})$ need to characterize with
% different method.

\begin{theorem}
  \label{rk3}
  For $F=(U_{2,3}^{123},U_{1,1}^{1})$, $(U_{2,3}^{123},U_{1,1}^{4})$, $
  (U_{2,3}^{123},U_{1,2}^{14})$, $  (\mc{W}_{2}^{14},U_{1,1}^{1})$, $
  (\mc{W}_{2}^{34},U_{1,1}^{1})$, $(\mc{W}_{2}^{14},U_{1,2}^{14})$ and
  $(\mc{W}_{2}^{24},U_{1,2}^{14})$, $\mathbf{h}=(a,b) \in F $ is entropic if
  and only if $a=\log{v}$ for integer $v\ge 1$.
\end{theorem}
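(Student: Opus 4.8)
The plan is to mimic the proof of Theorem~\ref{rk2}: settle the ``only if'' direction by restricting $\h$ to the $3$-element set $\{1,2,3\}$ and quoting the characterization of the face $(U_{2,3},U^{1,3}_{1,1})$ of $\Gamma_3$ from \cite{chen2012characterizing}, and settle the ``if'' direction by an explicit construction together with Lemma~\ref{lem}.

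For the ``only if'' direction, write $\h=a\br_1+b\br_2$ with $\br_1,\br_2$ the minimal integer polymatroids on the two extreme rays of $F$, and let $\h'\in\mc H_3$ be the restriction of $\h$ to $\{1,2,3\}$. For each of the seven faces one checks that $\br_1|_{\{1,2,3\}}\cong U_{2,3}$: this is immediate when $\br_1=U_{2,3}^{123}$, and when $\br_1$ is $\mc W_2^{14}$, $\mc W_2^{24}$ or $\mc W_2^{34}$ it holds because the simplification of $\mc W_2^{\alpha}$ is $U_{2,3}$ with $\alpha$ its unique non-trivial parallel class, so deleting the element of $\alpha$ outside $\{1,2,3\}$ leaves three points in general position. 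Likewise $\br_2|_{\{1,2,3\}}$ equals $U^{1,3}_{1,1}$ when $\br_2\in\{U_{1,1}^1,U_{1,2}^{14}\}$ and equals the zero polymatroid when $\br_2=U_{1,1}^4$. Hence in every case $\h'=a\,U_{2,3}+c\,U^{1,3}_{1,1}$ for some $c\in\{0,b\}$, so $\h'$ is a point of the face $(U_{2,3},U^{1,3}_{1,1})$ of $\Gamma_3$ whose $U_{2,3}$-coordinate is $a$. Since $\h'$ is the entropy function of the subvector $(X_1,X_2,X_3)$ of a characterizing random vector of $\h$, it is entropic, and \cite{chen2012characterizing} then forces $a=\log k$ for an integer $k>0$.

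For the ``if'' direction, suppose $a=\log k$ with $k>0$ an integer. By Lemma~\ref{lem} it is enough to show that $a\br_1$ and $b\br_2$ are separately entropic. For $a\br_1$, take $Y_1,Y_2$ i.i.d.\ uniform on $\SetZ_k$ and put $Y_3\equiv Y_1+Y_2\pmod k$; then $(Y_1,Y_2,Y_3)$ characterizes $a\,U_{2,3}$, and assigning $Y_1,Y_2,Y_3$ to the three points of the simplification of $\br_1$ --- giving the two members of a parallel pair the same $Y_j$, and a constant to a loop --- yields a random vector on $N_4$ with entropy function $a\br_1$. For $b\br_2$, since $\br_2$ has rank at most one, a single random variable $W$ with $H(W)=b$ placed on the non-loop element(s) of $\br_2$ (and constants elsewhere) realizes $b\br_2$. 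Both are entropic, hence so is $\h=a\br_1+b\br_2$ by Lemma~\ref{lem}.

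The argument is largely bookkeeping; the point that needs care is the case-by-case check that restriction to $\{1,2,3\}$ reproduces the $\Gamma_3$ face $(U_{2,3},U^{1,3}_{1,1})$ for all seven faces, and in particular that for $F=(U_{2,3}^{123},U_{1,1}^4)$ the second coordinate degenerates to $0$, so $\h'$ sits on the extreme ray $U_{2,3}$, which \cite{chen2012characterizing} still covers as the slice $b=0$. The realizability of $a\br_1$ at scale $a=\log k$ relies on every positive integer lying in the probabilistically characteristic set of $U_{2,3}$ --- and hence of the wheels $\mc W_2^{\alpha}$, whose simplifications are $U_{2,3}$ --- while a rank-$\le 1$ matroid is realizable at any scale; neither fact is deep, which is why these ``blue'' faces reduce to the already-characterized $\Gamma_3$ case.
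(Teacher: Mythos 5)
Your proposal is correct and follows essentially the same route as the paper: the ``only if'' direction restricts $\h$ to $\{1,2,3\}$ and invokes the characterization of the $\Gamma_3$ face $(U_{2,3},U^{1,3}_{1,1})$ from \cite{chen2012characterizing} (with the degenerate case $(U_{2,3}^{123},U_{1,1}^{4})$ landing on the ray $U_{2,3}$), and the ``if'' direction is an explicit construction combined via Lemma~\ref{lem}. The only cosmetic difference is that you realize $a\br_1$ and $b\br_2$ separately on all of $N_4$ and add them, whereas the paper first builds the characterizing vector of $\h'$ on $N_3$ and then appends $X_4$ case by case; these are the same construction in different packaging.
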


\begin{figure}[H]
	\centering
     \includegraphics[scale=1]{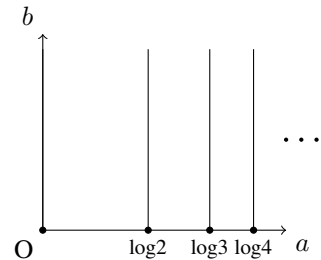}
	\captionsetup{justification=centering}
	\caption{The faces$(U_{2,3}^{123},U_{1,1}^{1}), (U_{2,3}^{123},U_{1,1}^{4}),
		(U_{2,3}^{123},U_{1,2}^{14}),   $ $  (\mc{W}_{2}^{14},U_{1,1}^{1}),
		(\mc{W}_{2}^{34},U_{1,1}^{1}), (\mc{W}_{2}^{14},U_{1,2}^{14})$ and
		$(\mc{W}_{2}^{24}, U_{1,2}^{14})$}
	\label{fig2}
\end{figure}

\begin{proof}
For $F=(U_{2,3}^{123},U_{1,1}^{4})$ and $\h\in F$,  restricting $\h$ on
$N_3$, we obtain $\h'=a\br'_1$, where $\br'_1$ is the rank function of
$U_{2,3}$, the matroid on  the extreme ray of $\Gamma_3$. Thus, if
$\h$ is entropic, then
$a=\log{v}$ for some integer $v> 0$.
Now for any $\h\in F$ with $a=\log{v}$, let $\h'$ be its restriction
on $N_3$ and $(X_1,X_2,X_3)$ be the
characterizing random variable of $\h'$. Let $X_4$ be an arbitrary
random variable independent of $(X_1,X_2,X_3)$ with $H(X_4)=b$. Then 
it can be checked that $(X_i,i\in N_4)$ is the characterizing
random variable of $\h$. Thus $\h$ is entropic.

For $F$ be any of the other $6$ faces and $\h\in F$, let $\h=a\br_1+b\br_2$,
where $\br_1$ and $\br_2$ are the rank functions of the matroids on
the two extreme rays of the face, respectively. Restricting $\h$ on
$N_3$, we obtain $\h'=a\br'_1+b\br'_2$, where $\br'_i=1,2$ is the restriction
of $\br_i$ on $N_3$, and  the rank function of $U_{2,3}$ and $U^{1,3}_{1,1}$,
respectively. Thus, $\h'\in (U_{2,3}, U^{1,3}_{1,1})$ and $a=\log{v}$
for positive $v\in\SetZ$
by \cite[Theorem
1]{chen2012characterizing}.

Now we prove the ``if '' part. Let $\h=a\br_1+b\br_2\in
F$ and $a=\log{v}$ for some integer $v\ge 0$. Let $\h'$ be the restriction of $\h$ on $N_3$. Therefore,
$\h'\in  (U_{2,3}, U^{1,3}_{1,1})$. Note that $\h'=a\br_1+b\br_2$,
where $\br_1$ and $\br_2$ are respectively the rank functions of $U_{2,3}$ and
$U^{1,3}_{1,2}$. Let $(X'_1,X'_2,X'_3)$ be the characterizing random
vector of $a\br_1$, that is, $X'_1$ and $X'_2$ are
independent and both uniformly distributed on $\mathbb{I}_v$, and
$X'_3=X'_1+X'_2\mod v$. Let $(X''_1,X''_2,X''_3)$ be characterizing random
vector of $b\br_2$, that is, $X''_1$ is an arbitrary random variable
with $H(X''_1)=b$, and $X''_2$ and $X''_3$ are both constant. 
Let $X_i=(X'_i,X''_i)$, $i\in N_3$. Then it can be checked that
$(X_1,X_2,X_3)$ is the characterizing random vector of $\h'$.

Now for
\begin{itemize}
\item $F=(U_{2,3}^{123},U_{1,1}^{1})$,  let $X_4$ be a
  constant;
\item $F=(U_{2,3}^{123},U_{1,2}^{14})$, let $X_4=X''_1$;
\item $F=(\mc{W}_{2}^{14},U_{1,1}^{1})$, let $X_4=X'_1$;
\item $F=(\mc{W}_{2}^{34},U_{1,1}^{1})$, let $X_4=X'_3$;
\item $F=(\mc{W}_{2}^{14},U_{1,2}^{14})$, let $X_4=X_1$;
\item $F=(\mc{W}_{2}^{24},U_{1,2}^{14})$, let $X_4=(X'_2,X''_1)$.
\end{itemize} 
Then $(X_i,i\in N_4)$ is the characterizing random vector of $\h$,
thus $\h$ is entropic.
\end{proof}

% Let $(Y_1,Y_2,Y_3)$ be the
% characterizing random vector of the entropy function $\h'\in F'$.

% \begin{itemize}
% \item For $F_4=(U_{2,3}^{123},U_{1,1}^{1})$,  let $Y_4$ be a
% constant, then the entropy function $\h_4$ of $(Y_1,Y_2,Y_3,Y_4)$ is
% in $F_4$;
% \item For $F_5=(U_{2,3}^{123},U_{1,1}^{4})$,  let $Y_4$ be an arbitrary distribution, then the entropy function $\h_5$ of $(Y_1,Y_2,Y_3,Y_4)$ is
% in $F_4$;
% \item For $F_6=(U_{2,3}^{123},U_{1,2}^{14})$,
% \item For $F_7=(\mc{W}_{2}^{14},U_{1,1}^{1})$,
% \item For $F_8=(\mc{W}_{2}^{14},U_{1,1}^{3})$, let $Y_4=Y_1$  then the entropy function $\h_8$ of $(Y_1,Y_2,Y_3,Y_4)$ is in $F_8$;
% \item For $F_9=(\mc{W}_{2}^{14},U_{1,2}^{14})$, let $Y_4=Y_1$  then the entropy function $\h_9$ of $(Y_1,Y_2,Y_3,Y_4)$ is in $F_9$;
% \item For $F_{10}=(\mc{W}_{2}^{12},U_{1,2}^{13})$,
% \end{itemize} 
 
%\begin{itemize}
%	\item $(U_{2,3}^{123},U_{1,1}^{1}),(U_{2,3}^{123},U_{1,1}^{4}),(U_{2,3}^{123},U_{1,2}^{14})$;
%	\item $(\mc{W}_{2}^{12},U_{1,1}^{1}),(\mc{W}_{2}^{12},U_{1,1}^{3})$; 
%    \item	$(\mc{W}_{2}^{12},U_{1,2}^{12}),(\mc{W}_{2}^{12},U_{1,2}^{13})$;
%\end{itemize}

\begin{theorem}
	\label{rk18}
	For   $F=(\hat{U}_{2,5}^{1},U_{1,1}^{1})$, $(\hat{U}_{2,5}^{1},U_{1,1}^{2})$ and $(\hat{U}_{2,5}^{1},U_{1,2}^{12})$, $\h=(a,b)\in F$ is entropic if and only if $a=\log v$ for integer $v\ge 1$.
\end{theorem}
\begin{proof}
%	It can be seen that they are the rank functions of $3$ pairs of matroids below:
	Let $F$ be any of the $3$ faces and $\h\in F$. Let $\h=a\br_1+b\br_2$,
	where $\br_1$ and $\br_2$ are the rank functions of the minimal integer polymatroids on
	the two extreme rays of the face, respectively. Restricting $\h$ on
	$\{2,3,4\}$, we obtain $\h'=a\br'_1+b\br'_2$, where $\br'_i, i=1,2$ are the restriction
	of $\br_i$ on $\{2,3,4\}$. 
	
	By restriction, the above $3$ faces of $\Gamma_4$ are mapped to a subset of $\Gamma_3$ on $\{2,3,4\}$. The face $(\hat{U}_{2,5}^{1},U_{1,1}^{1})$ of $\Gamma_4$ is mapped to the extreme ray $U_{2,3}$ of $\Gamma_3$ on $\{2,3,4\}$, which implies that $a$ can only take $\log v$ for integer $v\ge 1$. 
	For the faces $(\hat{U}_{2,5}^{1},U_{1,1}^{2})$ and $(\hat{U}_{2,5}^{1},U_{1,2}^{12})$, by restriction, they are both mapped to the face  $(U_{2,3},U^{1}_{1,1})$ of $\Gamma_3$ on $\{2,3,4\}$, which implies that $a$ can only take $\log v$ for positive integer $v$ as well.  
%		\begin{itemize}
%		\item $(U^{234,3}_{2,3},M_{\emptyset})$, $M_{\emptyset}$ is the empty matroid;
%		\item $(U^{234,3}_{2,3},U_{1,1}^{2,3})$;
%		\item $(U^{234,3}_{2,3},U_{1,1}^{2,3})$.
%	\end{itemize}
%	Restricting $\h=(a,b)$ on
%	$\{2,3,4\}$, we obtain $\h'=a\br'_1+b\br'_2$, where $\br'_i, i=1,2$ are the restriction
%	of $\br_i$ on $\{2,3,4\}$.  Their rank functions are list blow:
%	\begin{itemize}
%		\item $(U_{2,3},0)$;
%		\item $(U_{2,3},U_{1,1}^{2,3})$;
%		\item $(U_{2,3},U_{1,1}^{2,3})$;
%	\end{itemize} 

%	For  $\h'=(U^{234,3}_{2,3},M_{\emptyset})$, it can be seen as the extreme ray $U_{2,3}$ and then $a$ can take $\log k$ for integer $k>0$. For $\h'=(U^{234,3}_{2,3},U_{1,1}^{2,3})$, it's actually the $2$-dimensional face $F=(U_{2,3},U_{1,1})$ of $\Gamma_3$, which $a$ also take $\log k$ for integer $k>0$ by \cite[Theorem
%	1]{chen2012characterizing}. 
	
	For the ``if'' part, as the whole ray $U_{1,1}^{i} $ and $U_{1,2}^{12}$ are entropic, by Lemma \ref{lem} and Theorem \ref{rk16}, all $a=\log v$ are entropic.
\end{proof}

\begin{theorem}
	\label{rk19}
	For   $F=(\hat{U}_{2,5}^{1},U_{1,3}^{123})$, $\h=(a,b)\in F$ is entropic if and only if $a+b\ge \log \lceil 2^a \rceil$.
\end{theorem}
\begin{proof}
	Let $\h=a\br_1+b\br_2$,
	where $\br_1$ and $\br_2$ are the rank functions of the minimal integer polymatroids on
	the two extreme rays of the face, respectively.
	Restricting $\h$ on
	$\{2,3,4\}$, we obtain $\h'=a\br'_1+b\br'_2$, where $\br'_i, i=1,2$ are the restriction
	of $\br_i$ on $\{2,3,4\}$.  By restriction, The face $(\hat{U}_{2,5}^{1},U_{1,3}^{123})$ of $\Gamma_4$ is mapped to the face $(U_{2,3},U^{12}_{1,2})$ of $\Gamma_3$ on $\{2,3,4\}$, Thus, $\h'\in (U_{2,3}, U^{12}_{1,2})$ and 
	$a+b\ge \log \lceil 2^a \rceil$ by \cite[Theorem
	1]{matus2005piecewise}.
	
	For the ``if '' part , we only need to show that $\h$ with $a+b=\log v$ is entropic for each positive integer $v$, then the theorem is implied by Lemma \ref{lem} and the fact that $b \br_2$ is entropic for all $b\geq0$. Let $X_2$ be uniformly distributed on  $\mathbb{I}_v$. Let $X_2$ and $X_4$ be independent, and $X_4$ be distributed on  $\mathbb{I}_v$ such that $H(X_4)=a$.
	 Let $X_1=vX_2+X_4$, $X_3=X_2+X_4 \mod v$. It can be checked that the entropy function of such constructed $(X_i,i\in N_4)$ is $(a,\log v-a)$.
\end{proof}

\begin{theorem}
	\label{rk21}
	For   $F=(\hat{U}_{3,5}^{1},U_{1,1}^{1})$, $(\hat{U}_{3,5}^{1},U_{1,1}^{2})$, or $(\hat{U}_{3,5}^{1},U_{1,2}^{12})$,  $\h=(a,b)\in F$ is entropic if and only if $a=\log v $ for some integer $v\ge 1$.
\end{theorem}
\begin{proof}
 If $\h'$  is entropic, where $\h'$ is obtained by restricting $\h=(a,b)$ on $\{1,3,4\}$, its characterizing  random vector $(X_i,i\in \{1,3,4\})$ satisfies 
	\begin{align}
		H(X_{134})&=H(X_{13})=H(X_{14}) \nonumber 
	\end{align}
and
\begin{align}
		H(X_{ij})&=H(X_i)+H(X_j)  ,\text{ distinct } i,j \in \{1,3,4\}\nonumber 
	\end{align}
 For $ x_{134}$  with $p(x_{134})>0$, above information equalities
	imply that the probability mass function satisfies
	\begin{align}
		p(x_1,x_3,x_4)&=p(x_1,x_3)=p(x_1,x_4)  \nonumber  \\
		&=p(x_1)p(x_3)=p(x_1)p(x_4),   \nonumber
	\end{align}
	which implies  $p(x_3)=p(x_4)$. Since $X_3$ and $X_4$ are independent, by Lemma \ref{lem1}, they are uniformly distributed on $\mc{X}_3$ and $\mc{X}_4$, respectively, and $H(X_3)=H(X_4)=\log v$, where $v=|\mc{X}_3|=|\mc{X}_4|$. As $(X_i, i\in \{1,3,4\})$ is
	its characterizing random vector, we have 
	\begin{align}
		H(X_3)&=H(X_4)=a.  \nonumber
	\end{align}
	Thus $a$ can only take $\log v$.
	
	 For the \rv{``if''} part, it can be proven by Lemma \ref{lem}  and the fact that $a=\log{v}$ in $\hat{U}_{3,5}^{1}$ and
	whole ray $U^{1}_{1,1}$, $U^{2}_{1,1} $ and $U_{1,2}^{12} $ are entropic.
\end{proof}

%\begin{corollary}
%	
%\end{corollary}

\subsection{Entropy functions on the faces involving $U_{2,3}^{123}$}
\label{C}
In this subsection, we characterize entropy functions on the $2$-dimensional faces of
$\Gamma_4$ with one of its extreme rays $U_{2,3}^{123}$ except those
characterized in Subsection \ref{B}. As introduced in \cite{CCB21},
uniform matroid $U_{2,3}$ corresponding to Latin squares or orthogonal
arrays $\mr{OA}(2,3)$, 
the characterizing random vectors of these
entropy functions are distributed on the
combinatorial structures
extending Latin squares. 

%A square matrix with $k^2$ entries using $k$ different elements, is called $k$-order \textit{Latin square}   if each element appears exactly once in each row and column. 

%Now we characterize $3$ non-trivial faces involving the combination structure, known as Latin square.
For $\h\in ( U_{2,3}^{123} , U_{1,3}^{124})$, if we restrict it on
$N_3$, we will see that its restriction $\h'\in (U_{2,3},
U^{12,3}_{1,2})$. However, we will prove in  Theorem \ref{rk4} that
the shape of the entropy functions on $( U_{2,3}^{123} ,
U_{1,3}^{124})$ is the same as $( U_{2,3} ,
U_{1,1}^{1,3})$, which is depicted in Figure \ref{fig6}. For $(
U_{2,3}^{123  } , U_{2,3}^{124  } )$ and $ ( U_{2,3}^{123} , U_{1,4}  ) $, they are
characterized in Theorems \ref{rk5} and \ref{rk7}, and
depicted in Figure \ref{fig3} and \ref{fig5}, respectively.

\begin{theorem}
	\label{rk4}
	For $F=( U_{2,3}^{123} , U_{1,3}^{124})$,  $\mathbf{h}=(a,b) \in
	F $ is entropic if and only if $a$= $\log{v}$ for integer $v\ge 1$.
\end{theorem}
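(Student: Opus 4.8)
The plan is to prove the two implications separately. The crucial observation is that restricting $\h$ to $N_3$ only recovers the face $(U_{2,3},U_{1,2}^{12,3})$ of $\Gamma_3$, hence merely the weaker Mat{\'u}{\v{s}} bound $a+b\ge\log\lceil 2^a\rceil$; to obtain the sharp condition $a=\log k$ one must instead condition on $X_4$, which ``carries'' the $b$-part of $\h$ and, once removed, exposes a bare $U_{2,3}$.

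For necessity, suppose $\h=a\br_1+b\br_2\in F$ is entropic with characterizing random vector $(X_1,X_2,X_3,X_4)$, where $\br_1,\br_2$ are the rank functions of $U_{2,3}^{123}$ and $U_{1,3}^{124}$. Writing out the sixteen values of $\h$, I would first read off three families of identities, each a one-line computation: (i) $\h(34)=\h(3)+\h(4)$, so $X_3\perp X_4$; (ii) $I(X_i;X_j\mid X_4)=0$ for every pair $\{i,j\}\subseteq\{1,2,3\}$, e.g.\ $I(X_1;X_2\mid X_4)=\h(14)+\h(24)-\h(124)-\h(4)=(a+b)+(a+b)-(2a+b)-b=0$; and (iii) $\h(1234)=\h(N_4\setminus\{i\})$ for each $i\in\{1,2,3\}$, so $X_i$ is a function of the other three variables. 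Fix any $x_4$ in the support of $X_4$ and let $\h_{x_4}$ be the entropy function of the conditional law of $(X_1,X_2,X_3)$ given $X_4=x_4$; it is entropic, i.e.\ $\h_{x_4}\in\Gamma_3^*$. By (ii) and (iii), $\h_{x_4}(ij)=\h_{x_4}(i)+\h_{x_4}(j)$ and $\h_{x_4}(123)=\h_{x_4}(ij)$ for all three pairs $\{i,j\}\subseteq\{1,2,3\}$. A short argument shows that these six equalities force $\h_{x_4}(1)=\h_{x_4}(2)=\h_{x_4}(3)=:t$ and $\h_{x_4}(ij)=\h_{x_4}(123)=2t$, i.e.\ $\h_{x_4}=t\,U_{2,3}$; equivalently, the face of $\Gamma_3$ cut out by these equalities is exactly the extreme ray $\{t\,U_{2,3}:t\ge 0\}$. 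By (i), $t=H(X_3\mid X_4=x_4)=H(X_3)=\h(3)=a$, so $a\,U_{2,3}$ is entropic, and by the $b=0$ specialisation of the characterization of the face $(U_{2,3},U_{1,1}^{1,3})$ of $\Gamma_3$ in \cite{chen2012characterizing}, this forces $a=\log k$ for some positive integer $k$.

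For sufficiency, assume $a=\log k$ with $k\in\SetZ$, $k>0$, and $b\ge 0$. I would take $Y_1,Y_2$ independent and uniform on $N_k$, set $Y_3=Y_1+Y_2\bmod k$, let $W$ be any random variable with $H(W)=b$ independent of $(Y_1,Y_2)$, and put $X_1=(Y_1,W)$, $X_2=(Y_2,W)$, $X_3=Y_3$, $X_4=W$. A direct verification, using that $Y_1,Y_2,Y_3$ are pairwise independent with any two determining the third and that $(Y_1,Y_2,Y_3)$ is independent of $W$, shows the entropy function of $(X_1,X_2,X_3,X_4)$ equals $a\br_1+b\br_2$; hence every $\h\in F$ with $a=\log k$ is entropic.

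I expect the main obstacle to be the necessity direction: recognising that $a=\log k$ is invisible to the restriction of $\h$ to $N_3$, that conditioning on $X_4$ is the right device, and then verifying carefully that the six conditional equalities pin $\h_{x_4}$ down to $t\,U_{2,3}$ with $t=a$. The sufficiency construction and the appeal to \cite{chen2012characterizing} (or, alternatively, a direct Latin-square argument that $a\,U_{2,3}$ is entropic only for $a=\log k$) are routine.
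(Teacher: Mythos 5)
Your proof is correct, but the necessity direction takes a genuinely different route from the paper's. The paper works directly with the probability masses: it colors the bipartite graph on $\mc{X}_1\cup\mc{X}_2$ first by $x_3$ and then by $x_4$, shows all vertices in a connected component carry equal mass with $n^{(j)}\ge|\mc{X}_3|$, and squeezes $a$ between $\sum_j p_j\log n^{(j)}\ge\log|\mc{X}_3|$ (obtained from $H(X_1)=a+b$ and $H(X_4)\le H(p_1,\dots,p_t)$) and $H(X_3)\le\log|\mc{X}_3|$. You instead condition on $X_4=x_4$: the identities $I(X_i;X_j\mid X_4)=0$ and $H(X_i\mid X_{N_4\setminus i})=0$ for $i,j\in N_3$ do pin the conditional entropy function down to $t\,U_{2,3}$ with $t=H(X_3\mid X_4=x_4)=H(X_3)=a$ (the last step using $X_3\perp X_4$), and then the known characterization of the extreme ray $U_{2,3}$ of $\Gamma_3$ (the paper invokes \cite{zhang1997non} for exactly this fact in its Theorem \ref{rk5}) forces $a=\log k$. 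I checked the bookkeeping: all the conditional equalities you list do follow from the rank values of $a\br_1+b\br_2$ on this face, and they do force $\h_{x_4}(1)=\h_{x_4}(2)=\h_{x_4}(3)$ with all pairs and the triple equal to twice that value. Your sufficiency construction is the explicit form of the paper's one-line appeal to Lemma \ref{lem} (sum of the characterizing vectors of $a\br_1$ and $b\br_2$), and it verifies correctly on all fifteen nonempty subsets. What each approach buys: your conditioning reduction is shorter and more conceptual, exposing why only the $a$-coordinate is quantized; the paper's coloring argument is heavier here but is the template that survives (with two color systems or a tripartite graph) in Theorems \ref{rk6} and \ref{rk7}, where no single variable can be conditioned away to leave a clean $\Gamma_3$ extreme ray.
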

\vspace{-0.5cm}
\begin{figure}[H]
	\centering
    \includegraphics[scale=1]{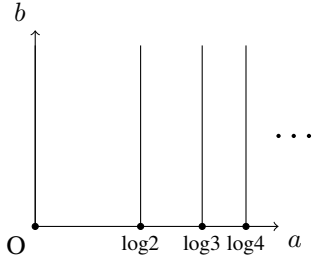}
	\captionsetup{justification=centering}
	\caption{The face $(U_{2,3}^{123} , U_{1,3}^{124})$ }
	\label{fig6}
\end{figure}
\begin{proof}
	If $\mathbf{ h} \in  F $ is entropic, its characterizing  random vector $(X_i,i\in N_4)$ satisfies the following information equalities,
	\begin{align}
	H(  X_{N_4})&=H(X_{N_4-i})  ,\ i\in N_4, \nonumber \\
	H(X_{3j })&=H(X_{3})+H(X_{j}),j\in N_4\setminus \{3\},  \nonumber \\
	H(X_{ ik})+H(X_{  jk})&=H(X_{ k})+H(X_{ijk } ), k\neq3, \nonumber\\ & \qquad\qquad\qquad\qquad\{i,j,k\}\neq\{1,2,3\}, \nonumber \\
	H(X_{ i\cup K})+H(X_{  j\cup K})&=H(X_{ K})+H(X_{  ij\cup K }), K\subseteq\{1,2,3\}\nonumber\\ & \qquad\qquad\quad  |K|=2, \{i,j\}= N_4\setminus K. \nonumber
\end{align}
  For $ (x_i,i\in N_4) \in
\mathcal{X}_{N_4}$ with $p(x_{1234})>0$, above information
equalities 
imply that the probability mass function satisfies
	\begin{align}
		p(x_{1},x_{2},x_{3},x_{4})&=p(x_{1},x_{2},x_{3}) \label{y1}\\
		&=p(x_{1},x_{2},x_{4}) \\ 
		&=p(x_{1},x_{3},x_{4})\\
		&= p(x_{2},x_{3},x_{4}), \label{y2}\\
		p(x_1,x_3)&=p(x_1)p(x_3), \label{y221}\\
		p(x_2,x_3)&=p(x_2)p(x_3), \label{y222}\\
		p(x_3,x_4)&=p(x_3)p(x_4), \\
		p(x_{1},x_{2})p(x_{1},x_{4})&=p(x_{1})p(x_{1},x_{2},x_{4}),\label{y3}\\
		p(x_{1},x_{2})p(x_{2},x_{4})&=p(x_{2})p(x_{1},x_{2},x_{4}),\label{y4}\\
		p(x_{1},x_{4})p(x_{2},x_{4})&=p(x_{4})p(x_{1},x_{2},x_{4}),\label{y5}\\
		p(x_{1},x_{3})p(x_{1},x_{4})&=p(x_{1})p(x_{1},x_{3},x_{4}),\label{y6}\\
		p(x_{1},x_{4})p(x_{3},x_{4})&=p(x_{4})p(x_{1},x_{3},x_{4}),\label{y7}\\  
		p(x_{2},x_{3})p(x_{2},x_{4})&=p(x_{2})p(x_{2},x_{3},x_{4}),\label{y8}\\
		p(x_{2},x_{4})p(x_{3},x_{4})&=p(x_{4})p(x_{2},x_{3},x_{4}),\label{y9}\\
		p(x_{1},x_{2},x_{3})p(x_{1},x_{2},x_{4})&=p(x_{1},x_{2})p(x_{1234}),\label{y10}\\
		p(x_{1},x_{2},x_{3})p(x_{1},x_{3},x_{4})&=p(x_{1},x_{3})p(x_{1234}),\label{y11}\\
		p(x_{1},x_{2},x_{3})p(x_{2},x_{3},x_{4})&=p(x_{2},x_{3})p(x_{1234}).\label{y12}
	\end{align}
	By \eqref{y1}-\eqref{y2} and \eqref{y10}-\eqref{y12},
	we have 
	\begin{align}
		p(x_{1},x_{2},x_{3},x_{4})&=p(x_{1},x_{2},x_{3})  \label{y13}\\
		&=p(x_{1},x_{2}) \label{y131}\\
		&=p(x_{1},x_{3}) \label{y132}\\
		&=p(x_{2},x_{3}),   \label{y14}
	\end{align}
	which implies that each of the random variables $X_i, i\in N_3$ is a
	function of the other two.
	% By the information equalities, $X_{3}$ is independent of $X_{1}$ and
	% $X_{2}$.
	
	Consider the bipartite graph $G=(V,E)$ with $V=\mc{X}_1\cup \mc{X}_2$
	and $(x_1,x_2)\in E$ if and only if $p(x_1,x_2)>0$. By \eqref{y131},
	% \begin{equation*}
		%   p(x_{1},x_{2},x_{3})=p(x_{1},x_{2}),
		% \end{equation*}
	for each $(x_1,x_2)\in E$, we color it by the unique
	$x_3\in \mc{X}_3$.
	Assume $|\mathcal{X}_{3}|=v$ and $G$ has $t$ connected components.
	\rv{We denote the number of the vertices of $\mc{X}_{i}$ in the connected component $C_{j}$ by $n_{i}^{(j)}, i=1,2, j=1,2,\cdots, t$  and  the  probability mass of $ C_{j}$ by $p_{j}$, that is, the  probability of the event that the random vector takes a tuple in $C_j$. }
	In
	light of \eqref{y132} and \eqref{y14}, equating \eqref{y221} and \eqref{y222},  we
	have 
	\begin{equation}
		\label{eq:3}
		p(x_1)=p(x_2).
	\end{equation}
	Let $(x'_1,x_2)\in E$ be another edge adjacent to $x_2$, by the same
	argument, we have
	\begin{equation}
		\label{eq:3}
		p(x_1')=p(x_2).
	\end{equation}
	As $x'_1$ is arbitrarily chosen, all vertices adjacent to $x_2$ have the same probability mass.
	Then by symmetry, so do all vertices in the
	connected component $C_j$, and $ n_{1}^{(j)}=n_{2}^{(j)}$ for each $j$. Herein, we simplify $n_{1}^{(j)}$ to $n^{(j)}$.
	By \eqref{y222}, $X_2$ and $X_3$ are independent, which implies that
	all colors $x_3\in\mc{X}_3$ appear in the 
	edges incident to a vertex $x_2\in \mc{X}_2$. Hence,
	\begin{equation}
		\label{eq:6}
		n^{(j)}\ge v
	\end{equation}
	for each $j$. Then,
	\begin{align}
		H(X_{1})&=-\sum_{i=1}^{t} n^{(i)}  \frac{p_{i}}{n^{(i)}}\log(\frac{p_{i}}{n^{(i)}}) \label{y18} \\
		&=H(p_{1},p_2,\ldots,p_t)+\sum_{i=1}^{t}p_{i}\log{n^{(i)}} .\label{y19}
	\end{align}
	
	By \eqref{y13}, canceling $p(x_{1},x_{2},x_{3})$ and
	$p(x_{1},x_{2},x_{3},x_{4})$ on either side of \eqref{y10}, we obtain
	\begin{equation}
		p(x_{1},x_{2})=p(x_{1},x_{2},x_{4}).\label{y17}
	\end{equation}
	Together with \eqref{y3} and \eqref{y4},
	we have 
	\begin{align}
		p(x_{1})&=p(x_{1},x_{4})\label{y15}
	\end{align}
	and
	\begin{align}
		p(x_{2})&=p(x_{2},x_{4})\label{y16}.
	\end{align}
	By \eqref{y17}, $X_{4}$ is a function of $X_{1}$ and $X_{2}$, and so
	we recolor each $(x_1,x_2)\in E$ by $x_{4}\in\mathcal{X}_{4}$. By
	\eqref{y15} and \eqref{y16}, all the edges adjacent to a vertex have
	the same color, which implies edges in the whole component have the
	same color. Hence
	\begin{align}
		H(X_{4})\leq H(p_{1},p_{2},\ldots,p_{t})\label{y23}.
	\end{align}
	The inequality holds strictly when there exist two components of $G$
	with the same color.
	
	As $\h\in ( U_{2,3}^{123} , U_{1,3}^{124})$ and $(X_i, i\in N_4)$ is
	its characterizing random vector, we have 
	\begin{align}
		H(X_{1})&=a+b ,\label{y20}\\
		H(X_{2})&=a+b,\\
		H(X_{3})&=a\label{y21}, \\
		H(X_{4})&=b \label{y22}.
	\end{align}
	
	Replacing the right hand side of \eqref{y20} into \eqref{y19} yields 
	\begin{align}
		a+b=H(p_1,p_2,\ldots,p_t)+\sum\limits_{i=1}^{t}p_{i}\log {n}^{(i)} .\label{y24}
	\end{align}
	By \eqref{y23} and \eqref{y22}, canceling $b$ and
	$H(p_1,p_2,\ldots,p_t)$ at either side of \eqref{y24}, we obtain
	\begin{align}
		a\ge \sum\limits_{i=1}^{t}p_{i}\log{n^{(i)}},
	\end{align}
	which together with \eqref{eq:6} yields
	\begin{equation}
		\label{eq:5}
		a\geq\sum\limits_{i=1}^{t}p_{i}\log{v}=\log{v}.
	\end{equation}
	
	Note that $H(X_{3})\leq \log{v}$. Thus, by \eqref{y21} and \eqref{eq:5},
	\begin{equation}
		\label{eq:7}
		\log{v}\geq a \geq \log{v},
	\end{equation}
	which implies that $a$ can only take $\log{v}$ for some positive
	integer $v$. 
	
	The ``if'' part  of the theorem is immediately implied by Lemma \ref{lem}  and the fact that
	 $a=\log{v}$ in $U_{2,3}^{123}$ and
	whole ray $U_{1,3}^{124}$ are entropic.
\end{proof}

\begin{theorem}
	\label{rk5}
	For $F=(U_{2,3}^{123  } , U_{2,3}^{124  } ) $, $\mathbf{h} =(a,b)\in
	F $ is entropic if and only if $a= \log{v_{1}}$ and $b= \log{v_{2}}  $
	for positive integers $v_{1},v_{2} $.
\end{theorem}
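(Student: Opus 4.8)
The face $F=(U_{2,3}^{123},U_{2,3}^{124})$ consists of polymatroids $\h=a\br_1+b\br_2$ where $\br_1$ is the rank function of the matroid $U_{2,3}^{123}$ (a $U_{2,3}$ on $\{1,2,3\}$ with $4$ a loop) and $\br_2$ that of $U_{2,3}^{124}$ (a $U_{2,3}$ on $\{1,2,4\}$ with $3$ a loop). The ``if'' direction is easy and I would dispatch it first exactly as in Theorem~\ref{rk4}: for integer $k_1,k_2>0$, the point $a=\log k_1$ on $U_{2,3}^{123}$ is entropic (take $X_1,X_2$ uniform and independent on $N_{k_1}$, $X_3=X_1+X_2\bmod k_1$, $X_4$ constant) and likewise $b=\log k_2$ on $U_{2,3}^{124}$; by Lemma~\ref{lem} their sum $(\log k_1,\log k_2)$ is entropic. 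So the substance is the ``only if'' direction: if $\h=(a,b)\in F$ is entropic, then $a\in\{\log k_1:k_1\in\SetZ_{>0}\}$ and $b\in\{\log k_2:k_2\in\SetZ_{>0}\}$.

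\textbf{Only-if direction.} Let $(X_1,X_2,X_3,X_4)$ be a characterizing random vector of $\h$. From the two $U_{2,3}$ structures one reads off: $H(X_3)=H(X_4)=0$ would be wrong — rather $H(X_1)=H(X_2)=a+b$, $H(X_3)=a$, $H(X_4)=b$, and the rank identities $\br_1(N_4)=\br_1(N_4\setminus i)$, $\br_2$ likewise, force the pmf identities one gets by writing out all the tight elemental inequalities (as in the proof of Theorem~\ref{rk4}). In particular I expect to derive, for all atoms with $p(x_{1234})>0$, the collapses $p(x_1,x_2,x_3,x_4)=p(x_1,x_2,x_3)=p(x_1,x_2,x_4)=p(x_1,x_2)$ (each of $X_3,X_4$ is a function of $(X_1,X_2)$, and conversely the $U_{2,3}$ on $\{1,2,3\}$ gives that $X_1,X_2$ are each functions of the other two together with $X_3$ — but actually $H(X_1,X_2)=H(X_1,X_2,X_3)=a+b$ and the modular equality $H(X_1,X_3)+H(X_2,X_3)=H(X_3)+H(X_{123})$ plays the key combinatorial role). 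Then I would set up the bipartite graph $G=(V,E)$, $V=\mc{X}_1\cup\mc{X}_2$, with $(x_1,x_2)\in E$ iff $p(x_1,x_2)>0$, and color each edge once by the forced value $x_3\in\mc{X}_3$ and once by the forced value $x_4\in\mc{X}_4$ — this is the ``two color systems'' alluded to in the introduction. Exactly as in Theorem~\ref{rk4}, independence of $X_2,X_3$ (from $\br_1$) forces every color $x_3$ to appear at each $\mc{X}_2$-vertex, so each component has $n^{(j)}\ge k_1:=|\mc{X}_3|$ vertices on each side and all vertices in a component are equiprobable; symmetrically, independence coming from $\br_2$ (namely $X_2,X_4$ independent, $X_1,X_4$ independent) forces $n^{(j)}\ge k_2:=|\mc{X}_4|$. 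Computing $H(X_1)=H(p_1,\dots,p_t)+\sum_i p_i\log n^{(i)}=a+b$ and using $H(X_3)=a\le\log k_1$, $H(X_4)=b\le\log k_2$ I get $a+b\le\log k_1+\log k_2$ is not quite what I want — I need the finer structure.

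\textbf{The refinement I expect to need.} The clean way is to show each component is a \emph{complete bipartite graph} $K_{n^{(j)},n^{(j)}}$ \emph{and} that on each component the two colorings are ``Latin'': fixing $x_1$, the map $x_2\mapsto x_3$ restricted to the neighbors is injective onto $\mc{X}_3$, and similarly for $x_4$; in fact within a component the pair $(x_3,x_4)$ determines $(x_1,x_2)$ up to the component. The $U_{2,3}$ identity $H(X_1)+H(X_3)=H(X_1,X_3)$... no: rather $X_1,X_3$ \emph{independent} is false; the correct relations from $\br_1=U_{2,3}^{123}$ are that any two of $X_1,X_2,X_3$ are independent and determine the third. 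So on a single component of size $n$, $(X_1,X_2)$ restricted there is a uniform pair with $X_3=f(X_1,X_2)$, $X_1,X_3$ independent uniform, forcing $n=|\mc{X}_3^{(j)}|$ where $\mc{X}_3^{(j)}$ is the set of colors used in component $j$; but $X_2,X_3$ independent with $X_3$ ranging over all of $\mc{X}_3$ forces $\mc{X}_3^{(j)}=\mc{X}_3$, hence $n^{(j)}=k_1$ for every $j$. Symmetrically $n^{(j)}=k_2$ for every $j$, so $k_1=k_2=:n$, all components have size $n$, and then $H(X_1)=H(p_1,\dots,p_t)+\log n$. Now applying the same analysis one dimension at a time: restrict attention to the structure carried by $\br_1$ alone — take $X_1':=(X_1 \text{ within its component}),\dots$, i.e.\ use the ``component-ID'' random variable $W$ (with $H(W)=H(p_1,\dots,p_t)$) and the ``position-within-component'' variables. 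The marginal picture on $\{1,2,3\}$ is $\h'=a\br_1'+ (\text{part of } b)$; but more directly, I claim $H(X_3)=a=\log n$ exactly, because $X_3$ is uniform on $\mc{X}_3$: indeed $X_2,X_3$ independent and $X_3$ determined by $(X_1,X_2)$ means within each equiprobable component $X_3$ is a bijective relabeling of the uniform $X_1$, so $X_3$ is uniform on $\mc{X}_3$, giving $H(X_3)=\log k_1=\log n$. Hence $a=\log n$. Symmetrically $H(X_4)=\log n$, so $b=\log n$ as well — wait, that would force $a=b$, which contradicts the theorem.

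\textbf{So the graph must be refined to a bipartite graph whose components need not have equal size across the two colorings.} I resolve this: the color classes of the $x_3$-coloring and of the $x_4$-coloring interact — a single connected component of $G$ need not be monochromatic in \emph{either} coloring; what is monochromatic are the components of the subgraph keeping only edges of a fixed color. Concretely: let $G_3$ have the same vertices but an edge only when present in $G$; color by $x_3$; the components of $G$ split according to $x_4$ into equal-size blocks... The correct bookkeeping, which I would carry out carefully, is: on a component $C_j$ of $G$ (size $n^{(j)}$ on each side, equiprobable), $X_3$ restricted to $C_j$ is uniform over a subset $S_j^{(3)}\subseteq\mc{X}_3$ and $X_4|_{C_j}$ uniform over $S_j^{(4)}\subseteq\mc{X}_4$, with $|S_j^{(3)}|=|S_j^{(4)}|=n^{(j)}$ (since within $C_j$ the pair analysis for $\br_1$ gives $X_1,X_3$ a uniform independent pair on $n^{(j)}\times|S_j^{(3)}|$ atoms with $X_3$ a function of $(X_1,X_2)$, forcing $|S_j^{(3)}|=n^{(j)}$; ditto $\br_2$). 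Global independence $X_2\perp X_3$ then says: picking $x_2\in C_j$, the colors seen are exactly $S_j^{(3)}$, each once, and $P(X_3=x_3)$ is the same for all $x_3\in S_j^{(3)}$ \emph{and across components}; so $X_3$ uniform on $\bigcup_j S_j^{(3)}=\mc{X}_3$ with $P(X_3=x_3)=1/k_1$, whence each $x_3$ lies in exactly one $S_j^{(3)}$ up to the uniform weight — and summing, $\sum_j n^{(j)}p_j/... $. The upshot I am after: $a=H(X_3)=\log k_1$ and $b=H(X_4)=\log k_2$ directly, with $k_1,k_2$ unconstrained relative to each other, because $X_3$ and $X_4$ are each \emph{exactly uniform} on their supports. \emph{That} is the real content and the step I flag as the main obstacle: proving $X_3$ is uniformly distributed (hence $a=\log|\mc{X}_3|$) and likewise $X_4$ uniform — the independence relations $X_2\perp X_3$ (from $\br_1$) and $X_1\perp X_4$, $X_2\perp X_4$ (from $\br_2$), combined with each of $X_3,X_4$ being a function of $(X_1,X_2)$ on equiprobable graph components, should pin down exact uniformity. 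Once $H(X_3)=\log k_1$ and $H(X_4)=\log k_2$ are established with $k_1=2^a$, $k_2=2^b$ integers, the theorem follows. I would present the argument by: (i) listing the forced pmf identities; (ii) the double-colored bipartite graph and its equiprobable components; (iii) the within-component $U_{2,3}$ analysis giving $|S_j^{(3)}|=|S_j^{(4)}|=n^{(j)}$; (iv) the global independence step forcing $X_3,X_4$ exactly uniform; (v) conclude $a=\log k_1$, $b=\log k_2$; (vi) the easy ``if'' direction via Lemma~\ref{lem}.
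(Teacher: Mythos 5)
Your target for the only-if direction --- showing that $X_3$ and $X_4$ are each \emph{exactly uniform}, so that $a=H(X_3)=\log|\mc{X}_3|$ and $b=H(X_4)=\log|\mc{X}_4|$ --- is the right one, and your ``if'' direction via Lemma \ref{lem} matches the paper. But the route you sketch does not reach it, and you yourself flag the decisive step as an unresolved obstacle. The core problem is that you imported the bipartite-graph/connected-component machinery from Theorem \ref{rk4}, where it is needed because $X_1$ and $X_2$ are \emph{not} independent on that face. On the present face every pair is independent: $\h(\{i,j\})=\h(i)+\h(j)$ for all $i<j$, since both $U_{2,3}^{123}$ and $U_{2,3}^{124}$ have every pair of elements independent. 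In particular $X_1\perp X_2$, so your graph $G$ is complete bipartite with a single component, and the entire component bookkeeping (equiprobable components, the sets $S_j^{(3)}$, $S_j^{(4)}$, the counts $n^{(j)}$) collapses to nothing. Several intermediate claims are also false for $\h=a\br_1+b\br_2$ with $b>0$: for instance ``any two of $X_1,X_2,X_3$ are independent and determine the third'' fails because $H(X_1\mid X_2,X_3)=b\neq 0$, and the detour that ``forces $a=b$'' is a symptom of this confusion rather than something to be patched by finer color-class bookkeeping.

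The paper's proof is short and uses two facts you never isolate. First, the triple $\{1,3,4\}$ is \emph{mutually} independent: $\h(\{1,3,4\})=2a+2b=\h(1)+\h(3)+\h(4)$, whence $p(x_1,x_3,x_4)=p(x_1)p(x_3)p(x_4)$ on the support (and similarly for $\{2,3,4\}$). Second, combining the collapses $p(x_{1234})=p(x_{123})=p(x_{234})$ with $p(x_{123})=p(x_1)p(x_2)$ (from $X_3$ being a function of $(X_1,X_2)$ plus $X_1\perp X_2$) and $p(x_{234})=p(x_2)p(x_3)p(x_4)$ yields $p(x_1)=p(x_3)p(x_4)$ on every positive-probability atom. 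Now mutual independence of $X_1,X_3,X_4$ means that for \emph{any} $x_3'\in\mc{X}_3$ the atom $(x_1,x_3',x_4)$ has positive probability, hence some $(x_1,x_2',x_3',x_4)$ does too, and the same identity applied there gives $p(x_1)=p(x_3')p(x_4)$; comparing, $p(x_3)=p(x_3')$ for all $x_3'$, i.e.\ $X_3$ is uniform, and symmetrically $X_4$ is uniform. That is the whole only-if argument. Your step (iv) is exactly this missing content, and without the independence of the triple $\{1,3,4\}$ and the identity $p(x_1)=p(x_3)p(x_4)$ your component analysis has no way to supply it.
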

\begin{figure}[H]
	\centering
	  \includegraphics[scale=1]{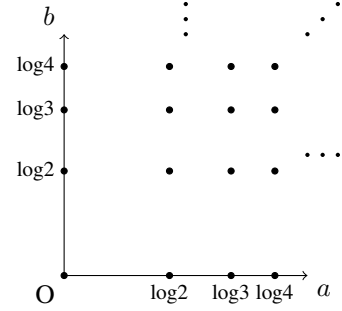}
	\caption{The face $  ( U_{2,3}^{123  } , U_{2,3}^{124  } ) $ }
	\label{fig3}
\end{figure}
 
\begin{proof}
	If $\mathbf{ h} \in  F $ is entropic, its characterizing random vector
	$(X_i,i\in N_4)$ satisfies the following information equalities,
	\begin{align}
	H(  X_{N_4})&=H(X_{N_4-i})  ,\ i\in N_4, \nonumber \\
	H(X_{ij })&=H(X_{i})+H(X_{j}),\ i,j\in N_4 ,i\neq j,   \nonumber\\
	H(X_{ ik })+H(X_{  jk })&=H(X_k)+H(X_{ ijk } ), \nonumber\\ &\qquad\qquad\{i,j,k\}\neq{ \{1,2,3\}, \{1,2,4\}},\nonumber\\
	H(X_{ 123 })+H(X_{ 124 })&=H(X_{  12 })  +H(X_{N_4}) .\nonumber
\end{align}
	  For $ (x_i,i\in N_4) \in
	\mathcal{X}_{N_4}$ with $p(x_{1234})>0$, above information equalities
	imply that the probability mass function satisfies
	\begin{align}
		p(x_{1},x_{2},x_{3},x_{4})&=p(x_{1},x_{2},x_{3}) \label{z1}\\
		&=p(x_{1},x_{2},x_{4}) \label{z17} \\ 
		&=p(x_{1},x_{3},x_{4}) \label{z18}\\
		&= p(x_{2},x_{3},x_{4}), \label{z2}\\
		p(x_1,x_2)&=p(x_1)p(x_2), \label{z3}\\
		p(x_1,x_3)&=p(x_1)p(x_3), \label{z4}\\
		p(x_1,x_4)&=p(x_1)p(x_4), \label{z5}\\
		p(x_2,x_3)&=p(x_2)p(x_3), \label{z6}\\
		p(x_2,x_4)&=p(x_2)p(x_4), \label{z7}\\
		p(x_3,x_4)&=p(x_3)p(x_4), \label{z8}\\
		p(x_{1},x_{3})p(x_{1},x_{4})&=p(x_{1})p(x_{1},x_{3},x_{4}),\label{z9}\\
		p(x_{1},x_{3})p(x_{3},x_{4})&=p(x_{3})p(x_{1},x_{3},x_{4}),\label{z10}\\
		p(x_{1},x_{4})p(x_{3},x_{4})&=p(x_{4})p(x_{1},x_{3},x_{4}),\label{z11}\\
		p(x_{2},x_{3})p(x_{2},x_{4})&=p(x_{2})p(x_{2},x_{3},x_{4}),\label{z12}\\
		p(x_{2},x_{3})p(x_{3},x_{4})&=p(x_{3})p(x_{2},x_{3},x_{4}),\label{z13}\\  
		p(x_{2},x_{4})p(x_{3},x_{4})&=p(x_{4})p(x_{2},x_{3},x_{4}),\label{z14}\\
		p(x_{2},x_{4})p(x_{3},x_{4})&=p(x_{4})p(x_{2},x_{3},x_{4}),\label{z15}\\
		p(x_{1},x_{2},x_{3})p(x_{1},x_{2},x_{4})&=p(x_{1},x_{2})p(x_{1234}).\label{z16}
	\end{align}
	By \eqref{z17}, canceling $p(x_{1},x_{2},x_{4})$ and $p(x_{1},x_{2},x_{3},x_{4})$ on either side of \eqref{z16}, we have
	\begin{equation}
		\label{z155}
		p(x_{1},x_{2},x_{3})=	p(x_{1},x_{2}).                                
	\end{equation}
	Equating \eqref{z3} and \eqref{z155} yields
	\begin{equation}
		p(x_{1},x_{2},x_{3})=p(x_{1})p(x_{2}).\label{a6}
	\end{equation}
	By \eqref{z4} and \eqref{z5},  replacing $p(x_{1},x_{3})$ and
	$p(x_{1},x_{4})$ by $p(x_1)p(x_3)$ and $p(x_1)p(x_4)$ in \eqref{z9},
	we have 
	\begin{equation}
		p(x_{1},x_{3},x_{4})=p(x_{1})p(x_{3})p(x_{4}).\label{a7}
	\end{equation}
	By the same argument, we have
	\begin{equation}  
		p(x_{2},x_{3},x_{4})=p(x_{2})p(x_{3})p(x_{4}).\label{a8}
	\end{equation}
	By \eqref{z1} and \eqref{z2}, equating \eqref{a6} and \eqref{a8} yields
	\begin{equation}                              
		p(x_{1})=p(x_{3})p(x_{4}).\label{a9}
	\end{equation}
	By \eqref{a7}, $X_{1}$, $X_{3}$ and $X_{4}$ are mutually
	independent. Let $x'_{3}\in \mathcal{X}_{3}$ with $x'_{3}\neq x_{3}$, then
	\begin{equation}
		p(x_{1},x'_{3},x_{4})=p(x_{1})p(x'_{3})p(x_{4})>0.
	\end{equation}
	As
	\begin{equation}
		p(x_{1},x'_{3},x_{4})=\sum_{x_{2}}p(x_{1},x_{2},x'_{3},x_{4}),
	\end{equation}
	there exists $x’_{2}\in \mathcal{X}_{2}$ such that
	\begin{equation}
		p(x_{1},x'_{2},x’_{3},x_{4})>0.
	\end{equation}
	By the same argument above, we obtain
	\begin{equation}
		p(x_{1})=p(x_{3}{'})p(x_{4}).  \label{a10}
	\end{equation}
	Equating \eqref{a9} and \eqref{a10}, we have
	\begin{equation}
		p(x_{3})=p(x_{3}{'}).\label{a1000}
	\end{equation}
	As \eqref{a1000} holds for any $x_{3}'\in \mathcal{X}_{3}$, $X_{3}$
	is uniformly distributed on $\mathcal{X}_{3}$, and so
	$H(X_3)=\log{v_1}$ where  $v_1=|\mathcal{X}_3|$. By the same argument, $X_4$ is uniform, and so 	$H(X_4)=\log{v_2}$ where  $v_2=|\mathcal{X}_4|$.
	
	For the ``if''  part of the theorem,  as $ \mathbf{h}\in U_{2,3}^{123}$
	and $ \mathbf{h'}\in U_{2,3}^{124}$ are entropic if $a=
	\log{v_{1}},b=\log{v_{2}}$ for integers $v_{1},v_{2}>0$ \cite{zhang1997non}. By Lemma \ref{lem}, all $(\log{v_{1}},\log{v_{2}} ) $ are entropic.
      \end{proof}
      
\begin{theorem}
	\label{rk7}
	For $F=(  U_{2,3}^{123  } , U_{1,4} )$, $\mathbf{h}=(a,b) \in F
	$ is entropic if and only if $a\geq 0,b>0$ or $(a,b)=(\log{v},0)$
	for  positive integer $v$. 
\end{theorem}
\begin{figure}[H]
	\centering
	\begin{tikzpicture}[scale = 2]
		 %网格线
		\filldraw[gray!40,fill=gray!40] (0, 0) -- (1.5, 0) -- (1.5, 1.5) -- (0, 1.5);
		\draw [->][ black,densely dashdotted] (0,0)--(1.7,0) node[below right] { $a$};
		%	\draw [black]  (1.6,0) node[below right] { $a$};
		\draw [->] (0,0)--(0,1.6) node[above left] {$b$};
		\node[below left] at (0,0) {O};   
		\node[below,font=\fontsize{8}{6}\selectfont] at (0.693,0) {log2};  
		\node[below,font=\fontsize{8}{6}\selectfont] at (1.098,0) {log3};
		\node[below,font=\fontsize{8}{6}\selectfont] at (1.386,0) {log4};
		\fill (1.6,0.6) circle (0.4pt);  
		\fill (1.7,0.6) circle (0.4pt);  
		\fill (1.8,0.6) circle (0.4pt); 
		\fill (0.8,1.6) circle (0.4pt);  
		\fill (0.8,1.7) circle (0.4pt);  
		\fill (0.8,1.8) circle (0.4pt);
		\foreach \x in{0,0.693,1.098,1.386}
		\draw[fill] (\x,0) circle (.02);
	\end{tikzpicture}
	\caption{The face $  ( U_{2,3}^{123} , U_{1,4}  ) $ }
	\label{fig5}
\end{figure} 
\begin{proof}
	If $\mathbf{ h} \in  F $ is entropic, its characterizing
	random vector $(X_{i},i\in N_4)$ satisfies the following information equalities,
\begin{align}
	H(  X_{N_4})&=H(X_{N_4-i})  ,\ i\in N_4, \nonumber \\
	H(X_{  ik })+H(X_{ jk })&=H(X_{ k})+H(X_{  ijk } ), \{i,j,k\}\neq{ \{1,2,3\}}, \nonumber\\
	H(X_{i\cup K})+H(X_{  j\cup K})&=H(X_{ K})+H(X_{  ij
		\cup K }), K\subseteq \{1,2,3\} ,\nonumber\\
                    &\qquad\qquad\quad |K|=2,\{i,j\}= N_4\setminus K.\nonumber
\end{align}
 For $ \left(x_{1},x_{2},x_{3},x_{4}\right) \in \mathcal{X}_{N_4}$ with $p(x_{1},x_{2},x_{3},x_{4})>0$, the probability mass function satisfies 
	\begin{align}
		p(x_{1},x_{2},x_{3},x_{4})&=p(x_{1},x_{2},x_{3}) \label{z100} \\
		&=p(x_{1},x_{2},x_{4})  \label{a28}\\ 
		&=p(x_{1},x_{3},x_{4})  \label{z101}     \\
		&=p(x_{2},x_{3},x_{4}),\label{a29}  \\
		p(x_{1},x_{2})p(x_{1},x_{4})&=p(x_{1})p(x_{1},x_{2},x_{4}),\label{a30} \\
		p(x_{1},x_{2})p(x_{2},x_{4})&=p(x_{2})p(x_{1},x_{2},x_{4}),\label{a31} \\
		p(x_{1},x_{4})p(x_{2},x_{4})&=p(x_{4})p(x_{1},x_{2},x_{4}),\label{a32} \\
		p(x_{1},x_{3})p(x_{1},x_{4})&=p(x_{1})p(x_{1},x_{3},x_{4}),\label{a33} \\
		p(x_{1},x_{3})p(x_{3},x_{4})&=p(x_{3})p(x_{1},x_{3},x_{4}),\label{a34} \\
		p(x_{1},x_{4})p(x_{3},x_{4})&=p(x_{4})p(x_{1},x_{3},x_{4}),\label{a35} \\
		p(x_{2},x_{3})p(x_{2},x_{4})&=p(x_{2})p(x_{2},x_{3},x_{4}),\label{a36} \\
		p(x_{2},x_{3})p(x_{3},x_{4})&=p(x_{3})p(x_{2},x_{3},x_{4}),\label{a37} \\
		p(x_{2},x_{4})p(x_{3},x_{4})&=p(x_{4})p(x_{2},x_{3},x_{4}),\label{a38} \\
		p(x_{1},x_{2},x_{3})p(x_{1},x_{3},x_{4})&=p(x_{1},x_{2})p(x_{1234}),\label{a39} \\
		p(x_{1},x_{2},x_{3})p(x_{1},x_{3},x_{4})&=p(x_{1},x_{3})p(x_{1234}),\label{a40} \\
		p(x_{1},x_{2},x_{3})p(x_{2},x_{3},x_{4})&=p(x_{2},x_{3})p(x_{1234}).\label{a41}
	\end{align}
	By \eqref{z100}, canceling $p(x_{1},x_{2},x_{3})$ and $p(x_{1},x_{2},x_{3},x_{4})$ on either side of \eqref{a39}, we obtain
	\begin{equation}
		p(x_{1},x_{3},x_{4})=p(x_{1},x_{2}). \label{z102}
	\end{equation}
	Equating \eqref{z101} and \eqref{z102} yields,
	\begin{equation}
		p(x_{1},x_{2},x_{3},x_{4})=p(x_{1},x_{2}). \label{z103}
	\end{equation}
	By the same argument, 
	\begin{align}
		p(x_{1},x_{2},x_{3},x_{4})&=p(x_{1},x_{3}) \label{z104} \\
		&=p(x_{2},x_{3}) .\label{z105}
	\end{align}
	In light of \eqref{a28} and \eqref{z103}, we have 
	\begin{equation}
		p(x_{1},x_{2})=p(x_{1},x_{2},x_{4})\label{z106}.
	\end{equation}
	By \eqref{z106}, canceling $p(x_{1},x_{2})$ and  $p(x_{1},x_{2},x_{4})$ on either side of \eqref{a30}, we have 
	\begin{equation}
		p(x_{1},x_{4})=p(x_{1}).\label{b1}
	\end{equation}
	Similarly,
	\begin{align}
		p(x_{2},x_{4})&=p(x_{2}),\label{b2} \\
		p(x_{3},x_{4})&=p(x_{3}).\label{b3} 
	\end{align}
	By \eqref{b1}-\eqref{b3}, replacing $p(x_1,x_4)$ by $p(x_1)$, $p(x_2,x_4)$ by $p(x_2)$ and $p(x_3,x_4)$ by $p(x_3)$ in \eqref{a32}, \eqref{a35} and \eqref{a38}, we obtain
	\begin{align}
		p(x_1)p(x_2)=p(x_4)p(x_1,x_2,x_4) , \label{z107}\\
		p(x_1)p(x_3)=p(x_4)p(x_1,x_3,x_4)  ,\label{z108}\\
		p(x_2)p(x_3)=p(x_4)p(x_2,x_3,x_4)  .\label{z109}
	\end{align}
	Then, by \eqref{a28}-\eqref{a29}, \eqref{z107}-\eqref{z109} we conclude
	\begin{align}
		p(x_{1})=p(x_{2})=p(x_{3}) .\label{a43}
	\end{align}
	Consider the tripartite graph $G=(V,E)$ with $V=\mc{X}_1\cup \mc{X}_2\cup \mc{X}_3$ and   $(x_i,x_j)\in E$  if and only if $p(x_{i},x_{j})>0$, ${i,j}\in N_3 ,i\neq j$. By \eqref{z103}-\eqref{z105}, for each $(x_i,x_j)\in E$, we color it by the unique
	$x_4\in \mc{X}_4$. Assume $G$ has $t$ connected components.
	\rv{We denote the number of the vertices of $\mc{X}_{i}$ in the connected component $C_{j}$ by $n_{i}^{(j)}, i=1,2,3, j=1,2,\cdots,t$ } and  the  probability mass of $ C_{j}$ by $p_{j}$. 
	Now for any $(x_1,x_2)\in \mc{X}_1\times\mc{X}_2$ with $p(x_1,x_2)>0$, by
	\eqref{z103}, there exist unique $x_3$ and $x_4$ with
	$p(x_1,x_2,x_3,x_4)>0$, and so by \eqref{a43}, $p(x_1)=p(x_2)$. Similarly, for 
	$p(x_1,x_3)>0$  and $p(x_2,x_3)>0$, we have $p(x_1)=p(x_3)$
	and $p(x_2)=p(x_3)$. Then we conclude that  $p(x_i)$=$p(x_j)$
	when $p(x_i,x_j)>0$. Thus each vertex in a component of $G$ has the same probability mass. By \eqref{b1}-\eqref{b3}, all the edges adjacent to a vertex have
	the same color, which implies edges in the whole component have the
	same color. Hence
	\begin{equation}
		\label{eq:4}
		p(x_{4})\geq p_j.
	\end{equation}
	The inequality strictly holds when there exist two components
	sharing the same color.
	
	%	Assume $p(x_{1},x_{2},x_{3})>0$, then by \eqref{z100}, there is a unique point $x_{4}\in \mathcal{X}_{4}$ with $p(x_{1},x_{2},x_{3},x_{4})>0$ such that $p(x_{1})=p(x_{2})=p(x_{3}).$ Fix $x_1$, if there exists  $x_{2}'\in\mathcal{X}_{2}$ with $x_{2}'\neq x_{2}$ such that $p(x_{1},x_{2}')>0$, by \eqref{z103}, there exists unique $x_{3}'$ such that $p(x_{1},x_{2}',x_{3}')>0$. By \eqref{z104}, we can conclude that $x_3'$ is different from $x_3$. 
	%
	%	 By \eqref{a28} and \eqref{a42}, if there exists  $x_{2}'\in\mathcal{X}_{2}$ with $x_{2}'\neq x_{2}$ such that $p(x_{1},x_{2}')>0$, we have there exists unique $x_{3}'$ with $x_{3}'\neq x_{3}$ such that $p(x_{1},x_{2}',x_{3}')>0$ . As $p(x_{1})=p(x_{1},x_{4})$, we have  $p(x_{1},x_{2}',x_{3}',x_{4})>0$ and $p(x_{1})=p(x_{2}')=p(x_{3}').$ Then $p(x_{1})=p(x_{2})=p(x_{3})=p(x_{2}')=p(x_{3}')$. Thus each vertex in the same compoent of the
	%	tripartite graph , it can be seen that has the same probability mass. 
	
	By \eqref{a28}, replacing $p(x_{1},x_{2},x_{4})$ by $p(x_{1},x_{2},x_{3},x_{4})$ in  \eqref{z107}, we obtain
	\begin{equation}
		p(x_{1})p(x_{2})=p(x_{4})p(x_{1},x_{2},x_{3},x_{4}).\label{a45}
	\end{equation}
	By \eqref{a45}, it can be seen that $p(x_{1},x_{2},x_{3},x_{4})$ are
	all equal for $(x_{1},x_{2},x_{3},x_{4})\in\mathcal{X}^{(j)}_{N_4}$,
	where $\mathcal{X}^{(j)}_{N_4} $ is the alphabet $\mathcal{X}_{N_4}$
	in the connected component $ C_j$. Summing up  over all $p(x_1)$ in $C_j$,
	\begin{equation}
		p_j=\sum\limits_{x_1\in  \mathcal{X}^{(j)}_{1}} p(x_1) \label{z110}.
	\end{equation}
	Since each vertex has the same probability mass in $C_j$, the number of the vertices  $n_{1}^{(j)}=n_{2}^{(j)}=n_{3}^{(j)}=\dfrac{p_j}{p(x_{1})} $ where $p_j$ is the  probability mass of $ C_j$. We simplify $n_{1}^{(j)}$ to $n^{(j)}$, then $p(x_{1})=p(x_{2})=\dfrac{p_j}{n^{(j)}}$. Then substituting $p(x_1)$ and $p(x_2)$ by $ \dfrac{p_j}{n^{(j)}}$ in  \eqref{a45}, we have
	\begin{equation}
		p(x_{1},x_{2},x_{3},x_{4})=\dfrac{p_j^2}{({n^{(j)})}^2 p(x_{4})}.\label{a46}
	\end{equation}
	Summing up all probability mass in $C_j$,
	\begin{equation}
		p_j=\sum\limits_{(x_{1},x_{2},x_{3},x_{4})\in \mathcal{X}^{(j)}_{N_4}:\atop p(x_{1},x_{2},x_{3},x_{4})>0} p(x_{1},x_{2},x_{3},x_{4}).\label{a47}
	\end{equation}
	As $p(x_{1},x_{2},x_{3},x_{4})=p(x_{1},x_{2})$,
	\begin{align}
		p_j	=\sum\limits_{(x_{1},x_{2})\in \mathcal{X}^{(j)}_{N_2}:\atop p(x_{1},x_{2})>0} p(x_{1},x_{2}).\label{a48} 
	\end{align}
	Since $n_1^{(j)}=n_2^{(j)}=n^{(j)}$, we have 
	\begin{equation}
		| \{ x_{12}\in \mathcal{X}^{(j)}_{N_2} :p(x_{12})>0        \}|\leq  | \mathcal{X}^{(j)}_{N_2} |   =  ({n^{(j)})}^2 . \label{a49}
	\end{equation}
	Since $p(x_{1},x_{2},x_{3},x_{4})=p(x_{1},x_{2})$, $p(x_1,x_2)$ are equal for all $(x_1,x_2)\in \mathcal{X}^{(j)}_{N_2} $. Then by \eqref{a48} and \eqref{a49},
	\begin{equation}
		p_j\leq ({n^{(j)})}^2 p(x_{1},x_{2}).\label{a50}
	\end{equation}
	By \eqref{z103} , \eqref{a46} and \eqref{a50}, we have
	\begin{equation}
		\label{eq90}
		p_j\leq ({n^{(j)})}^2 p(x_{1},x_{2})=\dfrac{p_j^2}{ p(x_{4})},
	\end{equation}
	which implies $p_j\geq p(x_{4}).$ Together with \eqref{eq:4}, we have
	\begin{equation}
		\label{eq:9}
		p_j= p(x_{4}).
	\end{equation}
	Therefore, there exist no two components
	sharing the same color, that is
	\begin{equation}
		\label{eq:8}
		t=|\mc{X}_4|.
	\end{equation}
	By \eqref{eq:9}, \eqref{eq90} holds with equality which implies
	\eqref{a50} and \eqref{a49} hold with equality as well.
	Hence for any $x_1\in\mathcal{X}^{(j)}_1$,
	$x_2\in\mathcal{X}^{(j)}_2$, $x_1$ is adjacent to $x_2$. By the same
	argument, we can conclude that $x_i\in\mathcal{X}^{(j)}_i$ is adjacent
	to the vertex in $\mathcal{X}^{(j)}_{N_3-i}$ for $i=1,2,3$.

	Now we prove that for all $\h= (a,b)\in F$, $a\geq 0,b>0$ is
	entropic. \rv{For $a=\log v$ with some positive $v$ and $b>0$, the fact that $(a,b)$ is
        entropic can be implied by Lemma}\mrv{\ref{lem}}\rv{. Here we give a
        constructive proof of the case.}
      Let $X_{4}$ be any random variable on
	$\mathcal{X}_{4}=\{0,1,\ldots,t-1\}$ with $H(X_{4})=b$. Note
	that its probability mass is $p_i,i\in \mathcal{X}_{4}$. Thus 
	\begin{equation}
		b=\sum_{i=0}^{t-1}-p_{i}\log p_{i}. \label{a51}
	\end{equation} 
	Now assume in each connected component $C_j$ of $G$, $\mathcal{X}_{i}^{(j)}$ has $k$ element,
	that is, $n^{(j)}=k$ for all $j\in\mc{X}_4$.
	Let $X^{(j)}_{1}$ and $X ^{(j)}_{2}$ be independent and
	uniformly distributed on $\mathbb{I}_v$, and $X^
	{(j)}_{3}=X^ {(j)}_{1}+X^ {(j)}_{2} \mod v$.
	Let $X_i=(X^{(j)}_i,j\in\mc{X}_4)$, $i=1,2,3$.
	Then 
	\begin{align}
		H(X_{1})&=H(X_{2})=H(X_{3})=\sum_{j=0}^{t-1}
		-p_{j}\log \frac{p_{j}}{v}\\
		&= \sum\limits_{j=0}^{t-1}p_{j}\log{v}+\mrv{\sum_{j=0}^{t-1}} -p_{j}\log p_{j}\\
		&=\log{v}+H(X_{4})=a+b.
	\end{align}
	It can be checked that the entropy function of such
	constructed $(X_i, i\in N_4)$ is $(a,b)$.
	
	Consider $a\neq\log{v},b>0$. Let $c=\lfloor
2^a\rfloor$. Let $X_{4}$ be a random variable on
$\mathcal{X}_{4}=\{0,1,\ldots,t-1\}$ with $H(X_{4})=b$. 
Let $n^{(j)}=c$ for $j\in\mc{X}_4\setminus\{t-1\}$ and $n^{(t-1)}=c+l$ for
some sufficiently large $l$. 
Let 
$X^{(j)}_{1}$ and $X^{(j)}_{2}$ be independent and uniformly distributed on
$\{0,1,\dots,c-1\}$ for $j\in\mc{X}_4\setminus\{t-1\}$ and $X^{(j)}_{3}=X^{(j)}_{1}+X^{(j)}_{2}
\mod c$. Let $X^{(t-1)}_{1}$ and $X^{(t-1)}_{2}$ be independent
and uniformly distributed on $\{0,1,\dots,c+l-1\}$ and
$X^{(t-1)}_{3}=X^{(t-1)}_{1}+X^{(t-1)}_{2} \mod c+l$.  Let $X_i=(X^{(j)}_i,j\in\mc{X}_4)$, $i=1,2,3$.
Then,
\begin{align}
	H(X_{1})&=H(X_{2})=H(X_{3})\\
	&=\sum\limits_{i=0}^{t-2}(-p_{i}\log(\dfrac{p_{i}}{c})) -p_{t-1}\log\dfrac{p_{t-1}}{c+l}\\
	&=\sum\limits_{i=0}^{t-1}-p_{i}\log(p_{i})+\sum\limits_{i=0}^{t-2}p_{i}\log c+p_{t-1}\log(c+l)\\
	&=H(X_{4})+\log{c}({1-p_{t-1}})+\log{(c+l)}p_{t-1}.
\end{align}
	Note that $l$ is sufficiently large, so let
	\begin{align}
		p_{t}=\dfrac{a-\log{c}}{\log{(c+l)}-\log{c}}, \label{a52}
	\end{align}
	% where $l$ is an arbitrary positive integer and the initial value is 1. If $b>-p_{m}\log(p_{m})$, the construction is feasible. If $b\leq-p_{m}\log(p_{m})$, we can reduce the value of $p_m$ by adding $l $ until $b>-p_{m}\log(p_{m})$ holds.
	% Then 
	% $\mathcal{X}_{i}^{(j)}$ has $c$ elements for $ i\in N_3,j\in \mathcal{X}_{4}\backslash \{m\}$. $X_{1}$ and $X_{2}$ are uniformly distributed on $\{0,1,\dots,c-1\}$ and $X_{3}=X_{1}+X_{2} \mod c$.
	% $\mathcal{X}_{i}^{(m)}$ has $c+l$ elements for $ i\in
	% N_3$. Then $X_{1}$ and $X_{2}$ are uniformly distributed on
	% $\{0,1,\dots,c+l-1\}$ and $X_{3}=X_{1}+X_{2} \mod c+l$.
	we have  
	\begin{align}
		H(X_{1})&=H(X_{2})=H(X_{3})=H(X_{4})+a.
	\end{align}
	It can be checked that the entropic function of such
	constructed $(X_i,i\in N_4)$ is $(a,b)$.
	
	Hence, all $\h=(a,b)\in F$ are entropic when
	$a\geq0,b>0$. Note that $ \mathbf{h}\in U_{2,3}^{123}$
	are entropic if and only if $a=\log{v}$ for positive
	integer $v$. The theorem is proved.
\end{proof}

Theorem \ref{rk7} can be considered as a corollary of \cite[Lemma 4]{matuvs2007},
which is restated as \cite[Lemma 6]{csirmaz2025exp}.

\subsection{Entropy functions on the faces involving partitions}
\label{D}
 In this subsection, we characterize the faces where the distribution of the entropy function’s characterizing random vector involves \emph{number partitions}. A $t$-partition of a positive $v$ with $1\leq t \leq v$ is a $t$-tuple \(  \bm{\alpha}=(\alpha_0, \alpha_1, \dots, \alpha_{t-1})\) with $\alpha_i$ positive integer and \(\sum_{i=0}^{t-1} \alpha_i = v\). The family of all such partitions of \( v \) is denoted by \(\mathcal{P}(v)\) and  $H(\bm{\alpha})$ reprensent the shannon entropy $H( \frac{\alpha_i}{v}:i\in \mathbb{I}_t)$.
 
% For $\h=(a,b)\in F=( \mc{W}_{2}^{12  } , U_{2,3}^{134  })$ or $(  U_{2,4}  ,U^{123}_{2,3} ) $, the values of $ a$ are determined by a number pairition of $k$. In Theorem \ref{rk6}, we characterie the entropy region on  $( \mc{W}_{2}^{12  } , U_{2,3}^{134  })$.  We will prove  in Theorem \ref{rk14} that   the  shape  of the entropy region  on  $(  U_{2,4}  ,U^{123}_{2,3} )$ is the same as $F=( \mc{W}_{2}^{12  } , U_{2,3}^{134  })$, except for the polymatroids $(\log2,0)$ and $(\log6,0)$. For $F=(\hat{U}_{2,5}^{1},\mc{W}_{2}^{12})$, however, the value of $ a$ is determined by $k$  pairitions of $k$, which is charactered  in  Theorem  \ref{rk20}.
 
\begin{theorem}
	\label{rk6}
	For $F=( \mc{W}_{2}^{12  } , U_{2,3}^{134  }) $, $\mathbf{h}=(a,b) \in F $
	is entropic if and only if $ a+b=\log{v},
	a=H(\dfrac{\alpha_{0}}{v},\dfrac{\alpha_{1}}{v},\dots,\dfrac{\alpha_{t-1}}{v})
	$ , where integer $v\ge 1$ and $(\alpha_0,\alpha_1,\dots,\alpha_{t-1})$ is
	a  partition of $v$.
\end{theorem}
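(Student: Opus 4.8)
The plan is to proceed in the style of the proof of Theorem~\ref{rk4}. First I would compute the rank functions $\br_1$ of $\mc{W}_2^{12}$ and $\br_2$ of $U_{2,3}^{134}$ on all subsets of $N_4$, hence $\h=a\br_1+b\br_2$, and read off which elemental inequalities are tight on $F$. A short calculation shows that the restriction of $\h$ to $\{1,3,4\}$ equals $(a+b)U_{2,3}$, and that any characterizing random vector $(X_1,X_2,X_3,X_4)$ of $\h=(a,b)$ must satisfy: $X_3$ and $X_4$ are independent; each of $X_1,X_3,X_4$ is a function of the other two; $X_1$ is independent of $X_3$ and of $X_4$; $X_2$ is a function of $X_1$; and $H(X_1)=H(X_3)=H(X_4)=a+b$ while $H(X_2)=a$.

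For the ``only if'' part I would run a bipartite-graph colouring argument. Assume w.l.o.g.\ that each $X_i$ has full support on a finite set $\mc{X}_i$; then $X_3\perp X_4$ makes $(X_3,X_4)$ supported on the whole grid $\mc{X}_3\times\mc{X}_4$, and $X_1=f(X_3,X_4)$ for a well-defined $f$. Colour the edge $(x_3,x_4)$ of the complete bipartite graph on $\mc{X}_3\sqcup\mc{X}_4$ by $f(x_3,x_4)\in\mc{X}_1$. The facts that $X_4$ is a function of $(X_1,X_3)$ and $X_3$ is a function of $(X_1,X_4)$ say exactly that this is a proper edge colouring, while $X_1\perp X_3$ and $X_1\perp X_4$ force every colour to occur at every vertex; comparing the colour count with the vertex degrees gives $|\mc{X}_1|=|\mc{X}_3|=|\mc{X}_4|=:k$ and the colouring is a Latin square of order $k$. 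For the uniformity: for fixed $x_3$, $P(X_1=x_1)=P(X_1=x_1\mid X_3=x_3)=P\big(X_4=x_4(x_3,x_1)\big)$, where $x_4(x_3,x_1)$ is the unique element of $\mc{X}_4$ giving colour $x_1$ in row $x_3$; since the $x_4$-column of a Latin square is a permutation, as $x_3$ varies over $\mc{X}_3$ the element $x_4(x_3,x_1)$ sweeps out all of $\mc{X}_4$, so $X_4$ (hence $X_1$ and $X_3$) is uniform. Therefore $a+b=H(X_1)=\log k$ with $k$ a positive integer. Finally $X_2=g(X_1)$ for some function $g$, and since $X_1$ is uniform on the $k$-set $\mc{X}_1$, the fibres of $g$ have sizes $\alpha_1,\dots,\alpha_t$ with $\sum_i\alpha_i=k$, whence $a=H(X_2)=H(\alpha_1/k,\dots,\alpha_t/k)$, the asserted form.

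For the ``if'' part I would give an explicit construction. Given $a+b=\log k$ and $a=H(\alpha_1/k,\dots,\alpha_t/k)$ for a partition $(\alpha_1,\dots,\alpha_t)$ of $k$, let $X_3,X_4$ be independent and uniform on $\SetZ_k$, set $X_1=X_3+X_4\bmod k$, and let $X_2=g(X_1)$ where $g\colon\SetZ_k\to\{1,\dots,t\}$ has fibres of sizes $\alpha_1,\dots,\alpha_t$. Using that $X_3\perp X_4$, that any two of $X_1,X_3,X_4$ determine the third while each of $X_1,X_3,X_4$ is independent of each of the other two, and that $X_2$ is a function of $X_1$ with the prescribed block sizes, one checks directly that $H(X_A)=\h(A)$ for every $A\subseteq N_4$; note that $b=\log k-H(\alpha_1/k,\dots,\alpha_t/k)\ge 0$ automatically, so $(a,b)$ indeed lies in the first octant.

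I expect the main obstacle to be the uniformity step: the Latin-square structure on the $\mc{X}_3\times\mc{X}_4$ grid alone does not force uniform marginals, and one has to combine it carefully with the two independences $X_1\perp X_3$, $X_1\perp X_4$ to conclude that $X_1,X_3,X_4$ are uniform on a common $k$-element set — which is precisely what yields the integrality of $k$ and, through the fibres of $g$, the partition structure. The remaining ingredients (the subset-entropy bookkeeping that pins down the face relations, and the verification of the explicit construction) are routine.
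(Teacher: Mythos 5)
Your proposal is correct and follows essentially the same route as the paper: both reduce the problem via the tight elemental inequalities to the observation that $(X_1,X_3,X_4)$ characterizes a point on the $U_{2,3}$ ray (forcing $a+b=\log k$ with $X_1$ uniform) and that $X_2$ is a function of $X_1$ whose fibre sizes give the partition $(\alpha_1,\dots,\alpha_t)$, and both use the same mod-$k$ construction for the converse. The only cosmetic difference is that you re-derive the uniformity via an explicit Latin-square argument, whereas the paper extracts $p(x_1)=p(x_3)=p(x_4)$ directly from the pmf identities and appeals to the known characterization of entropy functions on the extreme ray $U_{2,3}$.
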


\begin{figure}[H]
	\centering
	\begin{tikzpicture}[scale = 2]
		 %网格线
		\draw [->,densely dashdotted] (0,0)--(1.6,0) node[below right] { $a$};
		\draw [->,densely dashdotted] (0,0)--(0,1.6) node[above left] {$b$};
		\draw [black, dashed,thin] (0.693,0)--(0,0.693);
		\draw [black, dashed,thin] (1.0985,0)--(0,1.098);
		\draw [black, dashed,thin] (1.386,0)--(0,1.386);
		\fill (1.6,0.8) circle (0.4pt);  
		\fill (1.7,0.8) circle (0.4pt);  
		\fill (1.8,0.8) circle (0.4pt);  
		\fill (0.8,1.6) circle (0.4pt);  
		\fill (0.8,1.7) circle (0.4pt);  
		\fill (0.8,1.8) circle (0.4pt);
		\fill (1.6,1.6) circle (0.4pt);  
		\fill (1.7,1.7) circle (0.4pt);  
		\fill (1.8,1.8) circle (0.4pt); 
		\draw[fill] (0.636,0.462) circle (.02);
		\node[right,font=\fontsize{8}{6}\selectfont] at(0.636,0.462){\rv{$\h_1$}}; 
		\draw[fill] (0.562,0.824) circle (.02);
			\node[right,font=\fontsize{8}{6}\selectfont] at(0.562,0.824) {\rv{$\h_2$}}; 
		\draw[fill] (0.693,0.693) circle (.02);
			\node[right,font=\fontsize{8}{6}\selectfont] at(0.693,0.693){\rv{$\h_3$}}; 
		\draw[fill] (1.039,0.346) circle (.02);
			\node[right,font=\fontsize{8}{6}\selectfont] at(1.039,0.346){\rv{$\h_4$}}; 
		\node[below left] at (0,0) {O};   
		\foreach \x in{0,0.693,1.098,1.386}
		\foreach \y in{0,0.693,1.098,1.386}
		\draw[fill] (\x,\y) circle (.02);
		\node[below,font=\fontsize{8}{6}\selectfont] at (0.693,0) {log2};  
		\node[below,font=\fontsize{8}{6}\selectfont] at (1.098,0) {log3};
		\node[below,font=\fontsize{8}{6}\selectfont] at (1.386,0) {log4};
		\node[left,font=\fontsize{8}{6}\selectfont] at (0,0.693) {log2};
		\node[left,font=\fontsize{8}{6}\selectfont] at (0,1.098) {log3};
		\node[left,font=\fontsize{8}{6}\selectfont] at (0,1.386) {log4};
		 \node[font=\fontsize{8}{6}\selectfont] at (2.3,1.5){\rv{$\h_1$}$:(0.92,0.67)$};
		\node[font=\fontsize{8}{6}\selectfont] at (2.3,1.35){\rv{$\h_2$}$:(0.81,1.19)$};
		\node[font=\fontsize{8}{6}\selectfont] at (2.3,1.2){\rv{$\h_3$}$:(1.00,1.00)$};
		\node[font=\fontsize{8}{6}\selectfont] at (2.3,1.05){\rv{$\h_4$}$:(1.50,0.50)$};
	\end{tikzpicture}	
	\caption{The face $  ( \mc{W}_{2}^{12  } , U_{2,3}^{134  })  $ }
	\label{fig4}
\end{figure}

\begin{proof}
	If $\mathbf{ h} \in  F $ is entropic, its characterizing
	random vector $(X_{i},i\in N_4)$ satisfies the following information equalities,
\begin{align}
	H(  X_{N_4})&=H(X_{N_4-i})  ,\ i\in N_4 \nonumber \\
	H(X_{ij })&=H(X_{i})+H(X_{j}),\{i,j\}\neq\{1,2\} \nonumber \\
	H(X_{  ik })+H(X_{jk })&=H(X_{ k})+H(X_{  i,j,k} ), k\in \{1,2\}\subseteq{ \{i,j,k\}},\nonumber \\
	H(X_{ i\cup K})+H(X_{  j\cup K})&=H(X_{ K})+H(X_{  ij\cup K }),K\subseteq \{1,3,4\}, \nonumber\\ &\qquad\qquad \qquad |K|=2,\{i,j\}=N_4\setminus K.\nonumber 
\end{align}
  For $ (x_i,i\in N_4) \in
	\mathcal{X}_{N_4}$ with $p(x_{1234})>0$, above information equalities
	imply that the probability mass function satisfies
	\begin{align}
		p(x_{1},x_{2},x_{3},x_{4})&=p(x_{1},x_{2},x_{3}) \label{z19}\\
		&=p(x_{1},x_{2},x_{4}) \label{z20} \\ 
		&=p(x_{1},x_{3},x_{4}) \label{z21}\\
		&= p(x_{2},x_{3},x_{4}), \label{z22}\\
		p(x_1,x_3)&=p(x_1)p(x_3), \label{z23}\\
		p(x_1,x_4)&=p(x_1)p(x_4), \label{z24}\\
		p(x_2,x_3)&=p(x_2)p(x_3), \label{z25}\\
		p(x_2,x_4)&=p(x_2)p(x_4), \label{z26}\\
		p(x_3,x_4)&=p(x_4)p(x_4), \label{z27}\\
		p(x_{1},x_{2})p(x_{1},x_{3})&=p(x_{1})p(x_{1},x_{2},x_{3}), \label{a14} \\
		p(x_{1},x_{2})p(x_{2},x_{3})&=p(x_{2})p(x_{1},x_{2},x_{3}), \label{z28} \\
		p(x_{1},x_{2})p(x_{1},x_{4})&=p(x_{1})p(x_{1},x_{2},x_{4}), \label{z29} \\
		p(x_{1},x_{2})p(x_{2},x_{4})&=p(x_{2})p(x_{1},x_{2},x_{4}), \label{z30} \\
		p(x_{1},x_{2},x_{3})p(x_{1},x_{3},x_{4})&=p(x_{1},x_{3})p(x_{1234}),\label{z31}\\
		p(x_{1},x_{2},x_{4})p(x_{1},x_{3},x_{4})&=p(x_{1},x_{4})p(x_{1234}),\label{z32}\\
		p(x_{2},x_{3},x_{4})p(x_{1},x_{3},x_{4})& =p(x_{3},x_{4})p(x_{1234}).\label{z33}
	\end{align}
	By \eqref{z19}, canceling $p(x_{1},x_{2},x_{3})$ and
	$p(x_{1},x_{2},x_{3},x_{4})$ on either side of \eqref{z31} yields
	\begin{equation}
		p(x_{1},x_{3},x_{4})=p(x_{1},x_{3}).\label{z34}
	\end{equation} 
	Together with \eqref{z23}, we have
	\begin{equation}
		p(x_{1},x_{3},x_{4})=p(x_{1})(x_{3}). \label{a11}
	\end{equation} 
	By the same argument, 
	\begin{align}
		p(x_{1},x_{3},x_{4})&=p(x_{1},x_{4})=p(x_{1})p(x_{4}) \label{a12}  \\  
		&=p(x_{3},x_{4})=p(x_{3})p(x_{4}). \label{a13} 
	\end{align}
	Equating \eqref{a11}, \eqref{a12} and \eqref{a13}, we have
	\begin{align}
		p(x_{1})=p(x_{3})=p(x_{4}).\label{z37}
	\end{align}
	By \eqref{z34}-\eqref{a13}, $X_{1},X_{3}$ and $X_{4}$ are  pairwise
	independent and each one is a function of the other two. Thus,
	$(X_1,X_3,X_4)$ is the characterizing random vector of an entropy
	function on $U_{2,3}$, an extreme ray of $\Gamma_3$, and $X_{i}$ is
	uniformly distributed on $\mathcal{X}_{i}$ and $H(X_i)=\log v$ where  $|\mathcal{X}_{i}|=v$ for $i=1,3,4$.

	As $\h\in F$, $(X_i, i\in N_4)$ is
	its characterizing random vector, we have 
	\begin{align}
		H\left(X_{1}\right)&=a+b=\log{v}, \label{d1} \\
		H(X_{2})&=a ,\label{d2} \\
		H\left(X_{3}\right)&=a+b=\log{v} ,\label{d3} \\
		H\left(X_{4}\right)&=a+b=\log{v} .\label{d4} 
	\end{align}
	By \eqref{z19}, \eqref{z21} and \eqref{z34},
	\begin{equation}
		p(x_{1},x_{2},x_{3})=p(x_{1},x_{3}).\label{z35}
	\end{equation}
	By \eqref{z35}, canceling $p(x_{1},x_{2},x_{3})$ and $p(x_{1},x_{3})$  on either side of \eqref{a14}, we have
	\begin{align}
		p(x_{1})=p(x_{1},x_{2}).\label{z36}
	\end{align}
	By \eqref{z36}, $X_2 $ is a function of $X_1$. Thus there exists a unique $ x_{2}\in\mathcal{X}_{2}$ with $ p(x_{1},x_{2})>0$, then  $|\mathcal{X}_{2}|\leq|\mathcal{X}_{1}|=v$ and
	\begin{align}
		p(x_{2})=\sum\limits_{x_{1}} p(x_{1},x_{2})=\dfrac{\alpha \left(x_{2}\right)  }{v},
	\end{align}
	where $\alpha (x_{2}) $ is the number of $ x_{1}$ with $p(x_{1},x_{2})>0$. Then $( \alpha \left(x_{2}\right): x_{2}\in  \mathcal{X}_{2}  ) $ is a number partition of $v$, i.e.,$\sum\limits_{x_{2}\in \mathcal{X}_{2} }\alpha(x_{2})=v$. By \eqref{d2}, $a$ can only take the vaule $ H(\dfrac{\alpha (x_{2})}{v} : x_{2}\in  \mathcal{X}_{2}   ) $.
	
	Now we prove all these points are entropic. Let $X_{1},X_{3}$ be indenpendent and uniformly distributed on $\mathcal{X}_{1}=\mathcal{X}_{3}=\mathbb{I}_v$. Let $X_{4}=X_{1}+X_{3}$ mod $v$. It can be checked that $X_{4}$ is uniformly distributed on $\mathcal{X}_{4}=\mathbb{I}_v $. Let $\left(A(x_{2}):x_{2}\in\mathcal{X}_{2} \right)$ be a partition of $\mathbb{I}_v $, where $| A(x_{2})|=\alpha(x_{2}).$ Let $X_{2}$ be a function $\varphi$ of $X_{1}$, where $\varphi(x_{1})=x_{2}$ if  $x_{1}\in A(x_{2}) $. Then it can be checked that $  (X_{i},i\in N_4)  $ is the characterizing
	random vector of $\h \in F$.
\end{proof}

\subsection{Entropy functions on the faces involving $U_{3,4}$}
\label{F}

In this subsection, we characterize entropy functions on the $5$
$2$-dimensional faces of
$\Gamma_4$ with one of its extreme rays $U_{3,4}$. As introduced in \cite{CCB21},
uniform matroid $U_{3,4}$ corresponding to Latin cubes or orthogonal
arrays $\mr{OA}(3,4)$, the characterizing random vectors of these
entropy functions are distributed on the
combinatorial structures
extending Latin cubes. 

%A square matrix with $k^2$ entries using $k$ different elements, is called $k$-order \textit{Latin square}   if each element appears exactly once in each row and column. 

%Now we characterize $3$ non-trivial faces involving the combination structure, known as Latin square.
% For $\h\in ( U_{2,3}^{123} , U_{1,3}^{124})$, if we restrict it on
% $N_3$, we will see that its restriction $\h'\in (U_{2,3},
% U^{12,3}_{1,2})$. However, we will prove in  Theorem \ref{rk4} that
% the shape of the entropy functions on $( U_{2,3}^{123} ,
% U_{1,3}^{124})$ is the same as $( U_{2,3} ,
% U_{1,1}^{1,3})$, which is depicted in Figure \ref{fig6}. For $(
% U_{2,3}^{123  } , U_{2,3}^{124  } )$ and $ ( U_{2,3}^{123} , U_{1,4}  ) $, they are
% characterized in Theorems \ref{rk5} and \ref{rk7}, and
% depicted in Figure \ref{fig3} and \ref{fig5}, respectively.

 \begin{theorem}
	\label{rk8}
	For $F=(  U_{3,4}  ,U^1_{1,1} )$, $\mathbf{h}=(a,b) \in F
	$ is entropic if and only if  $a$= $\log{v}$ for integer $v\ge 1$ . 
\end{theorem}

 \begin{proof}
	If $\mathbf{ h} \in  F $ is entropic, its characterizing random vector
	$(X_i,i\in N_4)$ satisfies the following information equalities,
	\begin{align}
		H(  X_{N_4})&=H(X_{N_4-i})  ,\ i\in N_4\setminus\{1\}, \nonumber \\
		H(X_{ij })&=H(X_{i})+H(X_{j}) ,\ i,j\in N_4 ,i<j,   \nonumber \\
		H(X_{ ik})+H(X_{  jk})&=H(X_{ k})+H(X_{ijk } ), \{i,j,k\} \subseteq  N_4.\nonumber 
	\end{align}
	 For $ (x_i,i\in N_4) \in
	\mathcal{X}_{N_4}$ with $p(x_{1234})>0$, above information equalities
	imply that the probability mass function satisfies
	\begin{align}
		p(x_{1},x_{2},x_{3},x_{4})&=p(x_{1},x_{2},x_{3}) \label{c1} \\
		&=p(x_{1},x_{2},x_{4})  \label{c2}\\ 
		&=p(x_{1},x_{3},x_{4}) , \label{c3}   \\
		p(x_1,x_2)&=p(x_1)p(x_2),   \label{c17}   \\
		p(x_1,x_3)&=p(x_1)p(x_3) ,  \label{c18}   \\
		p(x_1,x_4)&=p(x_1)p(x_4)  , \label{c19}   \\
		p(x_2,x_3)&=p(x_2)p(x_3) ,  \label{c20}   \\
		p(x_2,x_4)&=p(x_2)p(x_4) ,  \label{c21}   \\
		p(x_3,x_4)&=p(x_3)p(x_4) ,  \label{c22}   \\		
		p(x_{1},x_{2})p(x_{1},x_{3})&=p(x_{1})p(x_{1},x_{2},x_{3}),\label{c5} \\
		p(x_{1},x_{2})p(x_{2},x_{3})&=p(x_{2})p(x_{1},x_{2},x_{3}),\label{c6} \\
		p(x_{1},x_{3})p(x_{2},x_{3})&=p(x_{3})p(x_{1},x_{2},x_{3}),\label{c7} \\
		p(x_{1},x_{2})p(x_{1},x_{4})&=p(x_{1})p(x_{1},x_{2},x_{4}),\label{c8} \\
		p(x_{1},x_{2})p(x_{2},x_{4})&=p(x_{2})p(x_{1},x_{2},x_{4}),\label{c9} \\
		p(x_{1},x_{4})p(x_{2},x_{4})&=p(x_{4})p(x_{1},x_{2},x_{4}),\label{c10} \\
		p(x_{1},x_{3})p(x_{1},x_{4})&=p(x_{1})p(x_{1},x_{3},x_{4}),\label{c11} \\
		p(x_{1},x_{3})p(x_{3},x_{4})&=p(x_{3})p(x_{1},x_{3},x_{4}),\label{c12} \\
		p(x_{1},x_{4})p(x_{3},x_{4})&=p(x_{4})p(x_{1},x_{3},x_{4}),\label{c13} \\
		p(x_{2},x_{3})p(x_{2},x_{4})&=p(x_{2})p(x_{2},x_{3},x_{4}),\label{c14} \\
		p(x_{2},x_{3})p(x_{3},x_{4})&=p(x_{3})p(x_{2},x_{3},x_{4}),\label{c15} \\
		p(x_{2},x_{4})p(x_{3},x_{4})&=p(x_{4})p(x_{2},x_{3},x_{4}).\label{c16} 
	\end{align}
	Since $p(x_1,x_2,x_3)=p(x_1,x_2,x_4)$, together with \eqref{c5} and \eqref{c8}, we have
	\begin{equation}
		p(x_1,x_3)=p(x_1,x_4).   \label{c23}
	\end{equation}
	In light of \eqref{c18} and \eqref{c19}, replacing $p(x_1,x_3)$ and $p(x_1,x_4)$ by $p(x_1)p(x_3)$ and $p(x_1)p(x_4)$  in \eqref{c23}, we obtain
	\begin{equation}
		p(x_3)=p(x_4) .\label{c24}
	\end{equation}
	Note that $X_3$ and $X_4$ are independent,  by Lemma \ref{lem1},
	% for any $x_3'\in \mc{X}_3$ with $x_3'\neq x_3$, 
	%\begin{equation}
	%	p(x_3',x_4)=p(x_3')p(x_4)>0 .\label{c25}
	%\end{equation}
	%As 
	%\begin{equation}
	%	p(x_3',x_4)=\sum_{x_1,x_2} p(x_1,x_2,x_3',x_4),
	%\end{equation}
	%there exists $x_1'\in \mc{X}_1,x_2'\in \mc{X}_2$ such that $p(x_1',x_2',x_3',x_4)>0$.  Then by the same argument, we obtain
	%\begin{equation}
	%	p(x_3')=p(x_4) .\label{c26}
	%\end{equation}
	%Equating \eqref{c24} and \eqref{c26} yields,
	% \begin{equation}
		% 	p(x_3)=p(x_3'),
		% \end{equation}
	%which implies
	$X_3$ and $X_4$ are uniformly distributed on $\mc{X}_3 $ and $\mc{X}_4$, respectively, and so $H(X_3)=H(X_4)=\log v$ where $v=|\mathcal{X}_{3}|=|\mathcal{X}_{4}|$.
	
	For the ``if'' part, it can be proved by Lemma \ref{lem}  and the fact that $a=\log{v}$ on the ray $U_{3,4}$, and
	whole ray $U^1_{1,1}$ are entropic.
\end{proof}
\begin{theorem}
	\label{rk9}
	For $F=(U_{3,4} , U^{12}_{1,2})$, $\mathbf{h}=(a,b) \in F$ is entropic if and only if $a$= $\log{v}$ for integer $v\ge 1$ . 
\end{theorem}

\begin{proof}
	If $\mathbf{ h} \in  F $ is entropic, its characterizing random vector
	$(X_i,i\in N_4)$ satisfies the following information equalities,
	\begin{align}
		H(  X_{N_4})&=H(X_{N_4-i})  ,\ i\in N_4 \nonumber \\
		H(X_{ij })&=H(X_{i})+H(X_{j}) , i<j, \{i,j\} \neq \{1,2\}   \nonumber \\
		H(X_{ ik})+H(X_{  jk})&=H(X_{ k})+H(X_{ijk } ), \{i,j\} \neq \{1,2\}.\nonumber 
	\end{align}
  For $ (x_i,i\in N_4) \in
\mathcal{X}_{N_4}$ with $p(x_{1234})>0$, above information equalities
imply that the probability mass function satisfies
	\begin{align}
		p(x_{1},x_{2},x_{3},x_{4})&=p(x_{1},x_{2},x_{3}) \label{c27} \\
		&=p(x_{1},x_{2},x_{4})  \label{c28}\\ 
		&=p(x_{1},x_{3},x_{4})  \label{c29}   \\
		&=p(x_{2},x_{3},x_{4}) , \label{c30}   \\
		p(x_1,x_3)&=p(x_1)p(x_3) ,  \label{c31}   \\
		p(x_1,x_4)&=p(x_1)p(x_4),   \label{c32}   \\
		p(x_2,x_3)&=p(x_2)p(x_3) ,  \label{c33}   \\
		p(x_2,x_4)&=p(x_2)p(x_4) ,  \label{c34}   \\
		p(x_3,x_4)&=p(x_3)p(x_4) ,  \label{c35}   \\		
		p(x_{1},x_{2})p(x_{1},x_{3})&=p(x_{1})p(x_{1},x_{2},x_{3}),\label{c36} \\
		p(x_{1},x_{2})p(x_{2},x_{3})&=p(x_{2})p(x_{1},x_{2},x_{3}),\label{c37} \\
		p(x_{1},x_{2})p(x_{1},x_{4})&=p(x_{1})p(x_{1},x_{2},x_{4}),\label{c38} \\
		p(x_{1},x_{2})p(x_{2},x_{4})&=p(x_{2})p(x_{1},x_{2},x_{4}),\label{c39} \\
		p(x_{1},x_{3})p(x_{1},x_{4})&=p(x_{1})p(x_{1},x_{3},x_{4}),\label{c40} \\
		p(x_{1},x_{3})p(x_{3},x_{4})&=p(x_{3})p(x_{1},x_{3},x_{4}),\label{c41} \\
		p(x_{1},x_{4})p(x_{3},x_{4})&=p(x_{4})p(x_{1},x_{3},x_{4}),\label{c42} \\
		p(x_{2},x_{3})p(x_{2},x_{4})&=p(x_{2})p(x_{2},x_{3},x_{4}),\label{c43} \\
		p(x_{2},x_{3})p(x_{3},x_{4})&=p(x_{3})p(x_{2},x_{3},x_{4}),\label{c44} \\
		p(x_{2},x_{4})p(x_{3},x_{4})&=p(x_{4})p(x_{2},x_{3},x_{4}).\label{c45} 
	\end{align}	
	Since $p(x_1,x_2,x_3)=p(x_1,x_2,x_4)$, together with \eqref{c36} and \eqref{c38}, we have
	\begin{equation}
		p(x_1,x_3)=p(x_1,x_4). \label{c46}
	\end{equation} 
 By \eqref{c31} and \eqref{c32}, replacing $p(x_1,x_3)$ and $p(x_1,x_4)$ by $p(x_1)p(x_3)$ and $p(x_1)p(x_4)$  in \eqref{c46},
	\begin{equation}
		p(x_3)=p(x_4).   \label{c47}
	\end{equation}
	By \eqref{c35}, $X_3$  and $X_4$ are independent. Due to Lemma \ref{lem1}, we conclude that $X_3$ and $X_4$ are uniformly distributed on $\mc{X}_3$ and  $\mc{X}_4$, respectively, and so   $H(X_3)=H(X_4)=\log v$ where $v=|\mathcal{X}_{3}|=|\mathcal{X}_{4}|$.
	
	On the other hand,  the ``if''  part of the theorem can be proved by Lemma \ref{lem}  and the fact that $a=\log{v}$ on the ray $U_{3,4}$ and
	whole ray $U^{12}_{1,2}$ are entropic.
\end{proof}
\begin{theorem}
	\label{rk10}
	For $F=(  U_{3,4}  ,U^{123}_{2,3} )$, $\mathbf{h}=(a,b) \in F
	$ is entropic if and only if  $a+b$= $\log{v}$ for integer $v\ge 1$ . 
\end{theorem}
\begin{figure}[H]
 	\centering
    \includegraphics[scale=1]{fig_rk10.pdf}
	\caption{The face $  ( U_{3,4} , U^{123}_{2,3}  ) $ }
	\label{fig7}
\end{figure} 
 \begin{proof}
	If $\mathbf{ h} \in  F $ is entropic, its characterizing random vector
	$(X_i,i\in N_4)$ satisfies the following information equalities,
	\begin{align}
		H(  X_{N_4})&=H(X_{N_4-i})  ,\ i\in N_4 \nonumber \\
		H(X_{ij })&=H(X_{i})+H(X_{j}) ,i<j, i,j\in N_4,   \nonumber \\
		H(X_{ ik})+H(X_{  jk})&=H(X_{ k})+H(X_{ijk } ), \{i,j,k\} \neq  \{1,2,3\}.\nonumber 
	\end{align}
 For $ (x_i,i\in N_4) \in
\mathcal{X}_{N_4}$ with $p(x_{1234})>0$, above information equalities
imply that the probability mass function satisfies
	\begin{align}
		p(x_{1},x_{2},x_{3},x_{4})&=p(x_{1},x_{2},x_{3}) \label{c48} \\
		&=p(x_{1},x_{2},x_{4})  \label{c49}\\ 
		&=p(x_{1},x_{3},x_{4})  \label{50}   \\
		&=p(x_{2},x_{3},x_{4}) , \label{c51}   \\
		p(x_1.x_2)&=p(x_1)p(x_2),   \label{c52}   \\
		p(x_1.x_3)&=p(x_1)p(x_3) ,  \label{c53}   \\
		p(x_1.x_4)&=p(x_1)p(x_4) ,  \label{c54}   \\
		p(x_2.x_3)&=p(x_2)p(x_3) ,  \label{c55}   \\
		p(x_2.x_4)&=p(x_2)p(x_4) ,  \label{c56}   \\
		p(x_3.x_4)&=p(x_3)p(x_4) ,  \label{c57}   \\		
		p(x_{1},x_{2})p(x_{1},x_{4})&=p(x_{1})p(x_{1},x_{2},x_{4}),\label{c58} \\
		p(x_{1},x_{2})p(x_{2},x_{4})&=p(x_{2})p(x_{1},x_{2},x_{4}),\label{c59} \\
		p(x_{1},x_{4})p(x_{2},x_{4})&=p(x_{4})p(x_{1},x_{2},x_{4}),\label{c60} \\
		p(x_{1},x_{3})p(x_{1},x_{4})&=p(x_{1})p(x_{1},x_{3},x_{4}),\label{c61} \\
		p(x_{1},x_{3})p(x_{3},x_{4})&=p(x_{3})p(x_{1},x_{3},x_{4}),\label{c62} \\
		p(x_{1},x_{4})p(x_{3},x_{4})&=p(x_{4})p(x_{1},x_{3},x_{4}),\label{c63} \\
		p(x_{2},x_{3})p(x_{2},x_{4})&=p(x_{2})p(x_{2},x_{3},x_{4}),\label{c64} \\
		p(x_{2},x_{3})p(x_{3},x_{4})&=p(x_{3})p(x_{2},x_{3},x_{4}),\label{c65} \\
		p(x_{2},x_{4})p(x_{3},x_{4})&=p(x_{4})p(x_{2},x_{3},x_{4}).\label{c66} 
	\end{align}
	By \eqref{c52} and \eqref{c54}, replacing $p(x_1,x_2)$ and $p(x_1,x_4)$ by $p(x_1)p(x_2)$ and $p(x_1)p(x_4)$ in \eqref{c58}, we have
	\begin{equation}
		p(x_1,x_2,x_4)=p(x_1)p(x_2)p(x_4). \label{c67}
	\end{equation}
	Similarly, we obtain
	\begin{align}
		p(x_1,x_3,x_4)&=p(x_1)p(x_3)p(x_4).  \label{c68} \\
		p(x_2,x_3,x_4)&=p(x_2)p(x_3)p(x_4).      \label{c69}
	\end{align}
 Together with \eqref{c49}-\eqref{c51},
	\begin{equation}
		p(x_1)=p(x_2)=p(x_3). \label{c70}
	\end{equation}
	By  \eqref{c52}, \eqref{c53} and \eqref{c55}, $X_1$, $X_2$ and $X_3$ are pairwise independent. Then by Lemma \ref{lem1}, we conclude that $X_i$ are uniformly distributed on $\mc{X}_i$ for $i\in N_3$ and $H(X_1)=H(X_2)=H(X_3)=\log v$ where $v=|\mathcal{X}_{1}|=|\mathcal{X}_{2}|=|\mathcal{X}_{3}|$.
	
	Now we prove the ``if '' part. Let $X_2$, $X_3$ and $X_4$ be mutually independent  and distributed
	on $\mathbb{I}_v$, where $X_2$ and $X_3$ are uniform and $H(X_4)=a$ for some $a\leq \log v$. Let $X_1=X_2+X_3+X_4 \mod v$. Then for any $i\in \mc{X}_1$,
	\begin{align}
		p_{X_1}(i)=\sum_{x_2,x_3,x_4: \atop x_2+x_3+x_4 \equiv i \mod v} p_{X_2,X_3,X_4}(x_2.x_3,x_4).
	\end{align} 
	Since  $X_2$, $X_3$ and $X_4$ are mutually independent, we have
	\begin{equation}
		p_{X_1}(i)=\sum_{x_2,x_3,x_4: \atop x_2+x_3+x_4\equiv i \mod v} p_{X_2}(x_2)p_{X_3}(x_3)p_{X_4}(x_4)
	\end{equation}
	Fix $x_3$ and $x_4$,
	\begin{equation}
		p_{X_1}(i)=\sum_{j=0}^{v-1} \sum_{m=0}^{v-1} \sum_{x_2: \atop x_2+m+j\equiv i \mod v} p_{X_2}(x_2)p_{X_3}(m)p_{X_4}(j).
	\end{equation}
	Note that 
	\begin{equation}
		|\{x_2 \in\mc{X}_2: x_2+m+j\equiv i \mod v \}|=1,
	\end{equation} 
	we have
	\begin{align}
		p_{X_1}(i)&=\sum_{j=0}^{v-1} \sum_{m=0}^{v-1} \dfrac{1}{v}p_{X_3}(m)p_{X_4}(j) \\
		&=\dfrac{1}{v} \sum_{j=0}^{v-1}p_{X_4}(j)\sum_{m=0}^{v-1}p_{X_3}(m) \\
		&=\dfrac{1}{v},   \label{c71}
	\end{align}
	which implies $X_1$ is uniformly distributed on $\mc{X}_1$ and $H(X_1 )=\log v$. It can be checked that the entropy function of such
	constructed $(X_i, i\in N_4)$ is $(a,b)$.
\end{proof}

\begin{theorem}
	\label{rk11}
	For $F=(  U_{3,4}  ,U^{123}_{1,3} )$, $\mathbf{h}=(a,b) \in F
	$ is entropic if and only if  $a$= $\log{v}$ for integer $v\ge 1$ . 
\end{theorem}

\begin{proof}
	If $\mathbf{ h} \in  F $ is entropic, its characterizing random vector
	$(X_i,i\in N_4)$ satisfies the following information equalities,
		\begin{align}
		H(  X_{N_4})&=H(X_{N_4-i})  ,\ i\in N_4, \nonumber \\
		H(X_{i4 })&=H(X_{i})+H(X_{4}) , i\in N_3,    \nonumber \\
		H(X_{ ik})+H(X_{  jk})&=H(X_{ k})+H(X_{ijk } ),  k\neq 4,\nonumber\\&\qquad\qquad\text{ distinct }i,j,k\in N_4.  \nonumber
	\end{align}
  For $ (x_i,i\in N_4) \in
\mathcal{X}_{N_4}$ with $p(x_{1234})>0$, above information equalities
imply that the probability mass function satisfies
	\begin{align}
		p(x_{1},x_{2},x_{3},x_{4})&=p(x_{1},x_{2},x_{3}) \label{c109} \\
		&=p(x_{1},x_{2},x_{4})  \label{c110}\\ 
		&=p(x_{1},x_{3},x_{4})  \label{c111}   \\
		&=p(x_{2},x_{3},x_{4}) , \label{c112}   \\
		p(x_1,x_4)&=p(x_1)p(x_4),   \label{c113}   \\
		p(x_2,x_4)&=p(x_2)p(x_4) ,  \label{c114}   \\
		p(x_3,x_4)&=p(x_3)p(x_4) ,  \label{c115}   \\
		p(x_{1},x_{2})p(x_{1},x_{3})&=p(x_{1})p(x_{1},x_{2},x_{3}),\label{c116} \\
		p(x_{1},x_{2})p(x_{2},x_{3})&=p(x_{2})p(x_{1},x_{2},x_{3}),\label{c117} \\
		p(x_{1},x_{3})p(x_{2},x_{3})&=p(x_{3})p(x_{1},x_{2},x_{3}),\label{c118} \\
		p(x_{1},x_{2})p(x_{1},x_{4})&=p(x_{1})p(x_{1},x_{2},x_{4}),\label{c119} \\
		p(x_{1},x_{2})p(x_{2},x_{4})&=p(x_{2})p(x_{1},x_{2},x_{4}),\label{c120} \\
		p(x_{1},x_{3})p(x_{1},x_{4})&=p(x_{1})p(x_{1},x_{3},x_{4}),\label{c121} \\
		p(x_{1},x_{3})p(x_{3},x_{4})&=p(x_{3})p(x_{1},x_{3},x_{4}),\label{c122} \\
		p(x_{2},x_{3})p(x_{2},x_{4})&=p(x_{2})p(x_{2},x_{3},x_{4}),\label{c123} \\
		p(x_{2},x_{3})p(x_{3},x_{4})&=p(x_{3})p(x_{2},x_{3},x_{4}).\label{c124} 
	\end{align}
	Since $p(x_1,x_2,x_3)=p(x_1,x_2,x_4)=p(x_1,x_3,x_4)$, together with \eqref{c116}, \eqref{c119} and \eqref{c121}, we obtain
	\begin{align}
		p(x_1,x_2)=p(x_1,x_3)=p(x_1,x_4).\label{c125} 
	\end{align}
	By the same argument, we have
	\begin{align}
		p(x_1,x_2)&=p(x_2,x_3)=p(x_2,x_4),\label{c126} \\
		p(x_1,x_3)&=p(x_2,x_3)=p(x_3,x_4).\label{c127} 
	\end{align}
	In light of \eqref{c125}-\eqref{c127},
	\begin{align}
		& \quad \ p(x_1,x_2)=p(x_1,x_3)=p(x_1,x_4) \label{c128} \\
		&=p(x_2,x_3)=p(x_2,x_4)=p(x_3,x_4) ,   \label{c129} 
	\end{align}
	which implies the left hand side of \eqref{c116}-\eqref{c118} are equal. Then  we conclude that $p(x_1)=p(x_2)=p(x_3)$. By \eqref{c128}, replacing $p(x_1,x_2)$  and  $p(x_1,x_3)$ by $p(x_1,x_4)$  in \eqref{c116}, we obtain
	\begin{equation}
		p(x_1,x_4)p(x_1,x_4)=p(x_1)p(x_{1},x_{2},x_{3}).   \label{c130} 
	\end{equation}
	By \eqref{c113} and \eqref{c109}, replacing $p(x_1,x_4)$ by $p(x_1,x_4)$, $p(x_1,x_2,x_3)$ by $p(x_1,x_2,x_3,x_4)$, we have
	\begin{align}
		p(x_{1},x_{2},x_{3},x_4)=p(x_1)p^2(x_4). \label{c131} 
	\end{align} 
	By \eqref{c113}, $X_1$ and $X_4$ are indenpendent. For any $x_4' \in  \mc{X}_4$ with $x_4'\neq x_4 $,
	\begin{equation}
		p(x_1,x_4')=p(x_1)p(x_4')>0. \label{c132}
	\end{equation}
	As
	\begin{equation}
		p(x_1,x_4')=\sum_{x_2,x_3} p(x_1,x_2,x_3,x_4'),
	\end{equation}
	there exists $ x_2'\in  \mc{X}_2$, $ x_3'\in \mc{X}_3$ such that $p(x_1,x_2',x_3',x_4')>0$.
	By the same argument, we have 
	\begin{equation}
		p(x_1,x_2',x_3',x_4')=p(x_1)p^2(x_4'). \label{c133}
	\end{equation}
	Consider the tripartite graph $G=(V,E)$ with $V=\mc{X}_1\cup \mc{X}_2\cup \mc{X}_3$ and   $(x_i,x_j)\in E$  if and only if $p(x_{i},x_{j})>0$, ${i,j}\in N_3 ,i\neq j$. Then, by Lemma \ref{lem4}, the probability mass of the triangles are equal in each connected component, which implies that $p(x_1,x_2,x_3,x_4)=p(x_1,x_2',x_3',x_4')$. Equating  \eqref{c131} and \eqref{c133}, 
	we have 
	\begin{align}
		p(x_4)=p(x_4').
	\end{align}
	It holds for any $x_4'\in \mc{X}_4$, which implies that $X_4$ is uniformly distributed on $\mc{X}_4$ , and so $H(X_4)=\log v$ where $v=|\mc{X}_4|$.
	
	On the other hand,  the ``if'' part of the theorem can be proved by Lemma \ref{lem}  and the fact that $a=\log{v}$ on the ray $U_{3,4}$ and
	whole ray $U^{123}_{1,3}$ are entropic.
\end{proof}

\begin{theorem}
	\label{rk12}
	For $F=(  U_{3,4}  ,U_{1,4} )$, $\mathbf{h}=(a,b) \in F
	$ is entropic if and only if  $a\geq 0,b>0$ or $(a,b)=(\log{v},0)$
	for some positive integer $v$. 
\end{theorem}

 \begin{figure}[H]
 	\centering
 	 \includegraphics[trim=0 0 0 10,scale=1.05,clip]{fig_rk7.pdf}
 	 \caption{The face $  ( U_{3,4} , U_{1,4}  ) $ }
 \end{figure}
 
\begin{proof}
	If $\mathbf{ h} \in  F $ is entropic, its characterizing random vector
	$(X_i,i\in N_4)$ satisfies the following information equalities,
	\begin{align}
		H(  X_{N_4})&=H(X_{N_4-i})  ,\ i\in N_4, \nonumber \\
		H(X_{ ik})+H(X_{  jk})&=H(X_{ k})+H(X_{ijk } ), \text{ distinct }i,j,k\in N_4.\nonumber 
	\end{align}
  For $ (x_i,i\in N_4) \in
\mathcal{X}_{N_4}$ with $p(x_{1234})>0$, above information equalities
imply that the probability mass function satisfies
	\begin{align}
		p(x_{1},x_{2},x_{3},x_{4})&=p(x_{1},x_{2},x_{3}) \label{c72} \\
		&=p(x_{1},x_{2},x_{4})  \label{c73}\\ 
		&=p(x_{1},x_{3},x_{4})  \label{c74}   \\
		&=p(x_{2},x_{3},x_{4}) , \label{c75}   \\
		p(x_{1},x_{2})p(x_{1},x_{3})&=p(x_{1})p(x_{1},x_{2},x_{3}),\label{c76} \\
		p(x_{1},x_{2})p(x_{2},x_{3})&=p(x_{2})p(x_{1},x_{2},x_{3}),\label{c77} \\
		p(x_{1},x_{3})p(x_{2},x_{3})&=p(x_{3})p(x_{1},x_{2},x_{3}),\label{c78} \\
		p(x_{1},x_{2})p(x_{1},x_{4})&=p(x_{1})p(x_{1},x_{2},x_{4}),\label{c79} \\
		p(x_{1},x_{2})p(x_{2},x_{4})&=p(x_{2})p(x_{1},x_{2},x_{4}),\label{c80} \\
		p(x_{1},x_{4})p(x_{2},x_{4})&=p(x_{4})p(x_{1},x_{2},x_{4}),\label{c81} \\
		p(x_{1},x_{3})p(x_{1},x_{4})&=p(x_{1})p(x_{1},x_{3},x_{4}),\label{c82} \\
		p(x_{1},x_{3})p(x_{3},x_{4})&=p(x_{3})p(x_{1},x_{3},x_{4}),\label{c83} \\
		p(x_{1},x_{4})p(x_{3},x_{4})&=p(x_{4})p(x_{1},x_{3},x_{4}),\label{c84} \\
		p(x_{2},x_{3})p(x_{2},x_{4})&=p(x_{2})p(x_{2},x_{3},x_{4}),\label{c85} \\
		p(x_{2},x_{3})p(x_{3},x_{4})&=p(x_{3})p(x_{2},x_{3},x_{4}),\label{c86} \\
		p(x_{2},x_{4})p(x_{3},x_{4})&=p(x_{4})p(x_{2},x_{3},x_{4}).\label{c87} 
	\end{align}
	Since $p(x_1,x_2,x_3)=p(x_1,x_2,x_4)=p(x_1,x_3,x_4)$, by \eqref{c76},\eqref{c79} and \eqref{c82}, we obtain 
	\begin{equation}
		p(x_1,x_2)p(x_1,x_3)=p(x_1,x_2)p(x_1,x_4)=p(x_1,x_3)p(x_1,x_4),
	\end{equation}
which implies
	\begin{equation}
		p(x_1,x_2)=p(x_1,x_3)=p(x_1,x_4). \label{c900}
	\end{equation}
	By the same argument, we have
	\begin{align}
		p(x_1,x_2)&=p(x_2,x_3)=p(x_2,x_4).  \label{c901}\\
		p(x_1,x_3)&=p(x_2,x_3)=p(x_3,x_4).  \label{c902}
	\end{align}
	Equating \eqref{c900}-\eqref{c902},
	\begin{align}
		& \quad \ p(x_1,x_2)=p(x_1,x_3)=p(x_1,x_4)   \label{c88}\\ 
		&=p(x_2,x_3)=p(x_2,x_4)=p(x_3,x_4), \label{c89}
	\end{align}
	which implies the left  side of \eqref{c76}-\eqref{c87} are equal. Then with \eqref{c72}-\eqref{c75}, we conclude that
	\begin{equation}
		p(x_1)=p(x_2)=p(x_3)=p(x_4). \label{c90}
	\end{equation}
	
	Consider the tripartite graph $G=(V,E)$ with $V=\mc{X}_1\cup \mc{X}_2\cup \mc{X}_3$ and   $(x_i,x_j)\in E$  if and only if $p(x_{i},x_{j})>0$, ${i,j}\in N_3 ,i\neq j$. Assume $G$ has $t$ connected components.  \rv{We denote the number of the vertices of $\mc{X}_{i}$  in the connected component $C_{j}$ by
	$n_{i}^{(j)}, i=1,2,3, j=1,2,\cdots,t$}  and  the  probability mass of $ C_{j}$ by $p_{j}$. By  Lemma \ref{lem4},  we obtain $n_1^{(j)}=n_2^{(j)}=n_3^{(j)}$ and so it can be simplified to  $n^{(j)}$. The probability mass of the vertices, the edges  and the triangles in $C_j$ are
	\begin{align}
		&p(x_1)=p(x_2)=p(x_3)=\dfrac{p_j}{n^{(j)}} ,  \label{c91} \\  
		&p(x_1,x_2)=p(x_1,x_3)=p(x_2,x_3)=\dfrac{p_j}{(n^{(j)})^2},  \label{c92}   \\ 
		&p(x_1,x_2,x_3)=\dfrac{p_j}{(n^{(j)})^3},   \label{c93}
	\end{align}
respectively. Then
\begin{align}
	H(X_1)&=H(X_2)=H(X_3) \label{c107}  \\
	&=-\sum_{j=1}^t  p_j \log \dfrac{p_j}{n^{(j)}}  \label{c94} \\
	&=H(p_1,...,p_t)+\sum_{j=1}^t p_j\log n^{(j)}  , \label{c95}   \\
	H(X_1,X_2)&=H(X_1,X_3)=H(X_2,X_3)   \\
	&=-\sum_{j=1}^t  p_j \log \dfrac{p_j}{(n^{(j)})^2}   \label{c96} \\
	&=H(p_1,...,p_t)+2\sum_{j=1}^t p_j\log n^{(j)} , \label{c97}
\end{align}
\begin{align}
	H(X_1,X_2,X_3)=-\sum_{j=1}^t  p_j \log \dfrac{p_j}{(n^{(j)})^3}  \label{c98}  \\
	=H(p_1,...,p_t)+3\sum_{j=1}^t p_j\log n^{(j)} . \label{c99}
\end{align}
	As $\h\in F$ and $(X_i, i\in N_4)$ is
	its characterizing random vector, we have 
	\begin{align}
		H(X_1)&=H(X_2)=	H(X_3) \label{c100} \\
		&=a+b, \label{c101} \\
		H(X_1,X_2)&=H(X_1,X_3)=H(X_2,X_3)  \label{c102} \\
		&=2a+b ,\label{c103} \\
		H(X_1,X_2,X_3)&=3a+b .\label{c104} 
	\end{align}
	By \eqref{c107}-\eqref{c104}, we have
	\begin{align}
		a&=\sum_{j=1}^t p_j\log n^{(j)}, \label{c105}  \\
		b&=H(p_1,...,p_t).   \label{c106}
	\end{align}
%	When $a=\log k$ for some integer $k$, let $n^{(j)}=k$ for $j=1,...,t$. Since $t$ is  infinity, then $b$ can take any nonnegative vaule. However, if $a\neq\log k$ for any integer $k$, let  $n^{(j)}=\lfloor 2^a\rfloor$ for $j=1,...,t-1$ and $n^{(t)}=\lfloor 2^a\rfloor+l$. Then we conclude that one of the probability mass is fixed, denoted by $p$,
%	\begin{equation}
%		p=\dfrac{a-\log{\lfloor 2^a\rfloor}}{\log{(\lfloor 2^a\rfloor+l)}-\log{\lfloor 2^a\rfloor}}>0. \label{c108}
%	\end{equation}
%	Note that $l$ is sufficiently large, $b$  can take any  positive vaule.
	
	By Lemma \ref{lem} and the fact that $a=\log v$ on the ray $U_{3,4}$, and the whole ray $U_{1,4}$ are entropic, all $\h=(a,b)\in F$ are entropic  when $a=\log v $ for positive integer $v$, and $b\geq 0$.
    In the following, we only need to show that for all $\h= (a,b)\in F$ are
    entropic, when $a>0,b>0$ with $a\neq \log v$  for some integer $v\ge 1$. 
	
%	Consider $a=\log{k}$ for some positive integer $k$
%	and $b>0$. Let  $p_1,...,p_t$  be positive value  satisfying $H(p_1,...,p_t)=b$. 
%	Now assume in each connected component $C_j$ of $G$, $\mathcal{X}_{i}^{(j)}$ has $k$ element,
%	that is, $n^{(j)}=k$ for all $j\in N_t$. Let $X^{(j)}_{1}$, $X ^{(j)}_{2}$ and  $X ^{(j)}_{3}$ be mutually independent and
%	uniformly distributed on $\{(j-1)\times k+s: s=0,1,\dots,k-1\}$, and $X^
%	{(j)}_{4}=(j-1)\times k+(X^ {(j)}_{1}+X^ {(j)}_{2}+X^ {(j)}_{3} \mod k)$.
%	Let $X_i,i\in N_4$ be distributed on $\{0,\dots,tk-1\}$ such that $p_{X_i}((j-1)\times k+s)=p_jp_{X_i^{(j)}}((j-1)\times k+s)$.
%	It can be checked that the entropy function of such
%	constructed $(X_i, i\in N_4)$ is $(a,b)$.

Consider $a\neq\log{v},b>0$. Let $c=\lfloor
2^a\rfloor$. Let $n^{(j)}=c$ for all $j\in N_{t-1}$, $n^{(t)}=c+l$ for
some sufficiently large $l$ and  $0<p_i< 1$, $i=1,2,\dots$, \rv{$t$} be positive values satisfying $H(p_1,...,p_t)=b$ such that 
	\begin{align}
		p_t=\dfrac{a-\log{\lfloor 2^a\rfloor}}{\log{(\lfloor 2^a\rfloor+l)}-\log{\lfloor 2^a\rfloor}}. \label{c903}
	\end{align}
With such construction, we can check  \eqref{c105} is stasified with
\begin{align}
	\sum_{j=1}^t p_j\log n^{(j)}&=\sum_{j=1}^{t-1} p_j\log c +  p_t\log{(c+l)}\\
	&=(1-p_t)\log c+ p_t\log{(c+l)}\\&=a.
\end{align}
%	Let 
%	$X^{(j)}_{1}$ ,$X^{(j)}_{2}$ and $X^{(j)}_{3}$ be mutually independent and uniformly distributed on
%	$\{(j-1)\times c+s: s=0,1,\dots,c-1\}$, and $X^
%	{(j)}_{4}=(j-1)\times c+(X^ {(j)}_{1}+X^ {(j)}_{2}+X^ {(j)}_{3} \mod c)$  for $j\in N_{t-1}$.
%	 Let $X^{(t)}_{1}$, $X^{(t)}_{2}$ and $X^{(t)}_{3}$ be mutually independent
%	and uniformly distributed on $\{(t-1)c+s:s=0,1,\dots c+l-1\}$ and 
%	$X^{(t)}_{4}=(t-1)c+(X^{(t)}_{1}+X^{(t)}_{2}+X^{(t)}_{3} \mod c+l$). Let $X_i,i\in N_4$ be distributed on $\{0,\dots,tc+l-1\}$ such that 
%	\begin{align}
%			p_{X_{N_4}}(x_1,x_2,x_3,x_4)=p_{j}p_{X^{j}_1,X^{j}_2,X^{j}_3,X^{j}_4}(x_1,x_2,x_3,x_4)\text{ for } j\in N_{t}.
%	\end{align}
	Let 
$X^{(j)}_{1}$ ,$X^{(j)}_{2}$ and $X^{(j)}_{3}$ be mutually independent and uniformly distributed on 
$\mathbb{I}_c$ , and $X^
{(j)}_{4}=X^ {(j)}_{1}+X^ {(j)}_{2}+X^ {(j)}_{3} \mod c$  for $j\in N_{t-1}$.
Let $X^{(t)}_{1}$, $X^{(t)}_{2}$ and $X^{(t)}_{3}$ be mutually independent
and uniformly distributed on $\mathbb{I}_{c+l}$ and 
$X^{(t)}_{4}=X^{(t)}_{1}+X^{(t)}_{2}+X^{(t)}_{3} \mod (c+l)$. Assume $X^{(j)}_{i}$ is distributed on $\mathcal{X}^{(j)}_{i}$ for $i\in N_4$ and $j\in N_t$. Let $X_i,i\in N_4$ be distributed on $\mathbb{I}_{tc+l}$ such that 
\begin{align}
	&p_{X_{N_4}}(x_i+(j-1)c, i\in N_4) \nonumber\\=&p_{j}p_{X^{(j)}_1,X^{(j)}_2,X^{(j)}_3,X^{(j)}_4}(x_1,x_2,x_3,x_4)
\end{align}
for $ j\in N_{t}$ and $(x_1,x_2,x_3,x_4)\in\mathcal{X}^{(j)}_{N_4}$.
 It can be checked that the entropy function of such
	constructed $(X_i,i\in N_4)$ is $(a,b)$.
	Hence, all $\h=(a,b)\in F$ are entropic when
	$a>0,b>0$. 
	\rem{Note that $a=\log k$ on the ray $U_{3,4}$ and the whole ray $U_{1,4}$ are entropic.}	 The proof is accomplished.
\end{proof}
Theorem \ref{rk12} can be considered as a corollary of \cite[Lemma 4]{matuvs2007},
which is restated as \cite[Lemma 6]{csirmaz2025exp}.

\subsection{Entropy functions on the faces involving $\hat{U}_{3,5}^{1}$}

\label{G}
\begin{theorem}
	\label{rk22}
	For   $F= (\hat{U}_{3,5}^{1},U_{1,3}^{234})$ , $\h=(a,b)\in F$ is entropic if and only if $a=\log v $ for integer $v\ge 1$.
\end{theorem}
 
\begin{proof}
	If $\mathbf{ h} \in  F$ is entropic, its characterizing random vector
	$(X_i,i\in N_4)$ satisfies the following information equalities,
	\begin{align}
	H(  X_{N_4})&=H(X_{N_4-i})  ,\ i\in N_4, \nonumber \\
	H(X_{1i })&=H(X_{1})+H(X_{i}), i\in\{2,3,4\}, \nonumber \\
	H(X_{  ik })+H(X_{jk })&=H(X_{ k})+H(X_{  ijk} ),\{i,j,k\} =\{2,3,4\},\nonumber \\
	H(X_{i\cup K})+H(X_{  j\cup K})&=H(X_{ K})+H(X_{  ij
		\cup K }), | K|=2,\nonumber\\ &\qquad\qquad\qquad\qquad\{1\} \subseteq K\subseteq N_4.   \nonumber
\end{align}
  For $ (x_i,i\in N_4) \in
\mathcal{X}_{N_4}$ with $p(x_{1234})>0$, above information equalities
imply that the probability mass function satisfies
	\begin{align}
		p(x_{1},x_{2},x_{3},x_{4})&=p(x_{1},x_{2},x_{3}) \label{c293}\\
		&=p(x_{1},x_{2},x_{4}) \label{c294} \\ 
		&=p(x_{1},x_{3},x_{4}) \label{c295}\\
		&= p(x_{2},x_{3},x_{4}), \label{c296}\\
		p(x_1,x_2)&=p(x_1)p(x_2), \label{c297}\\
		p(x_1,x_3)&=p(x_1)p(x_3), \label{c298}\\
		p(x_1,x_4)&=p(x_1)p(x_4), \label{c299}\\
		p(x_2,x_3)p(x_2,x_4)&=p(x_2)p(x_2,x_3,x_4), \label{c400}\\
		p(x_2,x_3)p(x_3,x_4)&=p(x_3)p(x_2,x_3,x_4), \label{c401}\\
		p(x_2,x_4)p(x_3,x_4)&=p(x_4)p(x_2,x_3,x_4), \label{c402}\\	p(x_{1},x_{2},x_{3})p(x_{1},x_{2},x_{4})&=p(x_{1},x_{2})p(x_{1234}),\label{c403}\\
		p(x_{1},x_{2},x_{3})p(x_{1},x_{3},x_{4})&=p(x_{1},x_{3})p(x_{1234}),\label{c404}\\
		p(x_{1},x_{2},x_{4})p(x_{1},x_{3},x_{4})&=p(x_{1},x_{4})p(x_{1234}).\label{c405}
	\end{align}
		Consider the tripartite graph $G=(V,E)$ with $V=\mc{X}_2\cup \mc{X}_3\cup \mc{X}_4$ and   $(x_i,x_j)\in E$  if and only if $p(x_{i},x_{j})>0$, ${i,j}\in \{2,3,4\} ,i\neq j$.  By \eqref{c400}-\eqref{c402} and Lemma \ref{lem4}, each connected component is a complete tripartite graph.
	By \eqref{c293}, canceling $p(x_1,x_2,x_3)$ and $p(x_1,x_2,x_3,x_4)$ on either side of \eqref{c403}, we have
	\begin{align}
		p(x_1,x_2,x_4)=p(x_1,x_2)  \label{c406}.
	\end{align}
	Similarly, by \eqref{c293}, \eqref{c294}, \eqref{c404} and \eqref{c405}, we obtain
	\begin{align}
		p(x_1,x_3,x_4)=p(x_1,x_3) , \label{c407} \\
		p(x_1,x_3,x_4)=p(x_1,x_4) .\label{c408}
	\end{align} 
	In light of \eqref{c293}-\eqref{c299},
		\begin{align}
		&p(x_{1234})=p(x_1.x_2)=p(x_1,x_3)=p(x_1,x_4)  \label{c409}\\
		=&p(x_1)p(x_2)=p(x_1)p(x_3)=p(x_1)p(x_4), \label{c410}
	\end{align}
	which implies $p(x_2)=p(x_3)=p(x_4)$. By Lemma \ref{lem4}, the number of vertices in $\mc{X}_i$, $i=1,2,3$ are the same  and the the probability mass of all of the vertices, the edges and the triangles  are equal, respectively, in each connected component. 
	 Assume $G$ has $t$ connected components, $n^{(j)}$ and $p_j$  denote the number of the vertices and the probability mass of the connected component $C_j$, respectively. Then, the probability mass of the vertices, the edges and the triangles are $\frac{p_j}{n^{(j)}},\frac{p_j}{{(n^{(j)})}^2}$ and $ \frac{p_j}{{(n^{(j)})}^3}$, respectively. Then,
	\begin{align}   
	H(X_2)&=H(X_3)=H(X_4)  \label{c411}\\
	&	=-\sum_{i=1}^t n^{(j)} \dfrac{p_j}{n^{(j)}} \log{\dfrac{p_j}{n^{(j)}}} \label{c412}\\
	&   =H(p_1,p_2,...,p_t)+ \sum_{i=1}^t p_j \log{n^{(j)}},\label{c413}\\
	H(X_2,X_3)&=H(X_2,X_4)=H(X_3,X_4)\label{c414}\\
	&	=-\sum_{i=1}^t {(n^{(j)})}^2\dfrac{p_j}{{(n^{(j)})}^2} \log{\dfrac{p_j}{{(n^{(j)})}^2}} \label{c415}\\
	&   =H(p_1,p_2,...,p_t)+ 2\sum_{i=1}^t p_j \log{n^{(j)}},\label{c416}
\end{align}
\begin{align}
	H(X_2&,X_3,X_4)=-\sum_{i=1}^t {(n^{(j)})}^3\dfrac{p_j}{{(n^{(j)})}^3} \log{\dfrac{p_j}{{(n^{(j)})}^3}} \nonumber\\  &=H(p_1,p_2,...,p_t)+ 3\sum_{i=1}^t p_j \log{n^{(j)}}.\label{c418}
\end{align}
	As $\h\in F$, $(X_i, i\in N_4)$ is
	its characterizing random vector, we have 
	\begin{align}
		H(X_2)&=H(X_3)=	H(X_4) \label{c419} \\
		&=a+b, \label{c420} \\
		H(X_2,X_3)&=H(X_2,X_4)=H(X_3,X_4)  \label{c421} \\
		&=2a+b ,\label{c422} \\
		H(X_2,X_3,X_4)&=3a+b .\label{c423} 
	\end{align}
	By\eqref{c411}-\eqref{c423}, we have
	\begin{align}
		a&=\sum_{j=1}^t p_j\log n^{(j)}, \label{c424}  \\
		b&=H(p_1,...,p_t).  \label{c425}
	\end{align}
By \eqref{c296}, $p(x_2,x_3,x_4)=p(x_1,x_2,x_3,x_4)$, and so each triangle $(x_2,x_3,x_4)$ can be colored by a unique $x_1\in\mc{X}_1$.  For any $x_2\in\mc{X}_2$ in $C_j$ , there are  ${(n^{(j)})}^2$ triangles containing the vertex $x_2$
based on the fact that each  connected component is a complete tripartite graph. By \eqref{c409}, $p(x_1,x_2,x_3,x_4)=p(x_1,x_2)$,  which implies that these ${(n^{(j)})}^2$ triangles are colored differently. Note that $X_1$ and $X_2$ are independent by \eqref{c297}, we have
%Consider transforming the connected components $C_j$ of $G$ into  an $n^{(j)} \times n^{(j)} \times n^{(j)}$ cube in the three-dimensional coordinate system XYZ, where each  triangle $(x_2-x_3-x_4)$ in the connected components corresponds to a point $(x_2, x_3, x_4)$ in the coordinate system.
%	 By $p(x_2,x_3,x_4)=p(x_1,x_2,x_3,x_4)$, each point  can be colored by a unique $x_1\in\mc{X}_1$. By \eqref{c409},
%	 \begin{align}
%	 	H(X_3,X_4|X_1,X_2)=H(X_2,X_4|X_1,X_3)=H(X_2,X_3|X_1,X_4)=0,
%	 \end{align} 
% which implies that the points in each YZ-plane, XZ-plane, and XY-plane are  colored differently, respectively. Since $X_1$ is independent of $X_2,X_3$ and $X_4$, $|\mc{X}_1|$ is equal to the number of the points in each   YZ-plane, XZ-plane, and XY-plane, respectively, and so
	\begin{align}
		| \mc{X}_1|={(n^{(j)})}^2. \label{c426}
	\end{align}
 Note that it holds for all connected components of $G$, we obtain
	\begin{align}
		| \mc{X}_1|={(n^{(1)})}^2={(n^{(2)})}^2=...={(n^{(t)})}^2,\label{c427}
	\end{align} 
which implies that 
\begin{equation}
	n^{(1)}=n^{(2)}=\dots=n^{(t)}
\end{equation}
and so they can be simplified to $v$. Then by \eqref{c424}, 
	\begin{align}
		a&=\sum_{j=1}^t p_j\log v =\log v\sum_{j=1}^t p_j=\log v.\label{c428}
	\end{align}
	Hence $a$ can just take  the value of $\log v$, where $v$ is a positive integer.
	
%	\begin{figure}[htbp]
%		\centering
%		\begin{tikzpicture}
%			% 三维坐标轴
%			\draw[->] (0,0,0) -- (3,0,0) node[right] {$Y$};
%			\draw[->] (0,0,0) -- (0,3,0) node[above] {$Z$};
%			\draw[->] (0,0,0) -- (0,0,3) node[below left] {$X$};
%			
%			% 正方体的顶点坐标
%			\coordinate (A) at (0,0,0);
%			\coordinate (B) at (2,0,0);
%			\coordinate (C) at (2,2,0);
%			\coordinate (D) at (0,2,0);
%			\coordinate (E) at (0,0,2);
%			\coordinate (F) at (2,0,2);
%			\coordinate (G) at (2,2,2);
%			\coordinate (H) at (0,2,2);
%			\fill[red] (A) circle (4pt);
%			\fill[green] (B) circle (4pt);
%			\fill[blue] (C) circle (4pt);
%			\fill[pink] (D) circle (4pt);
%			\fill[red] (G) circle (4pt);
%			\fill[pink] (F) circle (4pt);
%			\fill[green] (H) circle (4pt);
%			\fill[blue] (E) circle (4pt);
%			
%			% 正方体的边
%			\draw (A) -- (B) -- (C) -- (D)--(A);
%			\draw (A) -- (D) -- (H) -- (E)--(A);
%			\draw (A) -- (B) -- (C) -- (D)--(A);
%			\draw (E) -- (F) -- (G) -- (H)--(E);
%			\draw (G) -- (C) ;
%			\draw (B) -- (F) ;
%		\end{tikzpicture}
%		\caption{a $2\times2\times2 $ cube}
%	\end{figure}
	As for  ``if'' part of the theorem, it can be proved by Lemma \ref{lem}  and the fact that $a=\log{v}$ on the ray $\hat{U}^1_{3,5}$ and
	whole ray $U_{1,3}^{234}$ are entropic.
\end{proof}

\begin{theorem}
	\label{rk23}
	For $F= (\hat{U}_{3,5}^{1},U_{2,3}^{123})$, $\mathbf{h} =(a,b)\in
	F $ is entropic if and only if $a= \log{v_{1}}, b= \log{v_{2}}  $
	for positive integer $v_{1},v_{2}$.
\end{theorem}
 
\begin{proof}
	If $\mathbf{ h} \in  F$ is entropic, its characterizing random vector
	$(X_i,i\in N_4)$ satisfies the following information equalities,
	\begin{align}
	H(  X_{N_4})&=H(X_{N_4-i})  ,\ i\in N_4, \nonumber \\
	H(X_{ij })&=H(X_{i})+H(X_{j}), i<j,i,j\in N_4, \nonumber \\
	H(X_{  ik })+H(X_{jk })&=H(X_{ k})+H(X_{  i,j,k} ),\nonumber\\  &\qquad\qquad\qquad \quad\{i,j,k\} =\{2,3,4\},\nonumber \\
	H(X_{i\cup K})+H(X_{  j\cup K})&=H(X_{ K})+H(X_{  ij
		\cup K }), \nonumber\\&\qquad \qquad \qquad\quad K=\{1,2\},\{1,3\}.   \nonumber
\end{align}
 For $ (x_i,i\in N_4) \in
\mathcal{X}_{N_4}$ with $p(x_{1234})>0$, above information equalities
imply that the probability mass function satisfies
	\begin{align}
		p(x_{1},x_{2},x_{3},x_{4})&=p(x_{1},x_{2},x_{3}) \label{c429}\\
		&=p(x_{1},x_{2},x_{4}) \label{c430} \\ 
		&=p(x_{1},x_{3},x_{4}) \label{c431}\\
		&= p(x_{2},x_{3},x_{4}), \label{c432}\\
		p(x_1,x_2)&=p(x_1)p(x_2), \label{c433}\\
		p(x_1,x_3)&=p(x_1)p(x_3), \label{c434}\\
		p(x_1,x_4)&=p(x_1)p(x_4), \label{c435}\\
		p(x_2,x_3)&=p(x_2)p(x_3), \label{c436}\\
		p(x_2,x_4)&=p(x_2)p(x_4), \label{c437}\\
		p(x_3,x_4)&=p(x_3)p(x_4), \label{c438}\\
		p(x_2,x_3)p(x_2,x_4)&=p(x_2)p(x_2,x_3,x_4), \label{c439}\\
		p(x_2,x_3)p(x_3,x_4)&=p(x_3)p(x_2,x_3,x_4), \label{c440}\\
		p(x_2,x_4)p(x_3,x_4)&=p(x_4)p(x_2,x_3,x_4), \label{c441}\\	p(x_{1},x_{2},x_{3})p(x_{1},x_{2},x_{4})&=p(x_{1},x_{2})p(x_{1234}),\label{c442}\\
		p(x_{1},x_{2},x_{3})p(x_{1},x_{3},x_{4})&=p(x_{1},x_{3})p(x_{1234}).\label{c443}
	\end{align}
	By \eqref{c429}-\eqref{c432}, \eqref{c442} and \eqref{c443}, we have
	\begin{align}
		p(x_1,x_2,x_3,x_4)=p(x_1,x_2)=p(x_1,x_3).  \label{c444}
	\end{align}
	As $X_1$ is independent of  $X_2$ and $X_3$,
	\begin{align}
		p(x_1,x_2,x_3,x_4)=p(x_1)p(x_2)=p(x_1)p(x_3),  \label{c445}
	\end{align}
	which implies $p(x_2)=p(x_3)$. Since $X_2$ and $X_3$ are independent by \eqref{c436}, with Lemma \ref{lem1}, $X_2$ and $X_3$ are uniformly distributed on $	\mathcal{X}_{2} $ and 	$\mathcal{X}_{3}$ and 
	\begin{align}
		H(X_2)=H(X_3)=\log v, \label{c1011}
	\end{align}
 where $v=|\mathcal{X}_{2}|=|\mathcal{X}_{3}|$. Note that $X_2$ is independent of  $X_3$ and $X_4$, replacing  $p(x_2,x_3)$ and $p(x_2,x_4)$ by $p(x_2)p(x_3)$ and $p(x_2)p(x_4)$ in \eqref{c439}, respectively, we have
	\begin{align}
		p(x_2,x_3,x_4)=p(x_2)p(x_3)p(x_4).  \label{c446}
	\end{align}
	Since $p(x_1,x_2,x_3,x_4)=p(x_2,x_3,x_4)$ by \eqref{c432}, together with \eqref{c445} and \eqref{c446}, 
	\begin{align}
		p(x_1)=p(x_3)p(x_4)=p(x_2)p(x_4).  \label{c447}
	\end{align}
	Note that $p(x_2)=p(x_3)=\frac{1}{v}$, we obtain
	\begin{equation}
		p(x_1)=\dfrac{1}{v} p(x_4) . \label{c448}
	\end{equation}
	Since $X_1$ and $X_4$ are independent, for any $x_4'\in \mc{X}_4$ with $x_4'\neq x_4$, 
	\begin{equation}
		p(x_1,x_4')=p(x_1)p(x_4'). \label{c449}
	\end{equation}
	As $p(x_1,x_4')=\sum_{x_2,x_3}p(x_1,x_2,x_3,x_4')$, there exists $x_2'\in\mc{X}_2, x_3'\in \mc{X}_3$ such that $p(x_1,x_2',x_3',x_4')>0$. By the same argument, we have
	\begin{align}
		p(x_1)=\dfrac{1}{v} p(x_4') . \label{c450}
	\end{align}
	Equating \eqref{c448} and \eqref{c450}, 
	\begin{equation}
		p(x_4)=p(x_4'),
	\end{equation}
	which implies that $X_4$ is  uniformly distributed on $	\mathcal{X}_{4}$. Similarly, $X_1$ is  uniformly distributed on $	\mathcal{X}_{1}$. Assume $| X_4 | =v_1$, we conclude that
	\begin{align}
		 H(X_1)=\log vv_1,H(X_4)=\log v_1. \label{c1010}
	\end{align} 
	For any $x_1 \in \mathcal{X}_{1}, x_4 \in \mathcal{X}_{4} $, due to the independence of $X_1$ and $X_4$,
	\begin{align}
		p(x_1,x_4)=p(x_1)p(x_4)=\dfrac{1}{vv_1}\times\dfrac{1}{v_1}=\dfrac{1}{vv^2_1}.   \label{c451}
	\end{align}
	As
	\begin{align}
		p(x_1,x_4)&=\sum_{x_2,x_3}p(x_{1234})=\sum_{x_2,x_3}p(x_2,x_3,x_4)   \label{c452} \\
		&=\sum_{x_2,x_3}p(x_2)p(x_3)p(x_4)=\dfrac{v_2}{v^2v_1},  \label{c453}
	\end{align}
	where $v_2=| \{(x_2,x_3)|\quad p(x_1,x_2,x_3,x_4)>0\} |$, we conclude that $v=v_1v_2$.
	As $\h\in F$ and  $(X_i, i\in N_4)$ is its characterizing random vector, we have 
	\begin{align}
		H(X_1)&=2a+b,   \label{c454}\\
		H(X_2)&=H(X_3)=a+b, \label{c455} \\
		H(X_4)&=a.  \label{c456}
	\end{align}
Together with \eqref{c1011} and \eqref{c1010}, we obtain
	\begin{align}
		a=\log v_1,  b=\log \frac{v}{v_1}=\log v_2, \label{c460}
	\end{align}
	which forms an outer bound on the entropy region on $F$, that is, $a$  and $b$ must take the value of $\log v_1$ and $\log v_2$, respectively, where   $v_1$ and $v_2$ are positive integers.
	
	 The \rv{``if''} part of the theorem can be immediately proved using Lemma \ref{lem}, given the fact that $a = \log{v_1}$ in $\hat{U}^1_{3,5}$, and $b = \log{v_2}$ in $U_{2,3}^{123}$, both of which are entropic. 
\end{proof}

\begin{theorem}
	\label{rk24}
	For $F= (\hat{U}_{3,5}^{1},\mc{W}_{2}^{23})$, $\mathbf{h} =(a,b)\in
	F $ is entropic if and only if  $a+b$= $\log{v}$ for integer $v\ge 1$.
\end{theorem}
  \begin{proof}
	If $\mathbf{ h} \in  F$ is entropic, its characterizing random vector
	$(X_i,i\in N_4)$ satisfies the following information equalities,
	\begin{align}
		H(  X_{N_4})&=H(X_{N_4-i})  ,\ i\in N_4, \nonumber \\
		H(X_{ij })&=H(X_{i})+H(X_{j}),  \{i,j\}\neq \{2,3\} \nonumber \\
		H(X_{  23 })+H(X_{k4})&=H(X_{ k})+H(X_{ 2,3,4} ), k\in\{2,3\} \nonumber \\
		H(X_{i\cup K})+H(X_{  j\cup K})&=H(X_{ K})+H(X_{  ij
			\cup K }),|K|=2, \{1\}\subseteq K.  \nonumber
	\end{align}
  For $ (x_i,i\in N_4) \in
\mathcal{X}_{N_4}$ with $p(x_{1234})>0$, above information equalities
imply that the probability mass function satisfies
	\begin{align}
		p(x_{1},x_{2},x_{3},x_{4})&=p(x_{1},x_{2},x_{3}) \label{c461}\\
		&=p(x_{1},x_{2},x_{4}) \label{c462} \\ 
		&=p(x_{1},x_{3},x_{4}) \label{c463}\\
		&= p(x_{2},x_{3},x_{4}), \label{c464}\\
		p(x_1,x_2)&=p(x_1)p(x_2), \label{c465}\\
		p(x_1,x_3)&=p(x_1)p(x_3), \label{c466}\\
		p(x_1,x_4)&=p(x_1)p(x_4), \label{c467}\\
		p(x_2,x_4)&=p(x_2)p(x_4), \label{c468}\\
		p(x_3,x_4)&=p(x_3)p(x_4), \label{c469}\\
		p(x_2,x_3)p(x_2,x_4)&=p(x_2)p(x_2,x_3,x_4), \label{c470}\\
		p(x_2,x_3)p(x_3,x_4)&=p(x_3)p(x_2,x_3,x_4), \label{c471}\\
		p(x_{1},x_{2},x_{3})p(x_{1},x_{2},x_{4})&=p(x_{1},x_{2})p(x_{1234}),\label{c472}\\
		p(x_{1},x_{2},x_{3})p(x_{1},x_{3},x_{4})&=p(x_{1},x_{3})p(x_{1234}),\label{c473}\\
		p(x_{1},x_{2},x_{4})p(x_{1},x_{3},x_{4})&=p(x_{1},x_{4})p(x_{1234}).\label{c474}
	\end{align}
	By \eqref{c461}-\eqref{c464} and \eqref{c472}-\eqref{c474}, we have 
	\begin{align}
		p(x_{1234})=p(x_1,x_2)=p(x_1,x_3)=p(x_1,x_4).\label{c475}
	\end{align}
	Since $X_1$ is independent of $X_2,X_3$ and $X_4$ by \eqref{c465}-\eqref{c467}, together with\eqref{c475}
	\begin{align}
		p(x_1)p(x_2)=p(x_1)p(x_3)=p(x_1)p(x_4),\label{c476} 
	\end{align}
	which implies $p(x_2)=p(x_3)=p(x_4)$. Note that  $X_4$ is independent of $X_2$ and $ X_3$ by \eqref{c468} and \eqref{c469}, by Lemma \ref{lem1}, $X_i$ is uniformly distributed on $\mathcal{X}_{i}$  and $H(X_i)=\log v$ for $i=2,3,4$, where $v=|\mathcal{X}_{2}|=|\mathcal{X}_{3}|=|\mathcal{X}_{4}|$.
	As $\h\in F$ and $(X_i, i\in N_4)$ is
	its characterizing random vector, we have 
	\begin{align}
		H(X_2)=H(X_3)=H(X_4)=a+b. \label{c477}
	\end{align}
	Hence $a+b$ must take the value of  $\log v$.
	
	As for the ``if'' part of the theorem, we show that all the polymatroids $(a,b)$ satisfying $a+b=\log v$ are entropic. Let $p_1,p_2,\dots,p_v$ be positive value satisfying \rv{$H(p_1,p_2,\dots,p_v)=a$}.  Let $X_1$ be distributed on \rv{$\mathbb{I}_{v^2}$} such that 
	\begin{align}
		p_{X_1}(i)= \dfrac{ p_{\lceil \frac{i+1}{v}\rceil}}{v}, \text{ for } i=0,1,...,v^2-1. \label{c478}
	\end{align} 
	Let $X_2$ be independent of $X_1$ and uniformly distributed on $\mathbb{I}_{v}$.
	Let  
	\begin{align}
		X_3&=(\lfloor\frac{X_1}{v}\rfloor +X_2) \mod v, \label{c479}\\
		X_4&=(X_1\mod v +X_2) \mod v.\label{c480}
	\end{align}
	Then, it can be checked that the entropy function of such constructed $(X_i,i\in N_4)$ is $\h=(a,b)$.
	%
	%\begin{example}
	%	For any $\h=(a,b)$ satisfing $a+b=\log2$,
	%	 \begin{table}[htbp]
		%		\renewcommand{\arraystretch}{2}
		%		\centering
		%		\caption{T}
		%		\label{table:3} 
		%		\begin{tabular}{|c|c|c|} 		   
			%		\hline  \diagbox[width=7em]{$(X_2,X_3)$}{$X_1$}&0 & 1   \\
			%		\hline (0,0)& 0&1\\
			%		\hline (1,1)& 1 & 0 \\
			%		\hline
			%	\end{tabular}  
		%		\hspace{0.25cm}
		%		\begin{tabular}{|c|c|c|} 		   
			%		\hline  \diagbox[width=7em]{$(X_2,X_3)$}{$X_1$}&2 & 3   \\
			%		\hline (0,1)& 0&1\\
			%		\hline (1,0)& 1&0 \\
			%		\hline
			%	\end{tabular} 
		%	\end{table} 
	%	
	%\end{example}
\end{proof}

\subsection{Non-entropic faces}
\label{J}
\begin{theorem}
	\label{rk31}
	For $F=(V^{12}_8,E)$ with $E=U^{1}_{1,1},U^{3}_{1,1},U^{13}_{1,2},U^{134}_{1,3},U_{1,4},U^{123}_{2,3}$  or $U_{3,4}$, any $\h=(a,b)\in F$ is non-entropic if $a$ and $b$ are both positive.
\end{theorem}

\begin{proof}
 It can be immediately proven using \cite[Thm. 11]{dougherty2011nonshannon} that the extreme $1$-dimensional segment involving the top of the pyramid $(V^\alpha_8)$ is not  in $\overline{\Gamma_4^*}$.
\end{proof}

\bibliographystyle{IEEEtran0}
\bibliography{reference}

% Generated by IEEEtran.bst, version: 1.14 (2015/08/26)
\begin{thebibliography}{10}
\providecommand{\url}[1]{#1}
\csname url@samestyle\endcsname
\providecommand{\newblock}{\relax}
\providecommand{\bibinfo}[2]{#2}
\providecommand{\BIBentrySTDinterwordspacing}{\spaceskip=0pt\relax}
\providecommand{\BIBentryALTinterwordstretchfactor}{4}
\providecommand{\BIBentryALTinterwordspacing}{\spaceskip=\fontdimen2\font plus
\BIBentryALTinterwordstretchfactor\fontdimen3\font minus
  \fontdimen4\font\relax}
\providecommand{\BIBforeignlanguage}[2]{{%
\expandafter\ifx\csname l@#1\endcsname\relax
\typeout{** WARNING: IEEEtran.bst: No hyphenation pattern has been}%
\typeout{** loaded for the language `#1'. Using the pattern for}%
\typeout{** the default language instead.}%
\else
\language=\csname l@#1\endcsname
\fi
#2}}
\providecommand{\BIBdecl}{\relax}
\BIBdecl

\bibitem{LC2023isit}
S.~Liu and Q.~Chen, ``Entropy functions on two-dimensional faces of
  polymatroidal region of degree four,'' in \emph{2023 IEEE International
  Symposium on Information Theory (ISIT)}, 2023, pp. 78--83.

\bibitem{yeung2008information}
R.~W. Yeung, \emph{Information theory and network coding}.\hskip 1em plus 0.5em
  minus 0.4em\relax Springer Science \& Business Media, 2008.

\bibitem{ZY98}
Z.~Zhang and R.~Yeung, ``On characterization of entropy function via
  information inequalities,'' \emph{IEEE Transactions on Information Theory},
  vol.~44, no.~4, pp. 1440--1452, 1998.

\bibitem{zhang2002new}
Z.~Zhang, ``On a new non-shannon-type information inequality,''
  \emph{Communications in Information and Systems}, vol.~3, pp. 47--60, 2003.

\bibitem{makarychev2002new}
K.~Makarychev, Y.~Makarychev, A.~Romashchenko, and N.~Vereshchagin, ``A new
  class of non-shannon-type inequalities for entropies,'' \emph{Communications
  in Information and Systems}, vol.~2, no.~2, pp. 147--166, 2002.

\bibitem{dougherty2011nonshannon}
\BIBentryALTinterwordspacing
R.~Dougherty, C.~Freiling, and K.~Zeger, ``Non-shannon information inequalities
  in four random variables,'' 2011. [Online]. Available:
  \url{\rv{arXiv:1104.3602}}
\BIBentrySTDinterwordspacing

\bibitem{CCB21}
Q.~Chen, M.~Cheng, and B.~Bai, ``Matroidal entropy functions: a quartet of
  theories of information, matroid, design and coding,'' \emph{Entropy},
  vol.~23, no.~3, pp. 1--11, 2021.

\bibitem{CCB22}
{Q. Chen, M. Cheng, and B. Bai}, ``Matroidal entropy functions: Constructions,
  characterizations and representations,'' in \emph{IEEE Int.Symp. Info.
  Theory}, Espoo, Finland June 2022.

\bibitem{CCB24}
Q.~Chen, M.~Cheng, and B.~Bai, ``Matroidal entropy functions: Constructions,
  characterizations and representations,'' \emph{IEEE Transactions on
  Information Theory}, pp. 1--1, 2024.

\bibitem{matus2005piecewise}
F.~Mat{\'u}{\v{s}}, ``Piecewise linear conditional information inequality,''
  \emph{IEEE Transactions on Information Theory}, vol.~52, no.~1, pp. 236--238,
  2005.

\bibitem{chen2012characterizing}
Q.~Chen and R.~W. Yeung, ``Characterizing the entropy function region via
  extreme rays,'' in \emph{IEEE Information Theory Workshop}, Lausanne,
  Switzerland Sep. 2012.

\bibitem{HCG12}
S.-W. Ho, T.~Chan, and A.~Grant, ``Non-entropic inequalities from information
  constraints,'' in \emph{2012 IEEE International Symposium on Information
  Theory Proceedings}, 2012, pp. 1256--1260.

\bibitem{11003168}
S.~Thakor and H.~Tiwari, ``Distribution construction approaches for constrained
  entropy vectors and converse results,'' \emph{IEEE Transactions on
  Communications}, pp. 1--1, 2025.

\bibitem{matuvs1995a}
F.~Mat{\'u}{\v{s}} and M.~Studen{\`y}, ``Conditional independences among four
  random variables i,'' \emph{Combinatorics, Probability and Computing},
  vol.~4, no.~3, pp. 269--278, 1995.

\bibitem{matuvs1995b}
F.~Mat{\'u}{\v{s}}, ``Conditional independences among four random variables
  ii,'' \emph{Combinatorics, Probability and Computing}, vol.~4, no.~4, pp.
  407--417, 1995.

\bibitem{matu1999conditional}
F.~Mat{\'u}, ``Conditional independences among four random variables iii: Final
  conclusion,'' \emph{Combinatorics, Probability and Computing}, vol.~8, no.~3,
  pp. 269--276, 1999.

\bibitem{yan2012implicit}
X.~Yan, R.~W. Yeung, and Z.~Zhang, ``An implicit characterization of the
  achievable rate region for acyclic multisource multisink network coding,''
  \emph{IEEE Transactions on Information Theory}, vol.~58, no.~9, pp.
  5625--5639, 2012.

\bibitem{beimel2011secret}
A.~Beimel, ``Secret-sharing schemes: A survey,'' in \emph{International
  conference on coding and cryptology}.\hskip 1em plus 0.5em minus 0.4em\relax
  Springer, 2011, pp. 11--46.

\bibitem{tian2014characterizing}
C.~Tian, ``Characterizing the rate region of the (4, 3, 3) exact-repair
  regenerating codes,'' \emph{IEEE Journal on Selected Areas in
  Communications}, vol.~32, no.~5, pp. 967--975, 2014.

\bibitem{tian2018symmetry}
\BIBentryALTinterwordspacing
{C. Tian}, ``Symmetry, outer bounds, and code constructions: A computer-aided
  investigation on the fundamental limits of caching,'' \emph{Entropy},
  vol.~20, no.~8, 2018. [Online]. Available:
  \url{https://www.mdpi.com/1099-4300/20/8/603}
\BIBentrySTDinterwordspacing

\bibitem{yeung2018information}
R.~W. Yeung, A.~Al-Bashabsheh, C.~Chen, Q.~Chen, and P.~Moulin, ``On
  information-theoretic characterizations of markov random fields and
  subfields,'' \emph{IEEE Transactions on Information Theory}, vol.~65, no.~3,
  pp. 1493--1511, 2018.

\bibitem{Dan2023}
D.~Suciu, ``Applications of information inequalities to database theory
  problems,'' in \emph{2023 38th Annual ACM/IEEE Symposium on Logic in Computer
  Science (LICS)}, 2023, pp. 1--30.

\bibitem{PSITIP}
\BIBentryALTinterwordspacing
C.~T. Li. (2021) Psitip. [Online]. Available:
  \url{https://github.com/cheuktingli/psitip}
\BIBentrySTDinterwordspacing

\bibitem{YL21}
R.~W. Yeung and C.~T. Li, ``Machine-proving of entropy inequalities,''
  \emph{IEEE BITS the Information Theory Magazine}, vol.~1, no.~1, pp. 12--22,
  2021.

\bibitem{GYG23}
L.~Guo, R.~W. Yeung, and X.-S. Gao, ``Proving information inequalities and
  identities with symbolic computation,'' \emph{IEEE Transactions on
  Information Theory}, vol.~69, no.~8, pp. 4799--4811, 2023.

\bibitem{GYG25}
L.~Guo, R.~W. Yeung, and X.-S. Gao, ``Proving information inequalities by
  gaussian elimination,'' \emph{IEEE Transactions on Information Theory},
  vol.~71, no.~4, pp. 2315--2328, 2025.

\bibitem{LCC2024part2}
{S. Liu, Q. Chen and M. Cheng}, ``Entropy functions on two-dimensional faces of
  polymatroidal region of degree four: Part {II}: Information theoretic
  constraints breed new combinatorial structures,'' \emph{submitted to IEEE
  Transactions on Information Theory}.

\bibitem{oxley2006matroid}
J.~G. Oxley, \emph{Matroid theory}.\hskip 1em plus 0.5em minus 0.4em\relax
  Oxford University Press, Oxford, 2011.

\bibitem{ziegler2012lectures}
G.~M. Ziegler, \emph{Lectures on polytopes}.\hskip 1em plus 0.5em minus
  0.4em\relax Springer Science \& Business Media, 2012.

\bibitem{matuvs1994extreme}
F.~Mat{\'u}{\v{s}}, ``Extreme convex set functions with many nonnegative
  differences,'' \emph{Discrete Mathematics}, vol. 135, no. 1-3, pp. 177--191,
  1994.

\bibitem{nguyen1978semimodular}
H.~Q. Nguyen, ``Semimodular functions and combinatorial geometries,''
  \emph{Transactions of AMS}, vol. 238, pp. 355--383, 1978.

\bibitem{matuvs2007}
F.~Mat{\'u}{\v{s}}, ``Two constructions on limits of entropy functions,''
  \emph{IEEE Transactions on Information Theory}, vol.~53, no.~1, pp. 320--330,
  2007.

\bibitem{chen2015marginal}
Q.~Chen, F.~Cheng, T.~Liu, and R.~W. Yeung, ``A marginal characterization of
  entropy functions for conditional mutually independent random variables (with
  application to wyner's common information),'' in \emph{IEEE Int.Symp. Info.
  Theory}, Hong Kong, China June 2015.

\bibitem{matuvs2016entropy}
F.~Mat{\'u}{\v{s}} and L.~Csirmaz, ``Entropy region and convolution,''
  \emph{IEEE Transactions on Information Theory}, vol.~62, no.~11, pp.
  6007--6018, 2016.

\bibitem{zhang1997non}
Z.~Zhang and R.~W. Yeung, ``A non-shannon-type conditional inequality of
  information quantities,'' \emph{IEEE Transactions on Information Theory},
  vol.~43, no.~6, pp. 1982--1986, 1997.

\bibitem{csirmaz2025exp}
\BIBentryALTinterwordspacing
L.~Csirmaz, ``Exploring the entropic region,'' 2025. [Online]. Available:
  \url{https://arxiv.org/abs/2509.12439}
\BIBentrySTDinterwordspacing

\end{thebibliography}

\end{document}